\title{Contracting with discretionary bonuses}
\author{Guillermo Alonso Alvarez\footnote{Department of Mathematics, University of Michigan. guialv@umich.edu.}, Ibrahim Ekren\footnote{Department of Mathematics, University of Michigan. iekren@umich.edu. I. Ekren is partially supported by the NSF grant DMS-2406240}, Liwei Huang\footnote{Department of Mathematics, University of Michigan. huanglw@umich.edu.}}
\date{\today}
\DeclareMathOperator*{\argmax}{arg\,max}
\newtheorem{theorem}{Theorem}
\newtheorem{lemma}{Lemma}
\newtheorem{assumption}{Assumption}
\newtheorem{remark}{Remark}
\newtheorem{Algorithm}{Algorithm}
\newcommand\EE {\mathbb E}
\newcommand\FF {\mathbb F}
\newcommand\NN {\mathbb N}
\newcommand\RR {\mathbb R}
\newcommand\PP {\mathbb P}
\def\e{\epsilon}
\def\qed{\hskip6pt\vrule height6pt width5pt depth1pt}
\begin{document}
\maketitle
\begin{abstract}
We study a continuous time contracting model in which a principal hires a risk averse agent to manage a project over a finite horizon and provides sequential payments whose timing is endogenously determined. The resulting nonzero-sum interaction between the principal and the agent is reformulated as a mixed control and stopping problem. Using numerical simulations, we investigate how factors such as the relative impatience of the parties and the number of bonus payments influence the principal’s value and the structure of the optimal bonus payment scheme. A notable finding is that, in some contractual environments, the principal optimally offers a sign-on bonus to front-load incentives.
\end{abstract}

\section{Introduction}
Principal–agent theory formalizes how a decision-maker (the principal, she) designs incentives to motivate another party (the agent, he) to act in her best interest when the agent’s actions cannot be directly observed. Such information asymmetry creates a moral hazard, requiring the principal to craft a contract that aligns the agent’s private incentives with her own objectives. From a mathematical standpoint, this interaction can be modeled as a bi-level optimization or a Stackelberg game, where the principal anticipates the agent’s optimal response to any given contract. This framework has found wide-ranging applications—from managerial compensation and executive pay to insurance, regulation, and financial contracting—making it a central tool for studying delegation and incentive alignment under uncertainty.

The continuous-time formulation of the principal–agent problem originates from the work of Holmstrom and Milgrom in \cite{holmstrom1987aggregation}, who show that under exponential utility and when the agent’s effort affects only the drift of output, the optimal contract is linear. Their framework revealed the analytical tractability of continuous-time models and inspired multiple extensions, including those by \cite{schattler1993first}, \cite{muller2000asymptotic}, and \cite{hellwig2002discrete}. A major advance was later achieved by \cite{sannikov2008continuous}, who recast the problem as a stochastic control model with continuous incentives and a terminal lump-sum payment, thereby enabling the use of dynamic programming methods. This approach was subsequently unified and extended by \cite{cvitanic2018dynamic} through a BSDE formulation, establishing a general connection between continuous-time contracting and stochastic control theory. We highlight \cite{possamai2024there}, which provides a detailed discussion and significant extension of Sannikov’s model.

Building on this line of research, subsequent studies have connected both static and dynamic principal–agent frameworks to the design of bonus payments in optimal contracting. Early extensions of Holmström and Milgrom’s model—such as \cite{van2010pay}, \cite{anderson2018agency}, and \cite{xu2016golden}—rationalize signing or front-loaded bonuses as mechanisms to satisfy participation constraints and attract high-ability agents. Later developments following Sannikov \cite{sannikov2008continuous} reinterpret the terminal lump-sum component as a form of deferred or termination bonus, encompassing the golden-parachute structure in \cite{possamai2024there}, the deferred-bonus account in \cite{edmans2012dynamic}, and the analysis of short-term and deferred banker compensation in \\
\cite{jokivuolle2014bankers} and \cite{jokivuolle2015bonus}. Related works—including \cite{hoffmann2021only}, \cite{shan2019incentives}, and \cite{maestri2014efficiency}—show that variations in the timing and structure of bonuses naturally emerge as optimal responses to intertemporal incentive and participation constraints. Nevertheless, most of these studies present bonus payment schemes that are either static (without explicit time dynamics) or exogenously characterized through equilibrium or boundary conditions, without providing explicit analytical formulas for the evolution of bonuses over time.

This paper develops a unifying framework for analyzing bonus payments in continuous time within a finite-horizon contracting environment, endogenizing both the level and timing of bonus schemes. Our model provides a general explanation of how bonus structures evolve in practice as functions of the model’s inputs, offering a more interpretable solution compared to previous formulations. In particular, it captures the emergence of the “golden hello”—a signing bonus optimally offered by the principal—and highlights the role of the relative impatience of the agent and the principal in shaping both the timing and size of payments.

Our paper makes the following contributions. We study a sequential contracting problem in which the contract consists of multiple lump-sum payments scheduled at flexible times chosen by the principal, in a setting that builds on \cite{sannikov2008continuous} and \cite{possamai2024there}. To begin with, we study the first-best benchmark, where the agent directly follows the principal's instructions. We rigorously establish, following the approach of \cite{possamai2024there}, that in a finite contracting horizon, degeneracy never occurs; that is, the principal's value in the first-best problem cannot reach its maximal bound. Next, we study the second-best problem, where the principal cannot directly control the agent's actions, we extend the methods of \cite{cvitanic2018dynamic}  and \cite{alvarez2025sequential}. Specifically, we represent the agent's value through a recursive system of backward stochastic differential equations (BSDEs), thus reducing the principal's bilevel optimization problem to a single stochastic control problem that combines mixed control and stopping with a finite number of impulses. Instead of directly tackling this single-level problem, we apply an iterated optimal stopping technique (see \cite{oksendal2007applied}, \cite{baccarin2013portfolio}, \cite{asri2019finite}) to recursively characterize the principal’s value function when the principal can issue up to 
$N$ bonuses at her discretion. Using the weak dynamic programming principle (WDPP) (\cite{bouchard2011weak} and \cite{bouchard2012weak}) for the mixed control-and-stopping problem arising in this recursive framework, we provide a rigorous analysis showing that the principal’s value function admits a viscosity characterization in terms of a recursive system of HJB variational inequalities. Finally, using the comparison principle, we prove the well-posedness of the principal’s value function.

Finally, we apply a penalized approximation together with a policy iteration scheme introduced in \cite{reisinger2018penalty} to numerically approximate the recursive system of HJB variational inequalities. This allows us to investigate how the contracting environment influences the principal’s value and to identify the regions where bonus payments occur (intervention regions). We then separate the discussion into the first-best and second-best cases. In the first-best case, by comparing the principal's value between both settings, we confirm that the principal's value in the first-best case always dominates that in the second-best case and further show that the principal’s value is ultimately a decreasing function in the second-best setting.
Turning to the structure of the contract, we find that the principal benefits from increasing the number of payments. Although degeneracy does not occur for any fixed number of bonus payments, it can emerge as the number of payments becomes arbitrarily large. In addition, the principal benefits from the patience of the agent.

Next, we turn to the second-best setting. Depending on the contracting environment, three distinct regimes arise: (i) the principal is significantly more impatient than the agent; (ii) the principal is more impatient, but not excessively so; and (iii) the agent is more impatient than the principal. These regimes are characterized by the parameters $\delta$, the ratio of the agent’s to the principal’s discount rates, and $\gamma$, the agent’s coefficient of risk aversion.

First, we compare the principal’s value under a two-period deterministic payment schedule (sequential case) and a single-bonus payment scheme. The principal’s value under the bonus scheme consistently dominates that of the deterministic case, supporting both our numerical approach and the underlying economic intuition. From the two-period deterministic schedule, we observe that when the principal is highly impatient, delaying the first payment is optimal, whereas when the agent is more impatient, paying the bonus as early as possible becomes optimal. Interestingly, when the principal is impatient but not excessively so, the outcome depends on the reservation utility: for low reservation utilities, delayed payments are preferable, while for high reservation utilities, early deterministic payments yield a higher value.

Secondly, we examine the principal’s problem in the second-best case under the single-bonus scenario. 
\begin{enumerate}[label=\roman*)]
    \item When the principal is substantially more impatient than the agent, she is unwilling to offer a sign-on bonus; in this regime, a “golden hello” never arises. Bonus payments occur only when the agent’s continuation utility lies within an intermediate range, jointly driven by the informational rent of the agent and the principal’s high discount rate.
    \item When the principal is more impatient than the agent, but not excessively so, she is willing to offer a sign-on bonus to agents with relatively high reservation utilities. As the contract approaches maturity, it becomes increasingly important for the principal to implement such bonus payments. Moreover, for a fixed maturity date, the bonus payment scheme is monotone with respect to the agent’s utility.
    \item When the agent is more impatient than the principal, the principal is willing to offer a sign-on bonus to agents with relatively low reservation utilities. Nonetheless, there always exists a small region in which a “golden hello” is not guaranteed: due to limited liability, the principal withholds the bonus when the agent’s initial utility is extremely low, to prevent the agent from leaving the contract. Furthermore, because of the agent’s high discount rate, the principal must provide a larger initial bonus to achieve the same incentive effect.
\end{enumerate}
Finally, we examine the principal’s problem in the second-best setting under the multiple-bonus scenario. Two general conclusions emerge. First, as the number of bonus payments increases, the principal prefers to allocate a larger share to the later bonuses, reflecting the greater flexibility afforded by sequential contracting. Second, the “golden hello” gradually disappears as the number of bonus payments increases. 

The rest of the paper is organized as follows. Section \ref{setup} introduces the model and formulates the objectives of both the agent and the principal. Section \ref{firstbest} is devoted to the first-best setting, where we establish upper and lower bounds for the principal’s value in a general framework. In Section \ref{secondbest}, we turn to the second-best problem, deriving the HJB variational inequalities satisfied by the principal’s value function. Finally, Section \ref{numerics} presents the numerical analysis based on these variational inequalities, which validate the economic intuition and illustrate how the ratio of the agent’s to the principal’s discount rates influences the principal’s value and the structure of the bonus payment schemes.



\section{The principal-agent model}\label{setup}

This section outlines the basic setting of the contract and the agent's effort, partially building on the continuous-time contracting models introduced in \cite{sannikov2008continuous} and \cite{possamai2024there}. Let $\left(\Omega, \mathcal{F}, \mathbb{P}^0\right)$ be a probability space supporting a one-dimensional $\mathbb{P}^0$-Brownian motion $W^0$. For fixed parameters $\sigma>0$, and $X_0 \in \mathbb{R}$, the output process is given by
$$
X_t=X_0+\sigma W_t^0 \; , \; t \geq 0.
$$
We denote by $\mathbb{F}$ the $\mathbb{P}^0$-augmentation of the natural filtration generated by $X$ and satisfying the usual conditions. To model the agent's effort, consider a family of probability measures $\left\{\mathbb{P}^\alpha\right\}_{\alpha \in \mathcal{A}}$ under which the output process evolves as
$$
\mathrm{d} X_t=\alpha_t \mathrm{~d} t+\sigma \mathrm{d} W_t^\alpha,
$$where $W^\alpha$ is a $\mathbb{P}^\alpha$-Brownian motion. This change of measure is justified by Girsanov's theorem. Let $\mathcal{A}$ be the collection of all $\mathbb{F}$-predictable processes $\alpha$ taking values in a compact subset $A:=[0,\bar{a}]$, where $\bar{a}>0$ denotes the agent's maximal effort. For each $\alpha \in \mathcal{A}$, we define an equivalent measure $\mathbb{P}^\alpha$ such that
$$
W_t^\alpha:=W_t^0-\int_0^t \frac{\alpha_s}{\sigma} \mathrm{~d} s.
$$
Any $\alpha \in \mathcal{A}$ is referred to as an effort process and represents an action taken to alter the distribution of the output process from $\mathbb{P}^0$ to $\mathbb{P}^\alpha$. 

\medskip
Let $N$ be the fixed maximum number of bonus payments the principal can assign to the agent. We define a contract by a triple $\mathbf{C}:=\left(\bar{\tau}_N, \bar{\xi}_N, \xi_T\right)$, where $\bar{\xi}_N:=\left(\xi_1, \xi_2, \ldots, \xi_N\right)$ denotes a sequence of $N$ bonus payments that occur at increasing times determined by the principal, and $\xi_T$ denotes a final payment made at the end of the contract period. The timing vector $\bar{\tau}_N=\left(\tau_1, \ldots, \tau_N\right)$ belongs to $\mathcal{T}_{[0, T)}$, the set of all $\mathbb{F}$-stopping times taking values in $[0,T)$. Each $\xi_i $, for $i\in \{1,\ldots,N\}$, is a nonnegative, $\mathcal{F}_{\tau_i}$-measurable random variable, while $\xi_T$ is a nonnegative, $\mathcal{F}_T$-measurable random variable. The principal's objective is to design an optimal contract that induces the agent to act in accordance with her interests, while the agent simultaneously maximizes his own expected utility under the contract's terms. 

\medskip
Next, we introduce the agent’s preferences and behavioral parameters:
\begin{itemize}
    \item The agent's utility function: $u(\xi):=\xi^{1 / \gamma}$, with the inverse function $u^{-1}(y)=y^\gamma$ for some $\gamma>1 $.
    \item The agent's cost function: $h:[0, \infty) \rightarrow[0, \infty)$ is increasing, strictly convex, continuously differentiable, and  $ h(0)=0$.
    \item The agent's discount rate $r>0$ is a fixed constant.
\end{itemize}

\medskip
Given a contract $\mathbf{C}:=(\bar{\tau}_N, \bar{\xi}_N,\xi_T) $ and $\alpha \in \mathcal{A}$, the utility obtained by the agent is defined by
\begin{equation}\label{Jagent}
  J^{\mathrm{A}}(\mathbf{C}, \alpha):=\mathbb{E}^{\mathbb{P}^\alpha}\left[ \sum_{i = 1}^{N } \mathrm{e}^{-r \tau_{i} } u(\xi_{i})  + e^{-rT}u(\xi_{T}) -\int_0^{T} r \mathrm{e}^{-r t} h\left(\alpha_t\right)\mathrm{d} t\right],  
\end{equation}
where the agent receives a sequence of $N$ bonus payments at stopping times $\bar{\tau}_N$, along with a terminal payment $\xi_T$ at the end of the contracting horizon $T$. The discount rate $r>0$ reflects the agent's degree of impatience. At the inception of the contract, the agent exerts an effort $\alpha\in \mathcal{A}$, which is bounded and takes values in $A=[0, \bar{a}]$.

\medskip
We now introduce the principal's objective 
\begin{equation}\label{Jprincipal}
    J^{\mathrm{P}}(\mathbf{C}, \alpha)=\mathbb{E}^{\mathbb{P}^\alpha}\left[ -\sum_{i=1}^{N} \mathrm{e}^{-\rho \tau_i} (1+k)\xi_i -  e^{ - \rho T}\xi_{T} + \int_0^{T} \rho \mathrm{e}^{-\rho t}\left(\alpha_t \right) \mathrm{d} t\right].
\end{equation}
The functional $J^{\mathrm{P}}$ is well defined on $\{-\infty\} \cup \mathbb{R}$, due to the boundedness of $A$ and $\bar{\tau}_N$, together with the nonnegativity of $\bar{\xi}_N$ and $\xi_T$. Here, $k>0$ denotes a fixed exogenous transaction cost rate, and $\rho>0$ is the principal's discount rate.

\medskip
To ensure that both the agent's and the principal's utility criteria are finite, we define the subset of admissible contracts $\mathbb{C}$, consisting of all $\mathbf{C} = (\bar{\tau}_N, \bar{\xi}_N,\xi_T)$ such that:
\begin{equation}\label{contracts::integrability}
    \max_{i\in [N]}\sup _{\alpha \in \mathcal{A}} \left[\mathbb{E}^{\mathbb{P}^\alpha}\left[ |\xi_i|^\gamma\right] +  \mathbb{E}^{\mathbb{P}^\alpha}\left[ |\xi_T|^\gamma\right]\right]<\infty.
\end{equation}

\section{The first-best problem}\label{firstbest}
In this section, we introduce the first-best scenario, where the principal has full bargaining power and therefore determines both the contract terms and the agent’s effort, subject to the agent's participation restriction. First, we state the principal's problem in the first-best scenario, write it in a tractable form, and specify the corresponding optimization problem. Next, we state the main result, which characterizes the principal's value. 

In this case, given a reservation utility level $u(R)$ (with $R \geq 0$) chosen by the agent, the principal's problem in the First-Best setting is
\begin{equation}\label{principalpfb}
    V^{\mathrm{FB},\mathrm{P}}:= v^{FB,P}(u(R)) =\sup \left\{J^{\mathrm{P}}(\mathbf{C}, \alpha): \mathbf{C} \in \mathbb{C}, \alpha \in \mathcal{A}\right., \left.J^{\mathrm{A}}(\mathbf{C}, \alpha) \geq u(R)\right\},
\end{equation}
which is the optimal value that the principal could obtain in the first-best case.

\medskip
Equivalently, the optimal value of the principal under the first-best case is
\begin{equation*}\label{principalpfbs}
\begin{split}
  V^{\mathrm{FB},\mathrm{P}}&= \inf_{\lambda \leq 0} \sup_{\mathcal{C} \in \mathbb{C},\alpha \in \mathcal{A}} \{\lambda u(R) + J^{\mathrm{P}}(\mathbf{C}, \alpha) - \lambda J^{\mathrm{A}}(\mathbf{C}, \alpha) \}  \\ 
  &=\inf_{\lambda \leq 0} \left\{\lambda u(R)  + \sup_{(\alpha,\bar{\tau}_N, \bar{\xi}_N, \xi_T)} \mathbb{E}^{\mathbb{P}^{\alpha}} \Bigg[  - \sum_{i=1}^{N} 
    \mathrm{e}^{-\rho \tau_i}  \left(  u(\xi_i)\lambda \mathrm{e}^{(\rho-r)\tau_i} + (1 + k)\xi_i  \right) \right. \\ 
    &\quad\quad\quad\quad\quad\quad\quad- \mathrm{e}^{-\rho T} \left(u(\xi_T)\lambda \mathrm{e}^{(\rho-r)T} +  \xi_T  \right)   \left.    +   \int_{0}^{T} \rho \mathrm{e}^{-\rho t}   \left(\alpha_t + h(\alpha_t) \delta \lambda \mathrm{e}^{(\rho-r)t} \right) \mathrm{d}t \Bigg] \right\},
\end{split}
\end{equation*}
where $\lambda \leq 0$ denotes the Lagrange multiplier associated with the participation constraint, and $\delta:=\frac{r}{\rho}$ represents the relative discounting between the principal and the agent. The second equality follows from the standard Karush-Kuhn-Tucker method.

To tackle this problem, we employ pointwise maximization within the expectation, which allows us to derive the upper bound and lower bound of the principal's optimal value.

Having outlined the solution approach via the pointwise maximization technique, we are now prepared to state the main result for the first-best benchmark. The proof, which builds upon the preceding arguments together with several technical lemmas, is deferred to Appendix \ref{appendix::FB}.

\medskip
First, in the finite-horizon setting, the inherent time-dependent (term-structure) nature of the problem prevents the derivation of an explicit optimal contract. Instead, we establish upper and lower bounds for the principal's value under the first-best benchmark, which depend on the relative discounting parameter $\delta$.

Second, we show that, for any value of $\delta$, the first-best problem remains non-degenerate over a finite horizon. In particular, it is impossible to construct a sequence of contracts that induces the agent to exert maximal effort continuously throughout the entire period.

In the following theorem, we shall express upper bounds and lower bounds in terms of the convex dual of $-h^{-1}(\cdot)$ and the concave conjugate of $F(y):= -u^{-1}(y)$
\begin{equation}\label{defG*F*}
    G^{*}(p) := \sup_{a \in A} \left(a + h(a) p \right) ,\, p \leq 0   ,\quad F^{*,k}(p) := \inf_{y \geq 0 }(yp - (1 + k) F(y)) ,\, p \leq 0.
\end{equation}

\begin{theorem}(First-Best Problem)\label{thm::FB}
\begin{enumerate} 
    \item If $\delta\gamma > 1$ , then  $\underline{v}^{FB,P}(u(R)) \leq V^{\mathrm{FB},\mathrm{P}} \leq \bar{v}^{FB,P}(u(R))$, where
    \begin{footnotesize}
        \begin{flalign*}
      &\underline{v}^{FB,P}(u(R)) 
      := \inf_{\lambda \leq 0} \left\{\lambda u(R)  + \sum_{i=1}^{N}
   \left\{- \mathrm{e}^{-\rho \tau_i} \left(  F^{*,k}(\lambda \mathrm{e}^{(\rho - r)\tau_i}) \right) \right\} -  \mathrm{e}^{-\rho T} F^{*,0}(\lambda \mathrm{e}^{(\rho - r)T}) \right.\\  &\quad\quad\quad\quad\quad\quad\quad\quad\quad\quad\quad\left.+ \int_{0}^{T}\rho \mathrm{e}^{-\rho t}G^{*}(\delta  \lambda \mathrm{e}^{(\rho-r)t} ) dt \right\}, \\
    &\overline{v}^{FB,P}(u(R)) 
            := \inf_{\lambda \leq 0} \left\{\lambda u(R) - N 
    F^{*,k}(\lambda) -  \mathrm{e}^{-\rho T} F^{*,0}(\lambda \mathrm{e}^{(\rho - r)T}) + \int_{0}^{T}\rho \mathrm{e}^{-\rho t}G^{*}(\delta  \lambda \mathrm{e}^{(\rho-r)t} ) dt \right\},   
   \end{flalign*}
   \end{footnotesize}
   and $\tau_{i} = (i-1)\zeta \; , \; 0 < \zeta < \frac{T}{N}$.
    \item If $\delta\gamma < 1$, then  $ \underline{v}^{FB,P}(u(R)) \leq V^{\mathrm{FB},\mathrm{P}} \leq \bar{v}^{FB,P}(u(R))$ where
     \begin{footnotesize}
        \begin{flalign*}
      &\underline{v}^{FB,P}(u(R)) 
      := \inf_{\lambda \leq 0} \left\{\lambda u(R)  + \sum_{i=1}^{N}
   \left\{- \mathrm{e}^{-\rho \tau_i} \left(  F^{*,k}(\lambda \mathrm{e}^{(\rho - r)\tau_i}) \right) \right\} -  \mathrm{e}^{-\rho T} F^{*,0}(\lambda \mathrm{e}^{(\rho - r)T}) \right.\\
   &\quad\quad\quad\quad\quad\quad\quad\quad\quad\quad\quad\left.+ \int_{0}^{T}\rho \mathrm{e}^{-\rho t}G^{*}(\delta  \lambda \mathrm{e}^{(\rho-r)t} ) dt \right\}, \\
            &\overline{v}^{FB,P}(u(R)) 
            :=  \inf_{\lambda \leq 0} \left\{\lambda u(R) -  N\mathrm{e}^{-\rho T} F^{*,k}(\lambda \mathrm{e}^{(\rho - r)T}) - \mathrm{e}^{-\rho T} F^{*,0}(\lambda \mathrm{e}^{(\rho - r)T}) + \int_{0}^{T}\rho \mathrm{e}^{-\rho t}G^{*}(\delta  \lambda \mathrm{e}^{(\rho-r)t} ) dt \right\},
\end{flalign*}
\end{footnotesize}
and $\tau_{i} = T - i\zeta \; , \; 0 < \zeta < \frac{T}{N}$.
    \item If $\delta\gamma = 1$, then 
    \begin{footnotesize}    
    $$V^{\mathrm{FB},\mathrm{P}} =  \inf_{\lambda \leq 0} \left\{\lambda u(R) -  (N+1)\mathrm{e}^{-\rho T} F^{*}(\lambda \mathrm{e}^{(\rho - r)T}) + \int_{0}^{T}\rho \mathrm{e}^{-\rho t}G^{*}(\delta  \lambda \mathrm{e}^{(\rho-r)t} ) dt \right\} . $$
    \end{footnotesize}

    \item If $R = 0$, then  $0 < V^{\mathrm{FB},\mathrm{P}} < \bar{a}(1 - e^{-\rho T})$.
\end{enumerate}
\end{theorem}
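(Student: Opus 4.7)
The plan is to prove the two strict inequalities $V^{\mathrm{FB},\mathrm{P}} > 0$ and $V^{\mathrm{FB},\mathrm{P}} < \bar{a}(1 - e^{-\rho T})$ separately. Both arguments are independent of the $\delta\gamma$ regime, since at $R = 0$ the participation constraint degenerates to $J^{\mathrm{A}} \geq 0$.

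For strict positivity, I would exhibit one admissible pair delivering a positive value. Fix small $a > 0$, set all bonuses $\xi_i \equiv 0$, use constant effort $\alpha_t \equiv a$, and choose $\xi_T$ to bind participation: $\xi_T := \bigl(e^{rT} h(a)(1-e^{-rT})\bigr)^\gamma$. The principal's value reduces to
\[
J^{\mathrm{P}}(\mathbf{C},\alpha) = a(1 - e^{-\rho T}) - e^{-\rho T}\bigl(e^{rT}h(a)(1-e^{-rT})\bigr)^\gamma.
\]
Since $h \in C^1$ with $h(0) = 0$ gives $h(a) = O(a)$ as $a \to 0^+$, and $\gamma > 1$, the subtracted term is $O(a^\gamma) = o(a)$ while the leading term is $\Theta(a)$. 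Hence $J^{\mathrm{P}} > 0$ for $a$ sufficiently small, yielding $V^{\mathrm{FB},\mathrm{P}} > 0$.

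For the strict upper bound I would argue by contradiction. The estimate $\alpha_t \leq \bar{a}$ together with nonnegativity of all payments gives
\[
J^{\mathrm{P}}(\mathbf{C},\alpha) \leq \bar{a}(1-e^{-\rho T}) - c^{\mathrm{pay}}(\mathbf{C},\alpha) - \mathbb{E}^{\mathbb{P}^\alpha}\Big[\int_0^T \rho e^{-\rho t}(\bar{a}-\alpha_t)\,dt\Big],
\]
where $c^{\mathrm{pay}}(\mathbf{C},\alpha) := \mathbb{E}^{\mathbb{P}^\alpha}[\sum_i e^{-\rho \tau_i}(1+k)\xi_i + e^{-\rho T}\xi_T] \geq 0$. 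A maximizing sequence $(\mathbf{C}_n,\alpha^n)$ with $J^{\mathrm{P}} \to \bar{a}(1-e^{-\rho T})$ must satisfy $c^{\mathrm{pay}}_n \to 0$ and, via $\rho e^{-\rho t} \geq \rho e^{-\rho T}$, also $\mathbb{E}^{\mathbb{P}^{\alpha^n}}[\int_0^T (\bar{a}-\alpha^n_t)\,dt] \to 0$. I then collide two opposing bounds on the agent's discounted utility $U_n := \mathbb{E}^{\mathbb{P}^{\alpha^n}}[\sum_i e^{-r\tau_i}u(\xi_i) + e^{-rT}u(\xi_T)]$. By Jensen applied to the concave $u(\xi) = \xi^{1/\gamma}$, together with $e^{-r\tau_i} \leq 1$ and $\mathbb{E}^{\mathbb{P}^{\alpha^n}}[\xi_i] \leq e^{\rho T}c^{\mathrm{pay}}_n/(1+k)$, one has $U_n \leq \kappa (c^{\mathrm{pay}}_n)^{1/\gamma}$ for a constant $\kappa$, so $U_n \to 0$. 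On the other hand, the convex tangent inequality $h(\alpha^n_t) \geq h(\bar{a}) - h'(\bar{a})(\bar{a}-\alpha^n_t)$ yields
\[
\mathbb{E}^{\mathbb{P}^{\alpha^n}}\Big[\int_0^T r e^{-rt} h(\alpha^n_t)\,dt\Big] \geq h(\bar{a})(1-e^{-rT}) - o(1),
\]
and the participation constraint $J^{\mathrm{A}} \geq 0$ forces $U_n \geq h(\bar{a})(1-e^{-rT}) - o(1) > 0$ eventually, contradicting $U_n \to 0$.

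The main obstacle is the bookkeeping that converts principal-side $e^{-\rho\cdot}$-weighted decay of payments into agent-side $e^{-r\cdot}$-weighted decay of utility, and similarly for the effort integrals. I handle this through the uniform scalings $e^{-\rho T} \leq e^{-\rho \tau_i} \leq 1$, $e^{-r\tau_i} \leq 1$, and $\rho e^{-\rho t} \geq \rho e^{-\rho T}$ valid on $[0,T]$, which fully decouple the two discounting scales; everything else reduces to Jensen's inequality and convexity of $h$.
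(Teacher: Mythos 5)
Your write-up addresses only item~4 of the theorem ($R=0 \Rightarrow 0 < V^{\mathrm{FB},\mathrm{P}} < \bar a(1-e^{-\rho T})$). Items~1--3, which establish the Lagrangian representation of $V^{\mathrm{FB},\mathrm{P}}$ and the upper/lower bound functions $\bar v^{FB,P}$, $\underline v^{FB,P}$ via pointwise maximization inside the expectation together with the monotonicity analysis of $\tau_i \mapsto -e^{-\rho\tau_i}F^{*,k}(\lambda e^{(\rho-r)\tau_i})$, are not touched. As a proof of the full theorem this is therefore incomplete and you should state openly that you are only proving the last item.

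That said, your argument for item~4 is correct and is a genuinely different, more elementary route than the paper's. The paper proves item~4 \emph{downstream} of the dual representations: it defines $FG(\lambda) := -N e^{-\rho T}F^{*,k}(\lambda e^{(\rho-r)T}) - e^{-\rho T}F^{*,0}(\lambda e^{(\rho-r)T}) + \int_0^T \rho e^{-\rho t}G^*(\delta\lambda e^{(\rho-r)t})\,dt$, computes $FG(0) = \bar a(1-e^{-\rho T})$ and $FG'(0) = h(\bar a)(1-e^{-rT}) > 0$ via the envelope theorem, and then deduces $\inf_{\lambda\le 0}FG < FG(0)$ from convexity and coercivity. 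Your approach instead works entirely on the primal side: for positivity you exhibit an explicit feasible contract (constant small effort $a$, no bonuses, terminal payment chosen to bind participation) and Taylor-expand to show $J^{\mathrm P} = a(1-e^{-\rho T}) - O(a^\gamma) > 0$; for the strict upper bound you run a compactness/contradiction argument forcing a maximizing sequence to annihilate payments and push effort to $\bar a$, which collides the Jensen estimate $U_n \lesssim (c^{\mathrm{pay}}_n)^{1/\gamma} \to 0$ with the participation-driven bound $U_n \ge h(\bar a)(1-e^{-rT}) - o(1)$. This buys you two things: it bypasses items~1--3 entirely (whereas the paper's proof of item~4 presupposes them), and it makes the positivity claim fully explicit where the paper's ``nonnegativity of $GF$'' remark is rather terse. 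The price is that it does not recover the quantitative dual expressions, which are the content of items~1--3 anyway.

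Two small points of care in your argument: (i) the tangent inequality $h(\alpha^n_t) \ge h(\bar a) - h'(\bar a)(\bar a-\alpha^n_t)$ and the convexity of $h$ are exactly what the model assumes, so that step is sound; (ii) when you bound $c^{\mathrm{pay}}_n$ and the effort deficit separately, you should note that both summands in $J^{\mathrm P} \le \bar a(1-e^{-\rho T}) - c^{\mathrm{pay}}_n - \mathbb E[\int_0^T\rho e^{-\rho t}(\bar a-\alpha^n_t)\,dt]$ are nonnegative, so each must tend to zero individually; you implicitly use this but it is worth saying.
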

\begin{remark}\hfill 
\begin{enumerate}[label=\roman*]
    \item The existence of an explicit representation for the principal’s value under the first-best case, as a function of the reservation utility, crucially depends on the composite parameter $\delta\gamma$. We note that in the benchmark case $\delta\gamma = 1$, a closed-form expression can be derived. This configuration implies that the principal’s value under the first-best case is invariant to the timing of bonus payments when the agent’s relative impatience ratio $\delta$ coincides with the reciprocal of their risk-aversion coefficient $1/\gamma$.
    \vspace{-0.2cm}
    \item When $\delta\gamma \neq 1$, an explicit solution is no longer available. In this regime, the principal’s value under the first-best case can only be characterized through analytical upper and lower bounds, reflecting its nonlinear dependence on the reservation utility.
    \vspace{-0.2cm}
    \item The sign of $1 - \delta\gamma$ determines the optimal timing of payments. When $\delta\gamma < 1$, the agent is relatively more patient than the principal, and it is optimal for the principal to defer bonus payments to exploit discounting advantages. In contrast, when $\delta\gamma > 1$, the agent is more impatient, and the principal benefits from advancing payments to mitigate the agent’s stronger time preference.
\end{enumerate}
\end{remark}

\section{The second-best problem}\label{secondbest}
In this section, we introduce the second-best problem, where the agent’s effort is unobservable and thus, not contractible. In this setting, the principal strategically designs the contract to maximize her expected utility, while anticipating the agent’s optimal effort response. The contract must satisfy the agent’s participation constraint, ensuring that his expected utility is at least as high as his reservation utility. 
\subsection{The principal's problem}\label{sb::principalp}
In this case, we define the principal's value under the second-best case when the maximum number of bonus payments is equal to $N$ as 
\begin{equation}\label{principalpsb}
    V^{\mathrm{P},\mathrm{N}}:=\sup_{\mathbf{C} \in \mathfrak{C}_R} \sup _{\alpha \in \mathcal{A}^*(\mathbf{C})} J^{\mathrm{P}}(\mathbf{C}, \alpha),
\end{equation}
where $\mathfrak{C}_R$ denotes all contracts that satisfy the agent's participation constraint
$$
\mathfrak{C}_R:=\left\{\mathbf{C} \in \mathbb{C}: V^{\mathrm{A}}(\mathbf{C}) \geq u(R)\right\}, \text { with } V^{\mathrm{A}}(\mathbf{C}):=\sup _{\alpha \in \mathcal{A}} J^{\mathrm{A}}(\mathbf{C}, \alpha).
$$
In addition, we denote by $\mathcal{A}^{\star}(\mathbf{C})$ the collection of all the agent's optimal effort responses to a given contract $\mathbf{C}$ 
$$
\mathcal{A}^{\star}(\mathbf{C}):=\left\{\alpha \in \mathcal{A}: V^{\mathrm{A}}(\mathbf{C})=J^{\mathrm{A}}(\mathbf{C}, \alpha)\right\} .
$$
To address the second-best contracting problem, we follow the approach outlined in \cite{alvarez2025sequential}, allowing us to write the principal’s problem (\ref{principalpsb}) as a standard stochastic control problem. The terminal utility $\eta_T := u(\xi_T)$  promised by the principal takes the form
\begin{equation}
    \eta_T:=Y_T^{0,Y_0, Z,\bar{\tau}_N,\bar{\eta}_N} = Y_0 - \int_0^{T} H(Y_t^{0,Y_0, Z,\bar{\tau}_N,\bar{\eta}_N},Z_t) \mathrm{d} t+\int_0^Tr Z_t \sigma \mathrm{~d} X_t -  \sum_{m=1}^{N} \eta_m \mathbbm{1}_{\left[\tau_m, T\right]}, 
\end{equation}
where $\bar{\eta}_N$ denotes the utility form of the sequential bonus payments 
$$\bar{\eta}_N := \left(\eta_1,\cdots,\eta_N) = (u(\xi_1),\cdots,u(\xi_N)\right),$$
 and
\begin{align}
h(a,y,z) &:=  r z a -r y- rh(a), \quad (a,y,z) \in A\times [0,\infty)\times \RR, \nonumber \\
H(y,z) &:= \sup_{a\in A}h(a,y,z). \label{hamiltonian.1}
\end{align}
We know that $Y^{Y_0,Z,\bar{\tau}_N,\bar{\eta}_N}$, which represents the agent's utility process and $Z$, which denotes the compensation process, satisfy the following integrability constraints
\begin{equation}\label{condition::interability::YZ}
    \sup _{\alpha \in \mathcal{A}} \mathbb{E}^{\mathbb{P}^\alpha}\left[\sup _{0 \leq t \leq T}\left(\left|Y_t^{Y_0, Z,\bar{\tau}_N, \bar{\eta}_N}\right|\right)^p\right]<\infty, \text { and } \sup _{\alpha \in \mathcal{A}} \mathbb{E}^{\mathbb{P}^\alpha}\left[\left(\int_0^{T}\left(\left|Z_t\right|\right)^2 \mathrm{~d} t\right)^{\frac{p}{2}}\right]<\infty,
\end{equation}
for some $p > 2 \vee \gamma$. \\
Applying a similar technique to \cite[Theorem 1]{alvarez2025sequential}, we obtain that the second-best principal's value solves the following mixed control-and-stopping problem
\begin{equation}\label{principalvreduction.1}
\begin{split}
    V^{\mathrm{P},\mathrm{N}}  &= \sup_{Y_0 \geq u(R)} V^{p,N}(Y_0),\\ 
    V^{p,N}(Y_0) &= \sup_{(Z,\bar{\tau}_N,\bar{\eta}_N) \in \mathcal{U}^{N}(Y_0), \hat{a} \in \hat{\mathcal{A}}(Z)} J^{p,N}(0,Y_0;(Z,\bar{\tau}_N,\bar{\eta}_N),\hat{a} ),\\
   J^{p,N}(0,Y_0;(Z,\bar{\tau}_N,\bar{\eta}_N) , \hat{a} ) &=   \mathbb{E}^{\mathbb{P}^{\hat{a}}}  \left[  -\sum_{i=1}^{N} \mathrm{e}^{-\rho \tau_i} (1+k)u^{-1}(\eta_i)   - \mathrm{e}^{- \rho T}u^{-1}(Y^{0,Y_0, Z,\bar{\tau}_N,\bar{\eta}_N}_{T})  \right.\\
   &\left.\quad\quad\quad\quad\quad+ \int_0^{T} \rho \mathrm{e}^{-\rho s }\left(\hat{a}_s\right) \mathrm{d} s  \right],
\end{split}
\end{equation}
where $\mathcal{U}^{N}\left(Y_0\right)$ denotes the collection of all admissible triples $(Z,\bar{\tau}_N,\bar{\eta}_N)$ such that the process $Y^{0,Y_0,Z,\bar{\tau}_N,\bar{\eta}_N}$ satisfies the integrability conditions \eqref{contracts::integrability} and \eqref{condition::interability::YZ}, with the limited-liability constraint $Y^{0,Y_0,Z,\bar{\tau}_N,\bar{\eta}_N} \geq 0 $ $dt\otimes\PP^{\hat{a}}-\mathrm{a.e.}$, which is further associated with the following state dynamics
\begin{equation*}
    Y_t= Y_0 + \int_0^{t} r ( Y_s^{0,Y_0, Z,\bar{\tau}_N,\bar{\eta}_N} +  h(\hat{a}_s) ) \mathrm{d} s+ \int_0^{t} r Z_s \sigma \mathrm{~d} W^{\hat{a}}_s -  \sum_{m=1}^{N} \eta_m \mathbbm{1}_{\left[\tau_m, T\right]}(t) .
\end{equation*}
The set $\hat{\mathcal{A}}(z)$ represents the collection of admissible agent's optimal responses
$$\hat{\mathcal{A}}(z) :=\partial h^{\star}(z)=\left\{a \in A: h^{\star}(z)=z a-h(a)\right\}.$$

\begin{remark}
    The limited-liability condition characterizes the situation in which the agent’s continuation utility reaches zero, i.e., when $Y_t = 0$, the agent ceases to provide effort. This boundary is absorbing, as described in \cite{possamai2024there}, and similar boundary conditions also appear in \cite{federico2008pension}. From a control-theoretic perspective, this formulation can be interpreted as a stochastic control problem with discretionary exit \cite{fleming2006controlled}, or equivalently, a stochastic exit-time control problem.
\end{remark}
Invoking \cite[Theorem $4.3$]{ElKarouiTan2024}, we justify that the previous weak formulation stochastic control problem is equivalent to a stochastic control problem in the strong formulation where the stochastic basis is fixed.
Combined with the previous observation, we reduce the principal's problem (\ref{principalvreduction.1}) to the following stochastic control problem
\begin{equation}\label{principalvreduction.2}
\begin{split}
 V^{\mathrm{P},\mathrm{N}}  &= \sup_{Y_0 \geq u(R)} V^{p,N}(Y_0),\\ 
    V^{p,N}(t,y) &= \sup_{(Z,\bar{\tau}_N,\bar{\eta}_N) \in \mathcal{U}^{N}(t,y), \hat{a} \in \hat{\mathcal{A}}(Z)} J^{p,N}(t,y;(Z,\bar{\tau}_N,\bar{\eta}_N),\hat{a} ),\\
   J^{p,N}(t,y;(Z,\bar{\tau}_N,\bar{\eta}_N), \hat{a} ) &=   \mathbb{E} \left[  -\sum_{i=1}^{N} \mathrm{e}^{-\rho (\tau_i-t)} (1+k)u^{-1}(\eta_i)   - \mathrm{e}^{- \rho (T-t)}u^{-1}(Y^{t,y, Z,\bar{\tau}_N,\bar{\eta}_N}_{T})  \right.\\
   &\left.\quad\quad\quad\quad\quad+ \int_t^{T} \rho \mathrm{e}^{-\rho (s-t) }\left(\hat{a}_s \right) \mathrm{d} s  \right], 
\end{split}
\end{equation}
for all $(t,y) \in [0,T] \times [0,\infty)$, and for $\hat{t}> t$, we have
\begin{equation}\label{dynamicsYB}
    \begin{split}
        Y_t^{t,y, Z,\bar{\tau}_N,\bar{\eta}_N}&= y, \\ 
        \mathrm{~d} Y_s^{t,y, Z,\bar{\tau}_N,\bar{\eta}_N}  &= r\left(Y_s^{t,y, Z,\bar{\tau}_N,\bar{\eta}_N}+h\left(\hat{a}_s\right)\right) \mathrm{d} s + r Z_s \sigma \mathrm{~d} B_s \;, \; s  \in [t, T], \\
        Y^{t,y, Z,\bar{\tau}_N,\bar{\eta}_N}_{\tau_i}  &= Y^{t,y, Z,\bar{\tau}_N,\bar{\eta}_N}_{\tau_i-} - \eta_i, \quad i \in \{1,\cdots,N\}.
    \end{split}
\end{equation}
Similar to \cite{belak2017general}, to make (\ref{dynamicsYB}) well defined at $\tau_1 = t$, we set $Y_{t-}^{t,y, Z,\bar{\tau}_N,\bar{\eta}_N} := y$. $B$ is a Brownian-motion defined on fixed probability space $(\Omega,\FF,\PP)$, and $\FF$ is the augmented filtration generated by $B$. The control $Z$ is an $\FF$-predictable process,$\bar{\tau}_{N} = (\tau_1,\cdots,\tau_N)$ is a sequence of preditable stopping times $\tau_{i+1} \geq \tau_i$ and $\tau_i \in \mathcal{T}_{[t,T)}$, for $i \in \{1,\cdots,N\}$, and $\bar{\eta}_N:= (\eta_1,\cdots,\eta_N) $,  for every $i \in \{1,\ldots,N\}$, $\eta_i$ is a $\mathcal{F}_{\tau_i}$-measurable random variable satisfying the integrability condition
\begin{equation}\label{contracts::integrability::W}
     \mathbb{E}\left[ |\eta_i|  \right]<\infty, \quad i \in \{1,\ldots,N\},
\end{equation}
where $\mathbb{E}\left[\cdot\right]$ denotes the expectation taken under $\PP$. In addition, $\mathcal{U}^{N}(t,y)$ is the collection of triplets $(Z,\bar{\tau}_N,\bar{\eta}_N)$ so that the process $Y^{t,y, Z,\bar{\tau}_N,\bar{\eta}_N}$ satisfies 
\begin{equation}\label{condition::interability::YZ::reduction2}
     \mathbb{E}\left[\sup_{ t \leq s\leq T}\left|Y_s^{t,y, Z,\bar{\tau}_N, \bar{\eta}_N}\right|^p\right]<\infty, \text { and }  \mathbb{E}\left[\left(\int_0^{T}\left|Z_t\right|^2 \mathrm{~d} t\right)^{\frac{p}{2}}\right]<\infty, 
\end{equation}
 for some $p > 2 \vee \gamma$, and the limited liability constraint:  $Y_s^{t,y, Z,\bar{\tau}_N,\bar{\eta}_N} \geq 0,\mathbb{W}-\mathrm{a.s}$, for all $s\geq t$.

\subsection{Main result for the second-best problem}\label{sb::main}
In this section, we characterize the principal's value function in the second-best setting. While the existing literature, such as \cite{sannikov2008continuous,possamai2024there,possamai2025golden}, considers infinite-horizon problems, our focus is on short-term contracting environments with a fixed termination date and finitely many discretionary payments. The finite time horizon motivates the use of parabolic variational inequalities rather than ordinary differential equations to describe the principal’s value function. The finite number of bonuses leads naturally to an iterative optimal stopping approach, following methodologies developed in \cite{baccarin2013portfolio,asri2019finite,oksendal2007applied}. Through this iterative framework, we establish the well-posedness of the principal’s value function in the viscosity sense. Moreover, by exploiting the partial concavity of the principal’s value with respect to the Agent’s value, we characterize the optimal bonus payment scheme via static optimization techniques.

The following assumption underlies our main results on the second-best contracting problem, which in turn provides a key step toward establishing the well-posedness of viscosity solutions to the associated HJB variational inequalities.

\begin{assumption}\label{assm::enoughgammah(a)}\hfill
\vspace{-0.2cm}
\begin{enumerate}
    \item Agent's risk aversion parameter $\gamma > 2$.
    \vspace{-0.2cm}
    \item Cost effort function $h(\cdot) \in \mathcal{C}^{3}$: $h'(\cdot)$ is a convex function, $h'(0) = \beta > 0$, and $h''(0) = \alpha > 0$.  
\end{enumerate}
\end{assumption}

To state our main result, we define the differential operator $\mathcal{G}$,
 the nonlocal operator $\mathcal{M}$, and the function $g^n$ as follows
    \begin{align}
            \mathcal{G} v &:= \rho v - \sup_{z \geq 0 ,\hat{a} \in \hat{\mathcal{A}}(z)}\{rh(\hat{a})v_y + \rho \hat{a} + \frac{1}{2}v_{yy}r^{2}\sigma^{2}z^{2}  \} - rv_y y,  \label{operatorG}
\\
\label{operatorM}
     \mathcal{M} V(t, y)&:=\sup _{0 \leq \eta \leq y}[V(t, y - \eta)-(1 + k)\eta^{\gamma}],  \\\label{terminalg}
      g^{n}(y) &:= \mathcal{M}^{n}F(y) =   - \left(\frac{(1+k)^{1 /(\gamma-1)}}{n+(1+k)^{1 /(\gamma-1)}}\right)^{\gamma-1} y^{\gamma}\; , \;  n \in \{0,\ldots,N\} .
    \end{align}

\begin{theorem}\label{hjbvi:pp:in}
Let Assumption \ref{assm::enoughgammah(a)} hold. The principal's value function when the principal has $N \geq 1$ bonus payments satisfies the following properties
\begin{enumerate}
    \item $V^{p,N}=v^{p,N}$, where $(v^{p,n})_{n=0}^N$ is the unique continuous viscosity solution to the following system of constrained HJB variational inequalities on the domain $ [0,T) \times [0,\infty)$ 
    \begin{equation}\label{vnpdevi}
    \begin{split}
        &\min\{ -v^{p,n}_t + \mathcal{G}v^{p,n}  , v^{p,n} - \mathcal{M} v^{p,n-1}, - v^{p,n}_{yy}\} = 0, \; n \in \{1,\ldots,N\},\\
        &\min\{-v^{p,0}_t+\mathcal{G}v^{p,0},-v_{yy}^{p,0}\}=0,
            \end{split}
    \end{equation}
    and terminal-boundary condition
\begin{equation}\label{terminalvnp}
    \begin{split}
        &v^{p,n}(T-,y) = g^n(y), \quad (y,n)  \in [0,\infty)\times\{0,\ldots,N\},  \\
        &v^{p,n}(t,0) = 0, \quad (t,n) \in [0,T]\times\{0,\ldots,N\}  .
              \end{split}
    \end{equation}
    \item \label{partialconcave}  $V^{p,n}(t,\cdot)$ is concave, for any given $(t,n) \in [0,T)\times \{0,\ldots,N\}$.
    \item \label{nofiring}  The $n$-th optimal bonus payment satisfies the following feedback characterization 
    \begin{equation}\label{etastari}
    \begin{split}
       \eta^{n,*}(t,y) &= \argmax_{ 0 \leq \eta \leq y }  -  (1 + k)\eta^{\gamma} +V^{p,n-1}(t, y - \eta) \\
        &= \inf \{\eta \in [0,y] : \partial^{+}_{y}V^{p,n-1}(t,y - \eta) > -\gamma (1 + k)\eta^{\gamma - 1} \},
        \end{split}
        \end{equation}
           in the intervention region $ \{(t,y) \in [0,T)\times(0,\infty), V^{p,n}(t,y) = \mathcal{M}V^{p,n-1}(t,y)  \}$, for all , $n \in \{1,\ldots,N\} $. We denote $\partial_y^{-} V$ and $\partial_y^{+} V$ the left and right derivative of $V$ with respect to $y$ respectively, which exist everywhere by the concavity of $V(t,\cdot)$, for all $0 \leq t\leq T$. Moreover,  $$\eta^{n,*}(t,y) < y,  \text{  } \forall n \in \{ 
 1 , \cdots ,  N  \}.$$
\end{enumerate}
\end{theorem}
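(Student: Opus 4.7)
The plan is to prove the three parts in sequence by combining an iterative optimal stopping construction with viscosity arguments. Starting from $n=0$, which is a pure mixed stochastic control problem, I would define $v^{p,n}$ recursively as the value of a mixed control-and-stopping problem whose intervention payoff is $\mathcal{M}v^{p,n-1}$, and identify $V^{p,n}$ with this recursive value via dynamic programming applied to the first bonus time $\tau_1$. One direction is direct: concatenating an $\epsilon$-optimal strategy after the first bonus with any admissible choice of $(\tau_1,\eta_1)$ yields the upper bound. The reverse inequality requires assembling a global $\epsilon$-optimal strategy from layer-by-layer near-optimizers, using measurable selection after each jump.

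For Part 1, once the iterative representation is in hand, I would apply the weak dynamic programming principle of Bouchard-Touzi, specialized to mixed control and optimal stopping, to derive the viscosity subsolution and supersolution properties at each level $n$. The three branches of the variational inequality arise naturally: $-v^{p,n}_t+\mathcal{G}v^{p,n}\geq 0$ from freezing the controls over a small interval, $v^{p,n}\geq\mathcal{M}v^{p,n-1}$ from admissible immediate intervention, and the concavity constraint $-v^{p,n}_{yy}\geq 0$ from the unboundedness of $Z$ (otherwise the supremum in $\mathcal{G}$ would be infinite). The terminal condition $v^{p,n}(T-,y)=g^n(y)=\mathcal{M}^n F(y)$ follows from the fact that executing all remaining bonuses at $T$ is feasible, and $v^{p,n}(t,0)=0$ from the absorbing nature of $\{Y=0\}$ under limited liability. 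For uniqueness I would invoke a comparison principle applied layer by layer: given continuous $v^{p,n-1}$ treated as a fixed obstacle, a standard parabolic variational-inequality comparison theorem, adapted to the quadratic-growth Hamiltonian $\mathcal{G}$ and the gradient constraint $-v_{yy}\geq 0$, propagates uniqueness from level $n-1$ to level $n$.

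For Part 2 I would prove concavity of $V^{p,n}(t,\cdot)$ by induction on $n$ via a convex combination of controls. Since the dynamics of $Y$ are affine in $y$ for common controls $(Z,\hat a,\bar{\tau}_N,\bar{\eta}_N)$, the process $Y^{t,\lambda y_1+(1-\lambda)y_2}$ equals $\lambda Y^{t,y_1}+(1-\lambda)Y^{t,y_2}$; the limited-liability constraint $Y\geq 0$ and each jump constraint $\eta_i\leq Y_{\tau_i-}$ are preserved by convex combination; and the terminal cost $-y^\gamma$ together with the jump cost $-(1+k)\eta^\gamma$ are concave. Combined with the induction hypothesis, which ensures the $\mathcal{M}$-operation preserves concavity (joint concavity of $(y,\eta)\mapsto V^{p,n-1}(t,y-\eta)-(1+k)\eta^{\gamma}$ over the convex constraint $\{0\leq \eta\leq y\}$), this yields concavity of $V^{p,n}(t,\cdot)$. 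For Part 3, in the intervention region $V^{p,n}(t,y)=\mathcal{M}V^{p,n-1}(t,y)$, and the optimal bonus is the maximizer of the strictly concave function $\phi(\eta):=V^{p,n-1}(t,y-\eta)-(1+k)\eta^\gamma$ on $[0,y]$. Uniqueness and the subdifferential characterization via $\partial_y^+V^{p,n-1}(t,y-\eta)>-\gamma(1+k)\eta^{\gamma-1}$ follow from strict concavity of $\eta\mapsto(1+k)\eta^\gamma$. To rule out $\eta^{n,*}(t,y)=y$ I would use the trivial-strategy lower bound $V^{p,n-1}(t,\epsilon)\geq -e^{-\rho(T-t)+r\gamma(T-t)}\epsilon^\gamma$, obtained by taking $Z\equiv 0$, $\hat a\equiv 0$ and no further jumps, and compare
\[
\phi(y-\epsilon)-\phi(y)=V^{p,n-1}(t,\epsilon)+(1+k)\bigl(y^\gamma-(y-\epsilon)^\gamma\bigr);
\]
since $\gamma>1$, the first term is $o(\epsilon)$ while the second is of order $\epsilon$, so the increment is strictly positive for small $\epsilon>0$, forcing the maximizer into the interior.

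The hardest part will be the joint handling of the unbounded control $Z$, which forces the gradient-constraint branch and prevents a direct classical-solution treatment, together with the iterative passage of continuity and the comparison principle through the variational-inequality layers. In particular, the comparison argument at each level must be tailored to the quadratic-growth Hamiltonian with the $-v_{yy}\geq 0$ concavity constraint and a possibly non-smooth obstacle $\mathcal{M}v^{p,n-1}$; the standard doubling-of-variables technique needs a judicious penalization against unbounded $Z$ (using Assumption \ref{assm::enoughgammah(a)} and the polynomial growth in $y$ inherited from the terminal $g^n$) before the usual viscosity machinery can conclude.
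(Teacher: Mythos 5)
Your plan for Part~1 follows essentially the same route as the paper: an iterated optimal stopping reduction to the Bellman identity $\hat V^{p,n}(t,y)=\sup\mathbb E\bigl[\int_t^{\tau_1} \rho e^{-\rho(s-t)}\hat a_s\,ds + e^{-\rho(\tau_1-t)}\mathcal M\hat V^{p,n-1}(\tau_1,\bar Y_{\tau_1})\bigr]$, the weak DPP of Bouchard--Touzi/Bouchard--Nutz, viscosity sub/supersolution derivation, explicit smooth super/subsolution barriers to control growth and the boundary at $y=0$ and $t=T$, and a layer-by-layer comparison principle. For Part~3 your no-firing argument is a legitimately different (and rather clean) variant of the paper's KKT reasoning: perturbing $\eta=y$ downward by $\epsilon$ and using $V^{p,n-1}(t,\epsilon)\geq -Ce^{r\gamma(T-t)}\epsilon^\gamma$ together with $(1+k)[y^\gamma-(y-\epsilon)^\gamma]\sim\gamma(1+k)y^{\gamma-1}\epsilon$, the $\epsilon$-term dominates the $\epsilon^\gamma$-term since $\gamma>1$, so the maximizer is interior; the paper instead derives a contradiction from complementary slackness and $\partial_y^+V^{p,n-1}(t,0)\geq 0$. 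Both are correct.

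The genuine gap is in Part~2. Your convex-combination-of-controls argument for concavity does not close. For a \emph{fixed} control $(Z,\hat a,\bar\tau_N,\bar\eta_N)$ the flow $y\mapsto Y^{t,y}$ is indeed affine and the cost is concave, so $y\mapsto J(t,y;\nu)$ is concave for each fixed admissible $\nu$. But the value function is a \emph{supremum} over $\nu$, and a pointwise supremum of concave functions need not be concave. To conclude you would have to show that for any $\nu_1\in\mathcal U^n(t,y_1)$ and $\nu_2\in\mathcal U^n(t,y_2)$ there is a $\nu^\lambda\in\mathcal U^n(t,\lambda y_1+(1-\lambda)y_2)$ with $J(t,y^\lambda;\nu^\lambda)\geq\lambda J(t,y_1;\nu_1)+(1-\lambda)J(t,y_2;\nu_2)$. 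The obvious candidate $\nu^\lambda=\lambda\nu_1+(1-\lambda)\nu_2$ is blocked: the stopping-time components cannot be averaged; the drift term $h(\hat a)$ and the best-response constraint $\hat a\in\hat{\mathcal A}(Z)=\partial h^\star(Z)$ are convex but not affine in the controls, so the affine-flow identity fails for a mixed control; and the admissible sets $\mathcal U^n(t,y_1)$ and $\mathcal U^n(t,y_2)$ differ because of the state constraint $Y\geq 0$. The paper sidesteps all of this: concavity is read off directly from the variational inequality already established in Part~1, since the branch $-v^{p,n}_{yy}\geq 0$ of the supersolution property (forced by the unbounded supremum over $z$ in $\mathcal G$) is exactly viscosity concavity in $y$. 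You should just cite that branch rather than introduce a separate probabilistic argument.
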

\newpage
\begin{remark}\hfill
    \begin{enumerate}[label=\roman*)]
    \item In Assumption \ref{assm::enoughgammah(a)}, the condition $\gamma>2$ ensures that the agent’s degree of risk aversion is sufficiently high. Since the effort set is $A:=[0,\bar{a}]$, the assumption \ref{assm::enoughgammah(a)} implies that $h$ is strictly convex on this set which tantamounts to requiring the curvature of the marginal cost of effort to be large enough. From a mathematical standpoint, these assumptions are instrumental in constructing smooth super- and subsolutions, which in turn are used to establish the well-posedness of the corresponding variational inequality as shown in Appendix \ref{appendix::SB}. From an economic perspective, a high level of risk aversion, together with a rapidly increasing marginal burden of effort, facilitates a stable and economically consistent characterization of the principal’s value function.
    \item Theorem \ref{partialconcave}.2 provides an important intermediate result that enables us to establish Theorem \ref{partialconcave}.3. In particular, the condition $\eta^{n,*}(t,y) < y $, for all $n \in \{1,\ldots, N\}$, corresponds to the no-firing condition, indicating that it is never optimal for the principal to terminate the contract prematurely by actively dismissing the agent.
    \end{enumerate}
\end{remark}

\section{Numerical results}\label{numerics}
In this section, we use numerical methods to extract economic insights from the solutions derived in the previous sections. We fix the agent’s utility function $u(y) := y^{1/3}$, the cost of effort  $h(a) := \frac{1}{2}a^{2} + a$, the action set $A := [0,4]$, and the transaction ratio $k = 0.1$. We approximate the solution to the iterated HJB inequalities using the numerical scheme presented in \cite{reisinger2018penalty}. After verifying the assumptions in \cite[Assumption 1, Remark 1]{reisinger2018penalty}, we apply the convergence result in \cite[Theorem 4.7]{reisinger2018penalty}, which guarantees that the numerical solution converges to the unique viscosity solution of the recursive system of HJB variational inequalities \eqref{vnpdevi}. By Theorem \ref{hjbvi:pp:in}, this corresponds to the principal’s value function.

\medskip
Given $\epsilon>0$, we propose the following penalized problem for deriving the principal's value $V^{p,n}, n \in \{1,\cdots,N\}$ sequentially
\begin{small}
\begin{equation}\label{Vpkue}
    \begin{split}
        0 & =G^\e\left(t,y, u^{\e}, D u^{\e}, D^2 u^{\e}\right) \\
        & = \begin{cases} - u_t^\e- \underset{z \in [0,\infty),\hat{a} \in \partial h^{\star}(z)}{\sup}\{A^{z,\hat{a}} u^\e+&f^{z,\hat{a}}(u^{\e}) \}-\frac{1}{\e}\left(\mathcal{M}V^{p,n-1}-u^\e\right)^{+} ,(t,y) \in [0,T) \times [0,\infty),\\
        u^\e(T, y) = g^{n}(y) ,\quad y \in [0,\infty) ,
        &u^\e(t, 0) = 0  ,\quad t\in [0,T],
        \end{cases}
    \end{split}
\end{equation}    
\end{small}
where $A^{z,\hat{a}} u^{\e}(t, y)= r(y + h(\hat{a}))u^{\e}_y  + \frac{1}{2}u^{\e}_{yy}r^{2}\sigma^{2}z^{2}$, $f^{\hat{a}}\left(u^{\e}\right) = \rho(\hat{a} -u^{\e} )$, and $u^{\e}$ is a penalized approximation to $V^{p,n}$, $n \in \{1 ,\cdots, N\}$.

Let $U_i^m$ denote the discrete approximation of the solution to \eqref{Vpkue} at the node $(t_m, y_i)$. We denote by $\Delta_{+} U_i^m$ (resp. $\Delta_{-} U_i^m$) the one-step forward (resp. backward) difference of $U$ along the $y$–coordinate, and by $\Delta U_i^m=\left(\Delta_{+} U_i^m+\Delta_{-} U_i^m\right)$ the central difference of $U$ at the node $(t_m, y_i)$.



Next, we propose an implicit numerical scheme adapted to our setting. For simplicity, we consider a uniform spatial grid $\left\{y_i\right\}_i=h \mathbb{Z^{+}}$ on $\mathbb{R}^{+}$ and a time partition $\left\{t_m\right\}_{m=0}^M$ on $[0, T]$ with $\max _m\left|t_{m+1}-t_m\right|=\Delta t$. The implicit formulation is suitable for handling large time steps, especially when the penalty parameter $\frac{1}{\epsilon}$ is large. The scheme is therefore defined as follows:
$U_i^T=g^{n}\left(y_i\right)$, for all $i \in  \mathbb{Z^{+}}$, and for any given $m\in \{0, \ldots, M-1\}$
\begin{equation}\label{GhU}
    \begin{split}
    0 & =G_h\left(t_{m}, y_i, U_i^m,\left\{U_a^{b+1}\right\}_{(a, b) \neq(i, m)}\right) \\
& =\inf_{z \in [0,\infty), \hat{a} \in \hat{\mathcal{A}}(z)}\left(\frac{U_i^m-U_i^{m+1}}{\Delta t}-A^{z,\hat{a}} U_i^m-\bar{f}^{\hat{a}}\left( U_i^m\right)\right), \quad i \in \mathbb{Z}^d
    \end{split}
\end{equation}
where $A_{h}^{z,\hat{a}} U_i^m =r(y + h(\hat{a})) \frac{\Delta U_i^m}{2 h}  + \frac{1}{2} r^{2}\sigma^{2}z^{2} \frac{\Delta_{+} U_i^m-\Delta_{-} U_i^m}{h}$, $\bar{f}^{\hat{a}}\left( U_i^m\right) = \rho(\hat{a} - U_i^m ) + \frac{1}{\e}\left(V^{p,n-1}- U_i^m\right)^{+} $.\\
Then we focus on the implementation of (\ref{GhU}) through a policy iteration method. Firstly, we truncate the discrete equation (\ref{GhU}) by localizing it onto a chosen bounded computational domain, specifying the behaviour of the solution outside it. Hence, we consider the following finite-dimensional problem: for any given $u^{m+1} \in \mathbb{R}^I$, we aim to find $u \in \mathbb{R}^I$ such that for each $i \in \mathcal{I}:=\{1, \ldots, I\}$,
$$
0=\mathcal{G}_h^{m}[u]_i= \inf_{z \in [0,\infty), \hat{a} \in \partial h^{*}(z)} \left(\frac{u_i^m-u_i^{m+1}}{\Delta t} - A^{z,\hat{a}} u_i^m-\bar{f}^{\hat{a}}\left( u_i^m\right)\right).
$$
Let $u^{m+1}$ be a given solution at the previous discrete time point, $\mathbf{z} \in \mathbb{R^+}^I$, and $u \in \mathbb{R}^I$. We introduce the gradient of $\mathcal{G}_h^{m}[u]$ as follows
\begin{equation}\label{HessianG}
    \mathcal{L}^{\mathbf{z}}[u]_i:= \inf_{\hat{a} \in \partial h^{*}(z)} \left(I-\Delta t A^{z_i,\hat{a}}\right)_i-\Delta t\left(-\rho \cdot I+\zeta^{+}[u]_i\right), \quad i=1, \ldots, I,
\end{equation}
where $\mathcal{L}^{\mathbf{z}}[u]_i$ denotes by the $i$-th row of matrix $\mathcal{L}^z[u]$, $\zeta^{+}[u]_i$ is the digonal matrix: $\zeta^{+}[u] = \{\zeta_{ij}[u]\}$ with  $\zeta_{i i}[u]=- \frac{1}{\e} 1_{\left\{\mathcal{M}V^{p,n-1}\left(t_m, y_i\right)-u_i>0\right\}}$, for each $i \in \mathcal{I}$.

Our policy iteration algorithm is an adaptation to (\cite{reisinger2018penalty}, Algorithm 1).
\vspace{-0.2cm}
\begin{Algorithm}
Set $u^{(0)}=u^{m+1}$. Given $u^{(k)}, k \geq 0$, the next iterate $u^{(k+1)}$ is computed as follows:
\begin{enumerate}
    \item Policy improvement step: Compute $\mathbf{z}^{(k)}=\left\{z_i^{k}\right\}_{i=1}^I$ such that for each $i \in \mathcal{I}$, $$
    \mathbf{z}^{k} \in \underset{\mathbf{z} \in \mathbb{R^+}^I}{\arg \min } \mathcal{G}\left[u^{(k)}\right]_i.
    $$ 
    \vspace{-0.6cm}
    \item Policy evaluation step: Compute $u^{(k+1)} \in \mathbb{R}^I$ by solving
    $$    \mathcal{G}\left[u^{(k)}\right]+\mathcal{L}^{(k+1)}\left[u^{(k)}\right]\left(u^{(k+1)}-u^{(k)}\right)=0,
    $$
    where $\mathcal{L}^{(k+1)}\left[u^{(k)}\right]$ is the matrix (\ref{HessianG}) evaluated at the control $z^{(k+1)}$ and the iterate $u^{(k)}$.
    \end{enumerate}
\end{Algorithm}
After introducing the policy iteration algorithm, we proceed to its implementation. We consider a bounded truncation region $y \in [0, 2]$, with the number of spatial grid points set to $I = 40$. The time horizon is fixed at $T = 2$, with the number of time steps set to $M = 100$. The penalization parameter is chosen as $\epsilon = 0.001$.
\subsection{Comparison between first-best and second-best}
\vspace{-0.3cm}
\begin{figure}[H]
     \centering
     \begin{subfigure}[b]{0.4\textwidth}
         \centering
         \includegraphics[width=\textwidth]{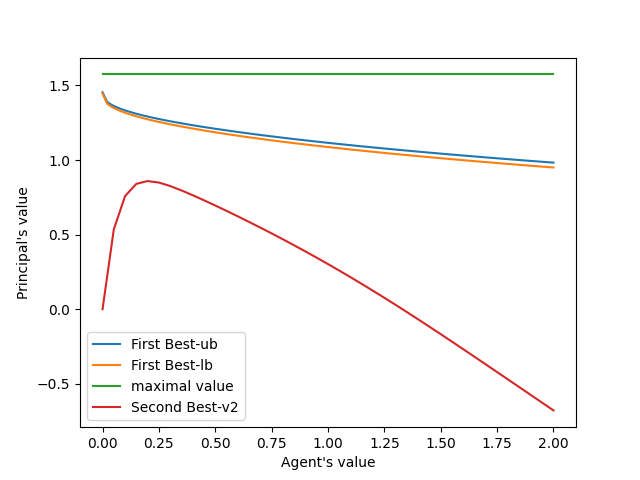}
         \caption{Principal's value ($\delta \gamma < 1,\delta  = 0.36$).}
         \label{fig:fbsbcmp036}
     \end{subfigure}
     \quad
     \begin{subfigure}[b]{0.4\textwidth}
         \centering
         \includegraphics[width=\textwidth]{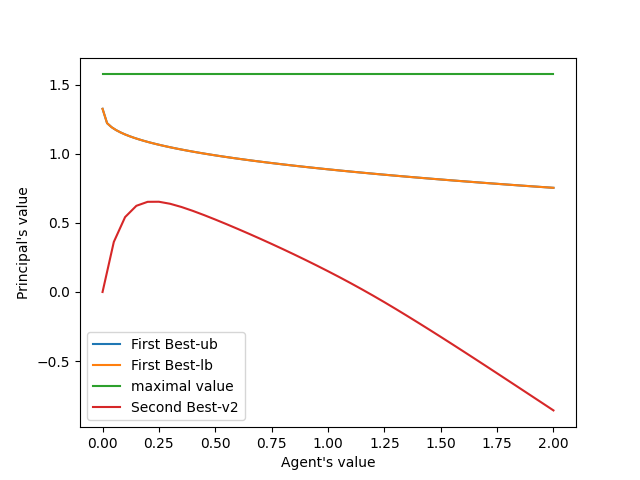}
         \caption{Principal's value ($\delta \gamma > 1,\delta < 1,\delta  = 0.4$)}
         \label{fig:fbsbcmp040}
     \end{subfigure}
     \quad
     \begin{subfigure}[b]{0.4\textwidth}
         \centering
         \includegraphics[width=\textwidth]{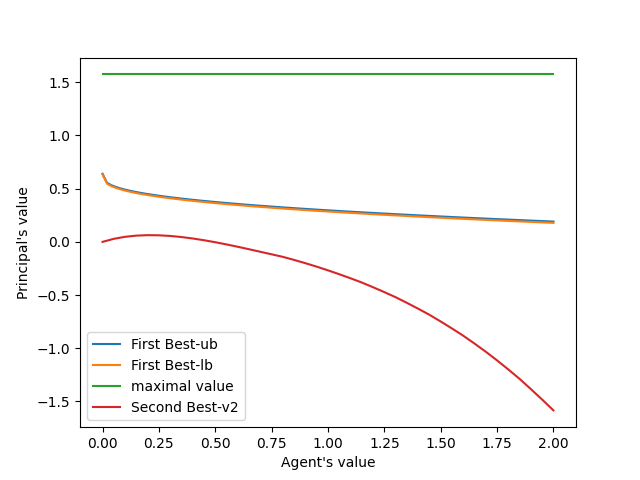}
         \caption{Principal's value ($\delta > 1,\delta  = 1.08$).}
         \label{fig:fbsbcmp108}
     \end{subfigure}
        \caption{Comparison of the Principal’s value under the first-best and second-best cases.}
        \label{fig:fbsbcmpdiffdeltas}
\end{figure}

The subfigures in Figure \ref{fig:fbsbcmpdiffdeltas} compare the maximal value attainable by the principal, the upper and lower bound estimators of the first-best value (“First Best-ub”, “First Best-lb”), and the principal’s value function $V^{p,2}(0, y)$ (“Second Best-v2”) at the beginning of the contract. Specifically, these results correspond to the first-best scenario in which the principal offers two bonus payments with flexible timings and one fixed terminal payment, with the associated value denoted by $V^{\mathrm{FB},\mathrm{P}}$.
\begin{enumerate}
    \item  Subfigure (\ref{fig:fbsbcmp036}) illustrates the case in which the principal is significantly more impatient than the agent, i.e., $\delta \gamma < 1$.
    \item Subfigure (\ref{fig:fbsbcmp040}) addresses the situation where the principal is more impatient than the agent, but not excessively so, i.e., $\gamma \delta > 1$ and $\delta < 1$.
    \item Subfigure (\ref{fig:fbsbcmp108}) depicts the case in which the agent is more impatient than the principal, i.e., $\delta > 1$.
\end{enumerate}
From the three subfigures in Figure \ref{fig:fbsbcmpdiffdeltas}, several observations follow. First, they partially confirm Theorem \ref{thm::FB}.4, namely that, for a fixed number of bonus payments—even when their timings are flexible—the first-best problem does not degenerate to the maximal value $\bar{a}(1 - \mathrm{e}^{-\rho T})$. Second, in all three cases, the principal’s first-best value serves as an upper bound for the second-best value $V^{p,2}(0, y)$, implying that $V^{p,2}(0, y)$ is ultimately decreasing in $y$. Finally, a comparison across the subfigures shows that, under the first-best setting, the principal benefits from greater levels of agent patience: the principal’s value decreases in $\delta>0$. 

\begin{figure}[H]
     \centering
     \begin{subfigure}[b]{0.4\textwidth}
         \centering
         \includegraphics[width=\textwidth]{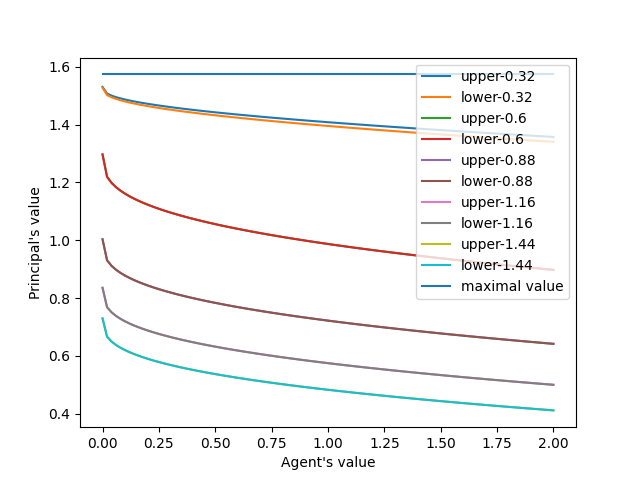}
         \caption{Principal's value under different impatience ratios $\delta$.}
         \label{fig:fbdiffdeltasnum4}
     \end{subfigure}
     \quad
     \begin{subfigure}[b]{0.4\textwidth}
         \centering
         \includegraphics[width=\textwidth]{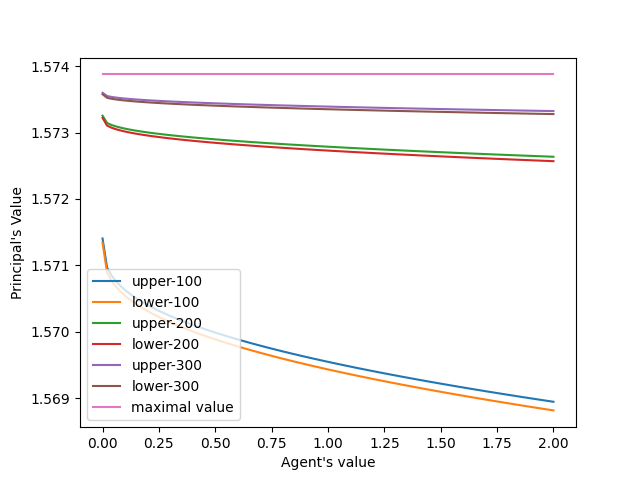}
         \caption{Principal's value under different payment schemes.}
         \label{fig:fbdiffnumcmp104}
     \end{subfigure}
        \caption{Principal's value with different ratio $\delta$ and different payment scheme.}
        \label{fig:fbdiffdeltaslimnum}
\end{figure}

Figure \ref{fig:fbdiffdeltaslimnum} illustrates the impact of $\delta$ and the number of bonus payments on the principal’s first-best value. Subfigure \ref{fig:fbdiffdeltasnum4} reports the upper and lower bound estimates of $V^{\mathrm{FB},\mathrm{P}}$ when the principal offers $N=4$ bonus payments under different values of $\delta$. For example, “upper–0.32” (“lower–0.32”) denotes the corresponding bounds when $\delta = 0.32$, confirming that the principal benefits from greater agent patience. Subfigure \ref{fig:fbdiffnumcmp104} compares the principal’s value across different numbers of bonus payments under the same $\delta = 1.04 > 1$, showing that increasing the number of payments improves the principal’s value. Taken together, these results indicate that the principal consistently benefits from both a smaller $\delta$ and a larger number of bonus payments, with the value converging toward the maximal level $\bar{a}(1 - \mathrm{e}^{-\rho T})$ as the number of payments increases.
\subsection{Discretionary vs. scheduled  payments}

Figure \ref{fig:sb_pv_uneven_gammadelta<1} compares the principal’s value at the beginning of the contracting period in two scenarios: one in which the principal is allowed to choose a flexible timing for a single bonus payment, and another in which the principal must follow a fixed schedule with two payments. For example, the label “$v1\text{–}0.24$” refers to the case in which the principal makes two payments—one at time $0.24$ and the other at the end of the contracting period, $T = 2$. In contrast, $v^{p,1}$ denotes the principal’s value at the beginning of the contract when a single bonus payment is made at a flexible time of the principal’s choosing. The results clearly show that the principal’s value under the flexible bonus-payment scheme consistently dominates that of any fixed two-payment scheme.

\begin{figure}[H]
     \centering
     \begin{subfigure}[b]{0.47\textwidth}
         \centering
         \includegraphics[width=\textwidth]{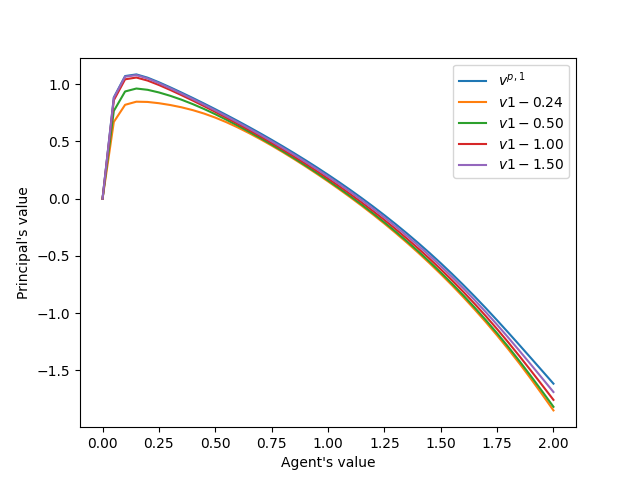}
         \caption{Principal's value ($\delta \gamma < 1,\delta  = 0.2$).}
         \label{fig:sb_pv_uneven-delta-0.2}
     \end{subfigure}
     \quad
     \begin{subfigure}[b]{0.47\textwidth}
         \centering
         \includegraphics[width=\textwidth]{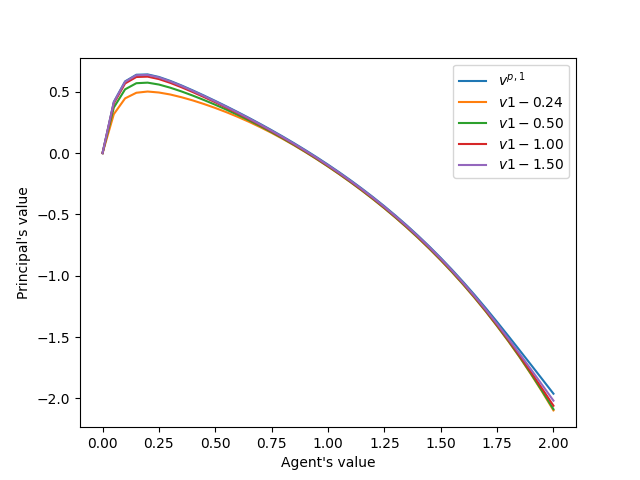}
         \caption{Principal's value ($\delta \gamma < 1,\delta  = 0.32$).}
         \label{fig:sb_pv_uneven-delta-0.32}
     \end{subfigure}
        \caption{Scheduled vs. discretionary bonuses when the principal is excessively more impatient than the agent.}
        \label{fig:sb_pv_uneven_gammadelta<1}
\end{figure}

Moreover, when $\delta \gamma < 1$, meaning that the agent is relatively patient, the principal’s value under a fixed-payment scheme increases as the first payment date is delayed. Specifically, a fixed scheme with a later first payment yields a higher principal value than one with an earlier first payment. This suggests that the principal benefits more from fixed-timing schemes when the first bonus payment is scheduled closer to the end of the contracting period.

\begin{figure}[H]
     \centering
     \begin{subfigure}[b]{0.47\textwidth}
         \centering
         \includegraphics[width=\textwidth]{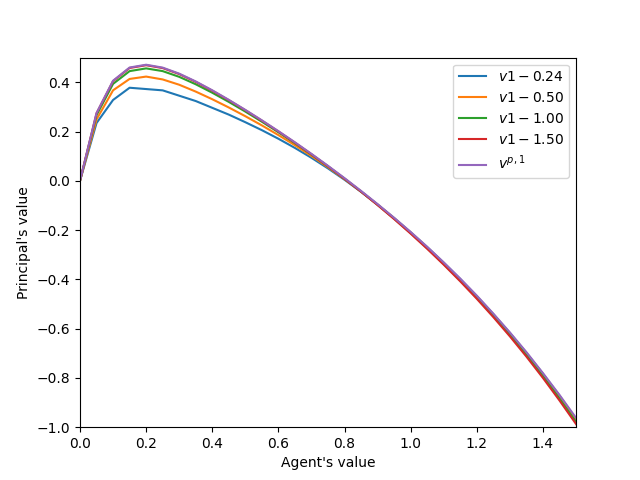}
         \caption{Principal's value ($\delta \gamma > 1,\delta < 1,\delta  = 0.4$).}
         \label{fig:sb_pv_uneven-delta-0.4}
     \end{subfigure}
     \quad
     \begin{subfigure}[b]{0.47\textwidth}
         \centering
         \includegraphics[width=\textwidth]{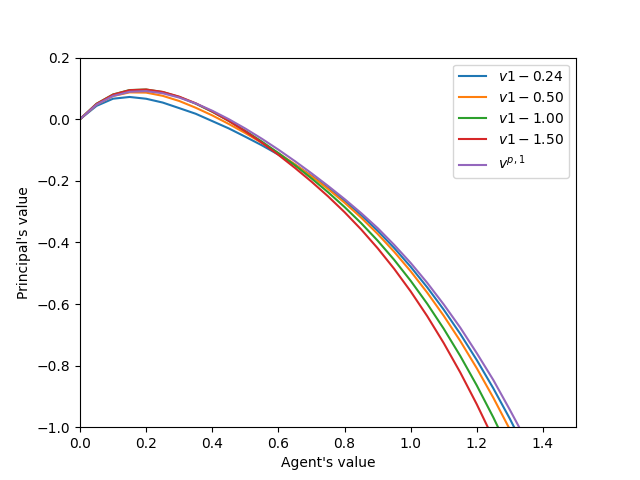}
         \caption{Principal's value ($\delta \gamma > 1,\delta < 1,\delta  = 0.8$).}
         \label{fig:sb_pv_uneven-delta-0.8}
     \end{subfigure}
        \caption{Scheduled vs. discretionary bonuses when the principal is more impatient than the agent, but not excessively.}
        \label{fig:sb_pv_uneven_gammadelta>1&delta<1}
\end{figure}

Next, we consider the case in which the principal is more impatient than the agent, but not excessively so ($\gamma \delta > 1$, $\delta < 1$). Figure \ref{fig:sb_pv_uneven_gammadelta>1&delta<1} compares the deterministic two-payment scheme (fixed timings) with the one-bonus scheme (flexible timing). First, we observe that $v^{p,1}$ strictly dominates the principal’s value at the beginning of the contract, regardless of the reservation-utility level, confirming the robustness of our numerical results. Second, although the model setup differs from that of \cite{alvarez2025sequential}, we obtain a similar numerical pattern: under the deterministic two-payment scheme, when the reservation utility is low, the principal benefits from delaying payments as much as possible, whereas when the reservation utility is high, the principal benefits from advancing the payments.

\begin{figure}[H]
     \centering
     \begin{subfigure}[b]{0.47\textwidth}
         \centering
         \includegraphics[width=\textwidth]{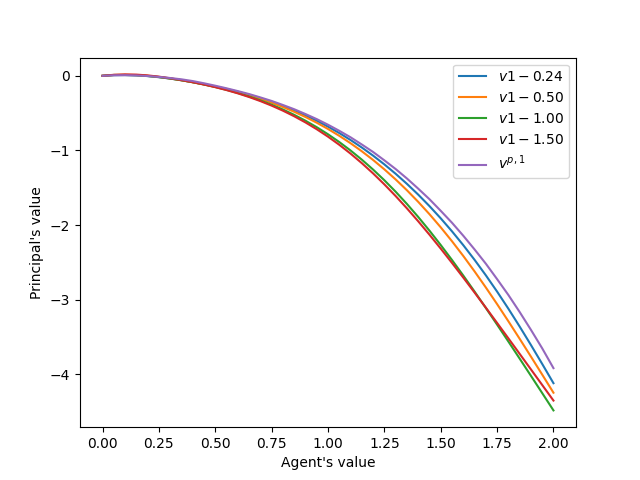}
         \caption{Principal's value ($\delta  > 1,\delta  = 1.08$).}
         \label{fig:sb_pv_uneven-delta-1.08}
     \end{subfigure}
     \quad
     \begin{subfigure}[b]{0.47\textwidth}
         \centering
         \includegraphics[width=\textwidth]{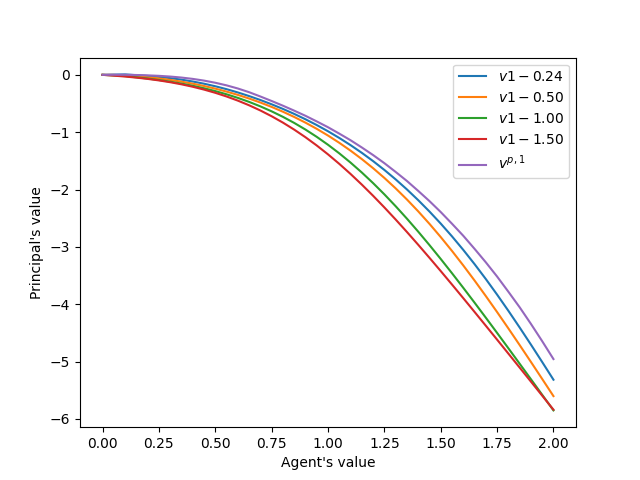}
         \caption{Principal's value ($\delta > 1,\delta  = 1.4$).}
         \label{fig:sb_pv_uneven-delta-1.4}
     \end{subfigure}
        \caption{Scheduled vs. discretionary bonuses when the agent is more impatient than the principal.}
        \label{fig:sb_pv_uneven_gammadelta>1}
\end{figure}

Similar to Figure \ref{fig:sb_pv_uneven_gammadelta<1}, Figure \ref{fig:sb_pv_uneven_gammadelta>1} illustrates the case in which the agent is more impatient than the principal, i.e., $\delta > 1$. As in Figure \ref{fig:sb_pv_uneven_gammadelta<1}, we observe that the principal’s value function—when the principal has the flexibility to choose the timing of the bonus payment at the beginning of the contract—dominates the value functions under fixed-timing bonus schemes.

In contrast to Figure \ref{fig:sb_pv_uneven_gammadelta<1}, however, the principal now benefits from scheduling the bonus payment earlier. As the timing of the bonus payment shifts closer to the start of the contract, the principal’s value increases and converges to the level achieved when the principal has full flexibility in choosing the payment time.

\subsection{The influence of the relative patience}
\begin{figure}[H]
     \centering
     \begin{subfigure}[b]{0.44\textwidth}
         \centering
         \includegraphics[width=\textwidth]{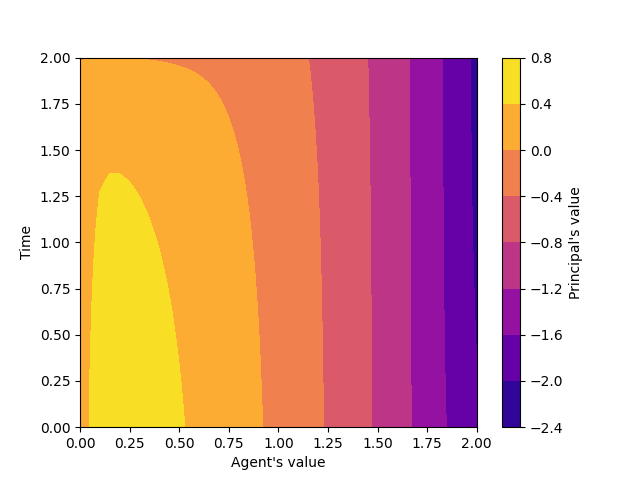}
         \caption{Principal's value $V^{p,1}$.}
         \label{fig:principalvi1v008025}
     \end{subfigure}
     \begin{subfigure}[b]{0.44\textwidth}
         \centering
         \includegraphics[width=\textwidth]{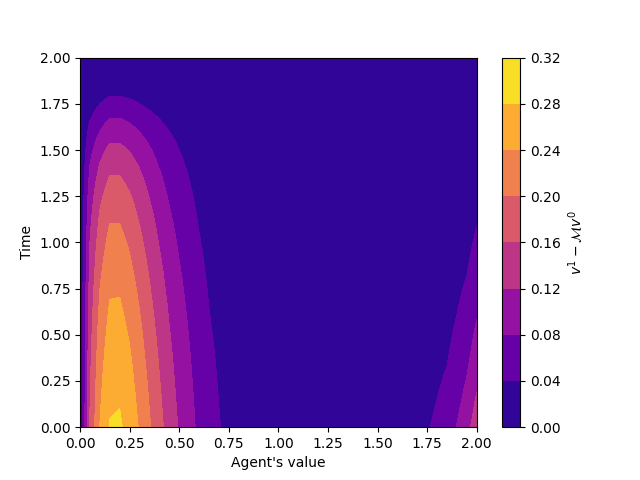}
         \caption{ $V^{p,1} - \mathcal{M}V^{p,0}$.}
         \label{fig:principalvi1diff008025}
     \end{subfigure}
      \begin{subfigure}[b]{0.44\textwidth}
         \centering
         \includegraphics[width=\textwidth]{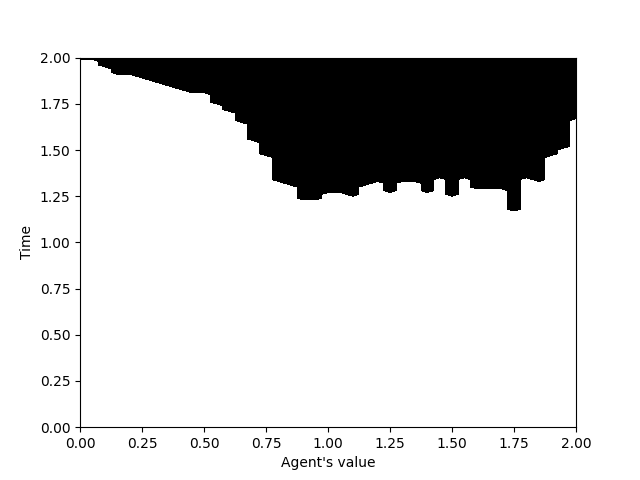}
         \caption{Bonus payment region.}
         \label{fig:interventionvi1008025}
     \end{subfigure}
     \begin{subfigure}[b]{0.44\textwidth}
         \centering
         \includegraphics[width=\textwidth]{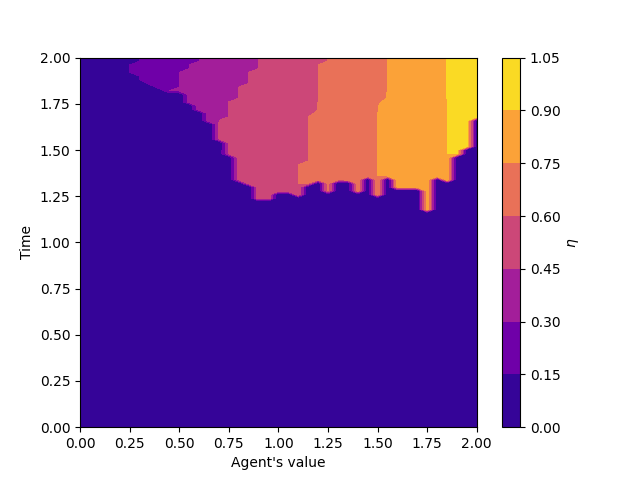}
         \caption{Bonus payment scheme $\eta^{1,*}(t,y)$.}
         \label{fig:scheme008025}
     \end{subfigure}
        \caption{Principal's value and bonus payment structure when $\delta = 0.32, \delta\gamma < 1$.}
        \label{fig:principalvi1008025}
\end{figure}

Figure \ref{fig:principalvi1008025} illustrates the case in which the principal is substantially more impatient than the agent. Subfigure \ref{fig:principalvi1v008025} shows the principal’s value function $V^{p,1}(t, y)$. We observe the presence of informational rent, meaning that at the beginning of the contract, the principal optimally offers the agent a utility level strictly above the reservation value. Subfigure \ref{fig:interventionvi1008025} depicts the intervention region, defined as
$$
\left\{(t, y) \in[0, T) \times(0, \infty): V^{p, 1}(t, y)=\mathcal{M} V^{p, 0}(t, y), \eta^{1,*}(t, y)>0\right\}.
$$
The black region takes the shape of a trapezoid, with the top base longer than the bottom base. This geometry arises from the coexistence of limited liability and a high discount rate $\rho$. Specifically, any downward jump is undesirable for the principal when the agent’s continuation utility is relatively low, as it increases the risk of hitting the zero boundary, at which point the agent ceases exerting effort—a situation detrimental to the principal. At the same time, given the large discount rate, the principal prefers to delay bonus payments to benefit from discounting and thereby reduce the effective cost.

Finally, Subfigure \ref{fig:scheme008025} presents the single-bonus payment scheme $\eta^{1,*}(t, y)$. As expected, it cross-validates the intervention region in Subfigure \ref{fig:interventionvi1008025}. Moreover, it shows that, for fixed $t$, the scheme $\eta^{1,*}(t, y)$ is monotone in $y$.
\begin{figure}[H]
     \centering
     \begin{subfigure}[b]{0.44\textwidth}
         \centering
         \includegraphics[width=\textwidth]{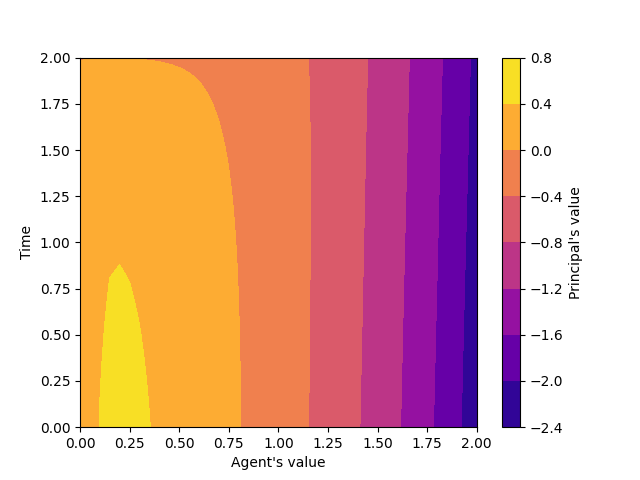}
         \caption{Principal's value $V^{p,1}$.}
         \label{fig:principalvi1v010025}
     \end{subfigure}
     \begin{subfigure}[b]{0.44\textwidth}
         \centering
         \includegraphics[width=\textwidth]{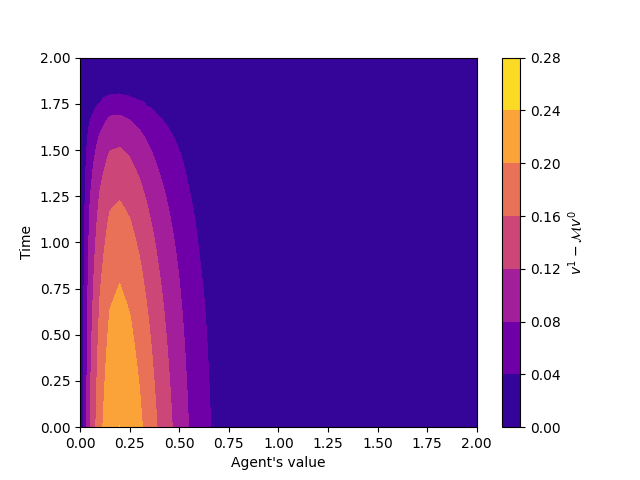}
         \caption{$V^{p,1} - \mathcal{M}V^{p,0}$.}
         \label{fig:principalvi1diff010025}
     \end{subfigure}
      \begin{subfigure}[b]{0.44\textwidth}
         \centering
         \includegraphics[width=\textwidth]{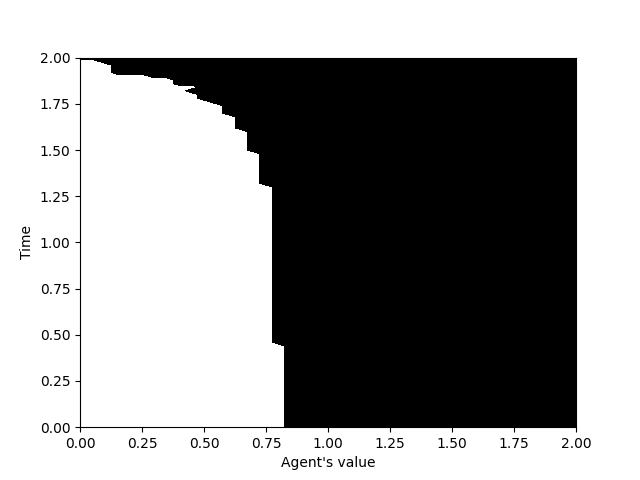}
         \caption{Bonus payment region.}
         \label{fig:interventionvi1010025}
     \end{subfigure}
     \begin{subfigure}[b]{0.44\textwidth}
         \centering
         \includegraphics[width=\textwidth]{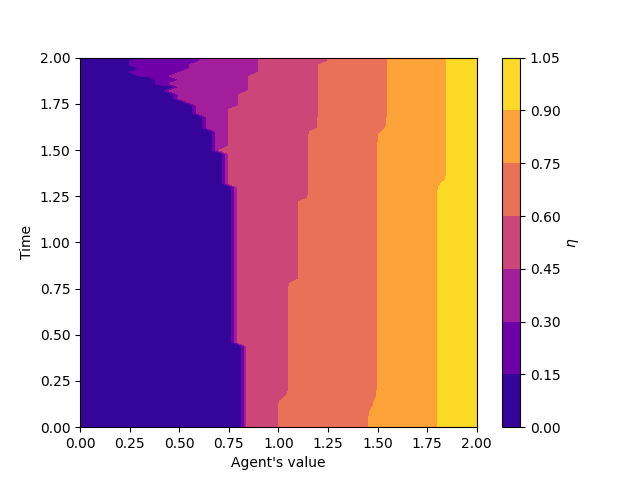}
         \caption{Bonus payment scheme $\eta^{1,*}(t,y)$.}
         \label{fig:scheme010025}
     \end{subfigure}
        \caption{Principal's value and bonus payment structure when $\delta = 0.4, 1 <\delta\gamma, \delta < 1$.}
        \label{fig:principalvi1010025}
\end{figure}

Figure \ref{fig:principalvi1010025} illustrates the case in which the principal is more impatient than the agent, but not excessively so. Similar to Subfigure \ref{fig:principalvi1v008025}, Subfigure \ref{fig:principalvi1v010025} depicts the principal’s value function $V^{p,1}(t, y)$. Informational rent still exists; however, compared with Subfigure \ref{fig:principalvi1v008025}, an increase in the agent’s discount rate is detrimental to the principal. The dark region in Subfigure \ref{fig:interventionvi1010025} represents the positive bonus-payment region. Compared with Subfigure \ref{fig:interventionvi1008025}, the key difference is the emergence of the “golden hello” phenomenon. This means that when the agent commits to the contract with a relatively high reservation utility, the principal optimally offers a positive payment at the start of the contract. This arises due to the increase in the relative impatience ratio, $\tfrac{r}{\rho}$. Moreover, as $r$ increases, the positive bonus-payment region expands.

Finally, Subfigure \ref{fig:scheme010025} shows the single-bonus payment scheme $\eta^{1,*}(t, y)$. Two properties are evident: (i) for fixed $t$, the scheme $\eta^{1,*}(t, y)$ is monotone in $y$, that is, increasing with respect to the agent’s utility; and (ii) for fixed $y$, the principal tends to offer a similar bonus amount across different times.

\begin{figure}[H]
     \centering
     \begin{subfigure}[b]{0.44\textwidth}
         \centering
         \includegraphics[width=\textwidth]{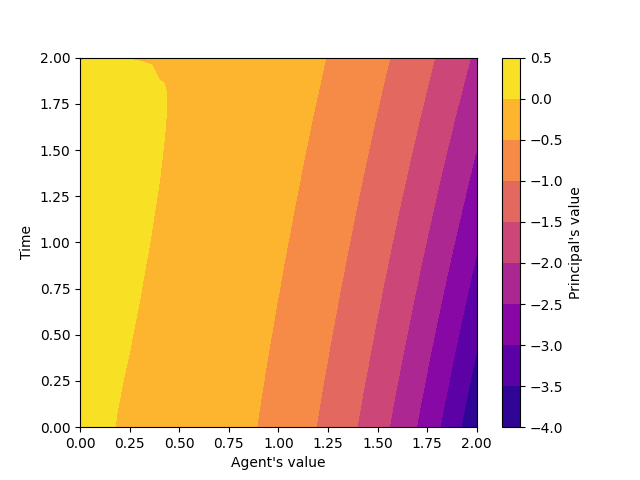}
         \caption{Principal's value $V^{p,1}$.}
         \label{fig:principalvi1v027025}
     \end{subfigure}
     \begin{subfigure}[b]{0.44\textwidth}
         \centering
         \includegraphics[width=\textwidth]{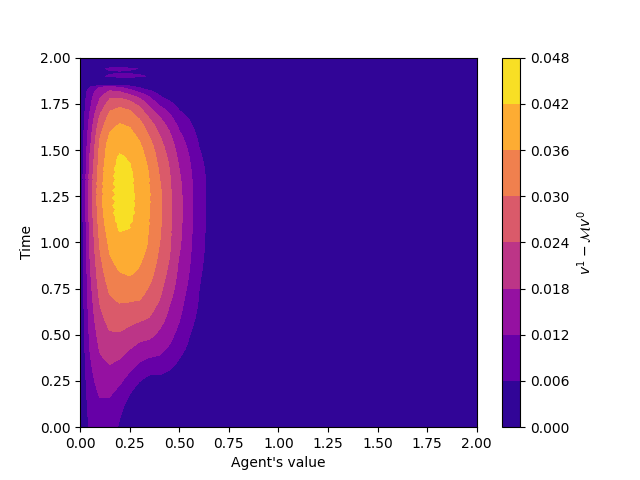}
         \caption{$V^{p,1} - \mathcal{M}V^{p,0}$.}
         \label{fig:principalvi1diff027025}
     \end{subfigure}
      \begin{subfigure}[b]{0.44\textwidth}
         \centering
         \includegraphics[width=\textwidth]{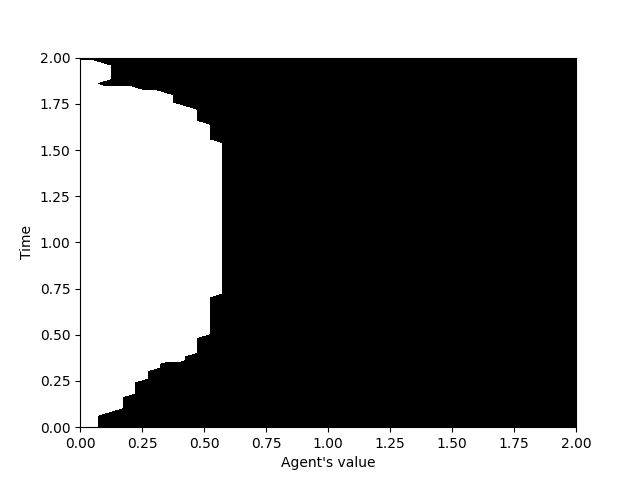}
         \caption{Bonus payment region.}
         \label{fig:interventionvi1027025}
     \end{subfigure}
     \begin{subfigure}[b]{0.44\textwidth}
         \centering
         \includegraphics[width=\textwidth]{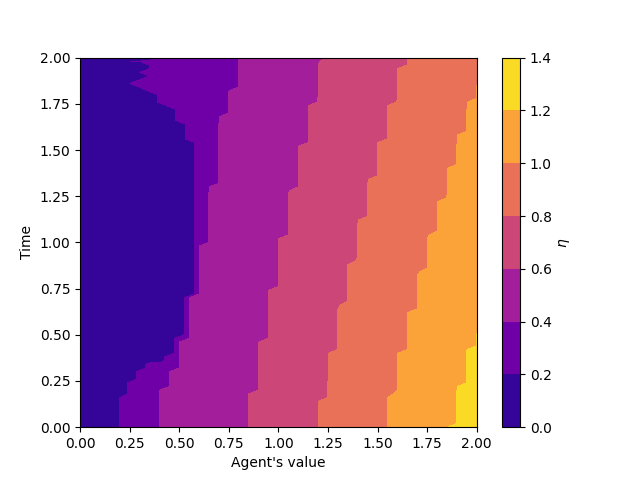}
         \caption{Bonus payment scheme $\eta^{1,*}(t,y)$.}
         \label{fig:scheme027025}
     \end{subfigure}
        \caption{Principal's value and bonus payment structure when $\delta = 1.08,\delta > 1$.}
        \label{fig:principalvi1027025}
\end{figure}

Figure \ref{fig:principalvi1027025} illustrates the case in which the agent is more impatient than the principal. Subfigure \ref{fig:principalvi1v027025} depicts the principal’s value function $V^{p,1}(t, y)$. Contrary to the conclusion in \cite{possamai2024there}, we observe that in the presence of bonus payments, informational rent persists even when the agent is more impatient than the principal. This indicates that informational rent is preserved through the flexibility of bonus-payment timing.

Next, Subfigure \ref{fig:principalvi1diff027025} presents the difference between the principal’s value function with bonus payments and the obstacle $\mathcal{M}V^{p,1}$, which cross-validates with Subfigure \ref{fig:interventionvi1027025} and confirms the robustness of our numerical results. Subfigure \ref{fig:interventionvi1027025} displays the region in which strictly positive bonus payments occur. We observe that, as the agent becomes more impatient, the principal tends to offer a signing bonus to agents with relatively low reservation utilities. However, the principal remains reluctant to provide bonuses to agents whose reservation utility is extremely low, since such agents are likely to exit the project prematurely due to the limited-liability constraint.

Finally, Subfigure \ref{fig:scheme027025} shows the bonus-payment scheme. In general, we find that $\eta^{1,*}(t_1, y) \leq \eta^{1,*}(t_2, y)$ for $t_1 \geq t_2$, reflecting that, as the agent’s impatience increases, the principal must pay more to achieve the same incentive effect.

\subsection{Multiple bonus payments}
In this section, we examine the optimal bonus payments when the principal has multiple rights. In particular, we consider the case in which the principal is allowed to offer $N=4$ bonus payments over the contracting horizon $T = 2$. The cost of effort is specified as $h(a) = \frac{1}{2}a^{2} + a$, the admissible effort set is $A = [0, 4]$, the volatility is fixed at $\sigma = 0.9$, and the inverse utility function is given by $u^{-1}(y) = y^{3}$.

\begin{figure}[H]
     \centering
     \begin{subfigure}[b]{0.44\textwidth}
         \centering
         \includegraphics[width=\textwidth]{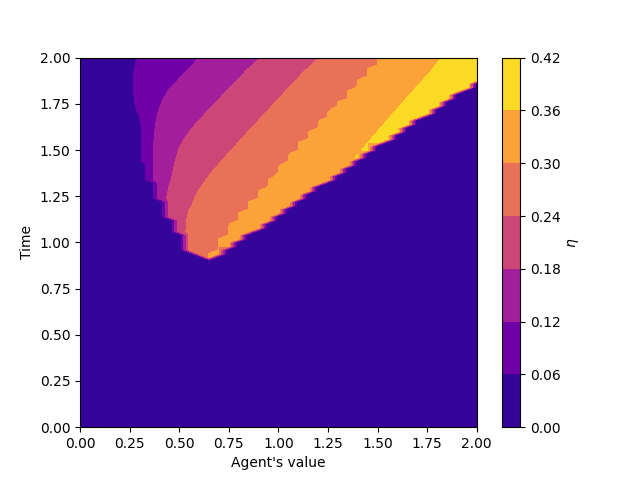}
         \caption{First bonus payment scheme $\eta^{4,*}(t,y)$.}
         \label{fig:scheme007025vi1vi0}
     \end{subfigure}
     \begin{subfigure}[b]{0.44\textwidth}
         \centering
         \includegraphics[width=\textwidth]{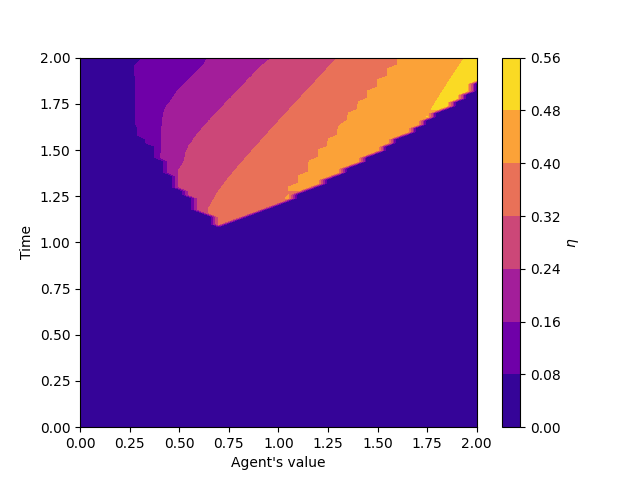}
         \caption{Second bonus payment scheme $\eta^{3,*}(t,y)$.}
         \label{fig:scheme007025vi2vi1}
     \end{subfigure}
      \begin{subfigure}[b]{0.44\textwidth}
         \centering
         \includegraphics[width=\textwidth]{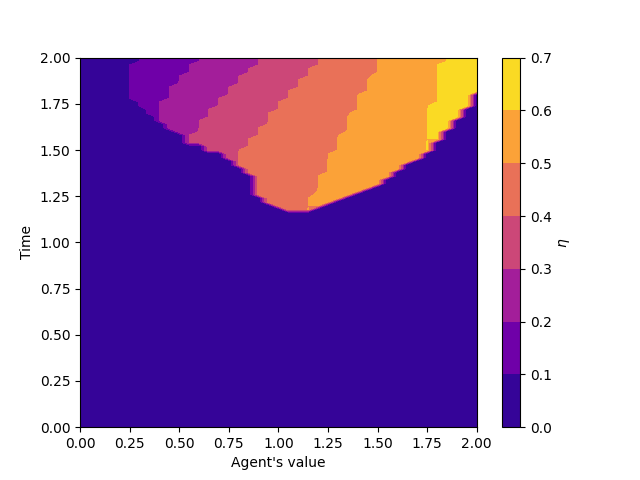}
         \caption{Third bonus payment scheme $\eta^{2,*}(t,y)$.}
         \label{fig:scheme007025vi3vi2}
     \end{subfigure}
     \begin{subfigure}[b]{0.44\textwidth}
         \centering
         \includegraphics[width=\textwidth]{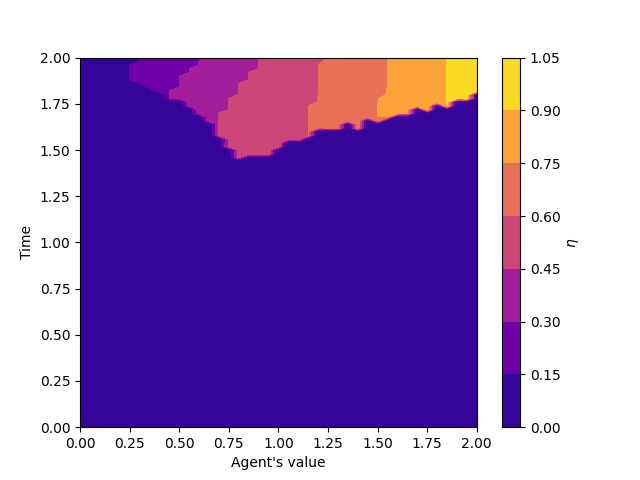}
         \caption{Last bonus payment scheme $\eta^{1,*}(t,y)$.}
         \label{fig:scheme007025vi4vi3}
     \end{subfigure}
        \caption{Principal's bonus payment scheme when $\delta = 0.28$.}
        \label{fig:ppschemes007025vivi-1}
\end{figure}

Figure \ref{fig:ppschemes007025vivi-1} illustrates the bonus-payment scheme when the principal is more impatient than the agent, i.e., when $\delta \gamma < 1$. Here, $\eta^{n,*}(t, y)$ denotes the optimal bonus-payment function characterized in Theorem \ref{hjbvi:pp:in}. A comparison across the subfigures shows that, as the number of bonuses increases, the principal generally provides smaller bonuses. This reflects the fact that having more opportunities to intervene enhances her flexibility in incentivizing the agent. Moreover, the intervention region expands with the number of payments, capturing the principal’s impatience: she tends to trigger the first bonus payment earlier to accommodate her own preferences.

\begin{figure}[H]
     \centering
     \begin{subfigure}[b]{0.44\textwidth}
         \centering
         \includegraphics[width=\textwidth]{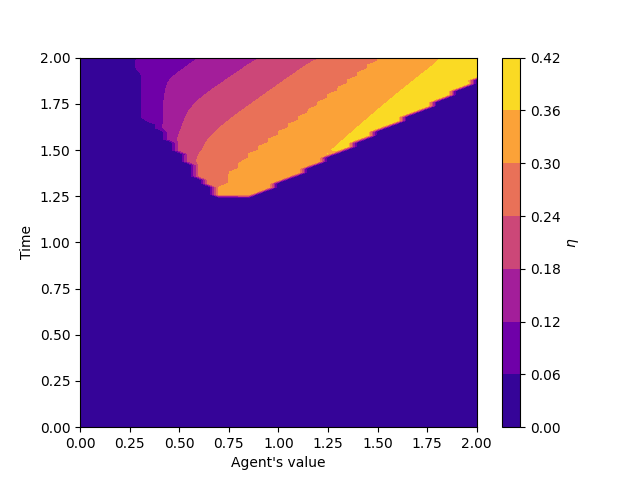}
         \caption{First bonus payment scheme $\eta^{4,*}(t,y)$.}
         \label{fig:scheme010025vi4vi3}
     \end{subfigure}
     \begin{subfigure}[b]{0.44\textwidth}
         \centering
         \includegraphics[width=\textwidth]{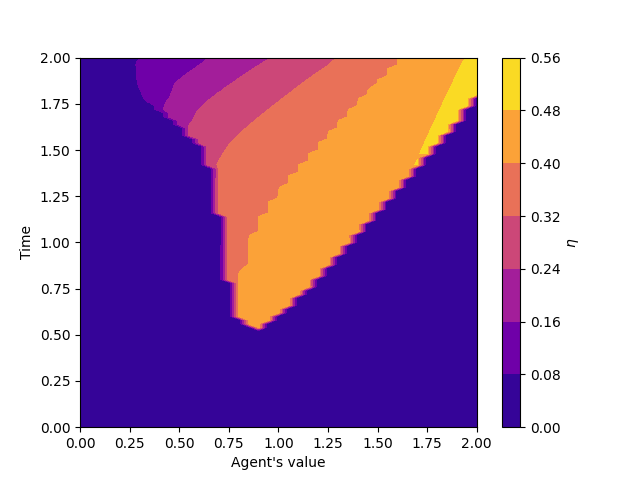}
         \caption{Second bonus payment scheme $\eta^{3,*}(t,y)$.}
         \label{fig:scheme010025vi3vi2}
     \end{subfigure}
      \begin{subfigure}[b]{0.44\textwidth}
         \centering
         \includegraphics[width=\textwidth]{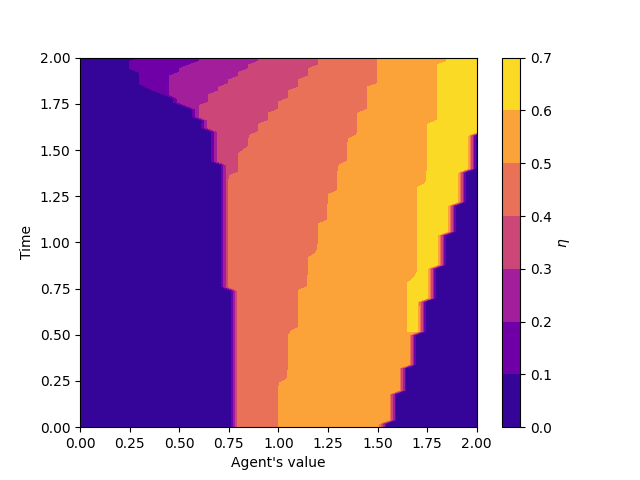}
         \caption{Third bonus payment scheme $\eta^{2,*}(t,y)$.}
         \label{fig:scheme010025vi2vi1}
     \end{subfigure}
     \begin{subfigure}[b]{0.44\textwidth}
         \centering
         \includegraphics[width=\textwidth]{figures/sb_principal_scheme_010_025_vi1.png}
         \caption{Last bonus payment scheme $\eta^{1,*}(t,y)$.}
         \label{fig:scheme010025vi1}
     \end{subfigure}
        \caption{Principal's bonus payment scheme when $\delta = 0.4$.}
        \label{fig:ppschemes010025vivi-1}
\end{figure}

Figure \ref{fig:ppschemes010025vivi-1} illustrates the bonus-payment schemes when the principal is more impatient than the agent, but not excessively so, i.e., $\delta \gamma > 1$ and $\delta < 1$. We first observe that, as the number of bonus payments increases, the “sign-on” bonus gradually disappears. This occurs because the principal gains greater flexibility in designing the payment scheme. Under the two-payment scheme, however, a sign-on bonus still arises when the agent’s reservation utility level is neither too high nor too low. Moreover, as the number of payments increases, the pattern of the first bonus payment in Figure \ref{fig:scheme010025vi4vi3} gradually aligns with that in Figure \ref{fig:scheme007025vi4vi3}, indicating that the additional flexibility from multiple payments can offset the distortions introduced by the choice of $\delta$.

\begin{figure}[H]
     \centering
     \begin{subfigure}[b]{0.44\textwidth}
         \centering
         \includegraphics[width=\textwidth]{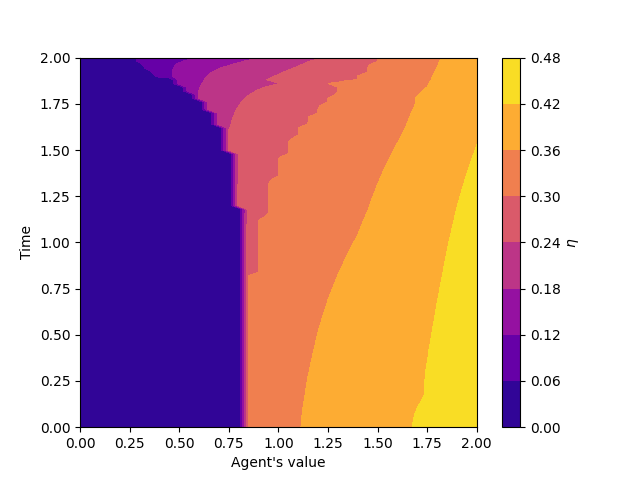}
         \caption{First bonus payment scheme $\eta^{4,*}(t,y)$.}
         \label{fig:scheme027025vi1vi0}
     \end{subfigure}
     \begin{subfigure}[b]{0.44\textwidth}
         \centering
         \includegraphics[width=\textwidth]{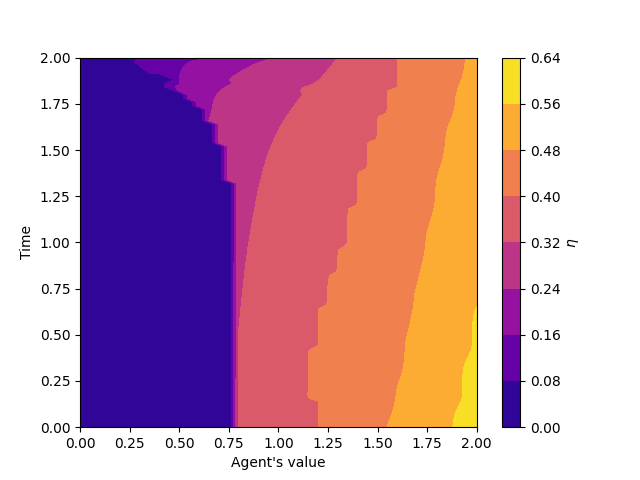}
         \caption{Second bonus payment scheme $\eta^{3,*}(t,y)$.}
         \label{fig:scheme027025vi2vi1}
     \end{subfigure}
      \begin{subfigure}[b]{0.44\textwidth}
         \centering
         \includegraphics[width=\textwidth]{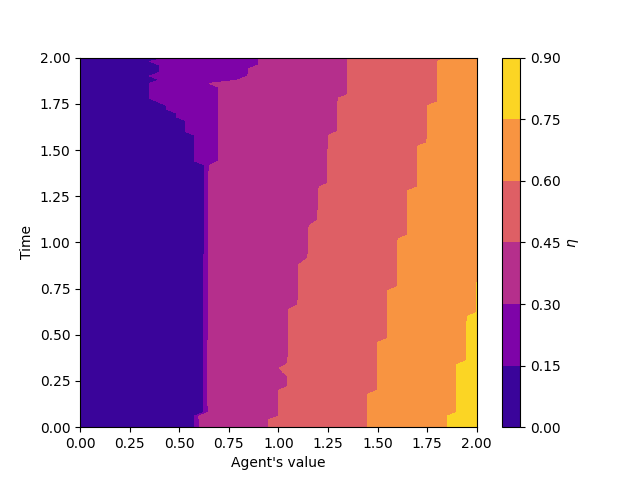}
         \caption{Third bonus payment scheme $\eta^{2,*}(t,y)$.}
         \label{fig:scheme027025vi3vi2}
     \end{subfigure}
     \begin{subfigure}[b]{0.44\textwidth}
         \centering
         \includegraphics[width=\textwidth]{figures/sb_principal_scheme_027_025_vi1.png}
         \caption{Last bonus payment scheme $\eta^{1,*}(t,y)$.}
         \label{fig:scheme027025vi4vi3}
     \end{subfigure}
        \caption{Principal's bonus payment scheme when $\delta = 1.08$.}
        \label{fig:ppschemes027025vivi-1}
\end{figure}
Figure \ref{fig:ppschemes027025vivi-1} depicts the bonus-payment schemes when the agent is more impatient than the principal, i.e., $\delta > 1$. We first observe that, as the number of bonus payments increases, the region of strictly positive payments gradually shrinks. Moreover, the maximum utility delivered to the agent from the first bonus payment is lower than that provided in later payments, reflecting the principal’s enhanced flexibility. This pattern is consistent with the results shown in Figures \ref{fig:ppschemes007025vivi-1} and \ref{fig:ppschemes010025vivi-1}.

Second, due to the agent’s high impatience, the region associated with the sign-on bonus diminishes. As the number of payments increases, the principal increasingly allocates such bonuses to agents with relatively high reservation utilities. However, this effect does not emerge immediately for small increases in the number of payments. A plausible conjecture is that, as the number of payments becomes sufficiently large, the sign-on bonus will eventually vanish.

\appendix
\section{Proof of Theorem \ref{thm::FB} }\label{appendix::FB}
\subsection{Proofs of Theorems \ref{thm::FB}.1, \ref{thm::FB}.2, \ref{thm::FB}.3}
    We address the first-best contracting problem by establishing upper and lower bounds on the principal’s value under varying levels of the agent’s reservation utility. By applying the standard Karush–Kuhn–Tucker (KKT) method, the first-best problem can be reformulated as
\begin{flalign*}
    &\quad v^{FB,P}(u(R))\\
    &:= \inf_{\lambda \leq 0}\sup_{\mathcal{C} \in \mathbb{C},\alpha \in \mathcal{A}} \{\lambda u(R) + J^{\mathrm{P}}(\mathbf{C}, \alpha) - \lambda J^{\mathrm{A}}(\mathbf{C}, \alpha) \}  \\
    &= \inf_{\lambda \leq 0} \left\{\lambda u(R)  + \sup_{(\alpha,\bar{\tau}_N, \bar{\xi}_N, \xi_T)} \mathbb{E}^{\mathbb{P}^{\alpha}} \Bigg[ - \sum_{i=1}^{N} 
    \mathrm{e}^{-\rho \tau_i}  \left(  u(\xi_i)\lambda \mathrm{e}^{(\rho-r)\tau_i} + (1 + k)\xi_i  \right)  \right. \\
    &\quad\quad\quad  \left.     - \mathrm{e}^{-\rho T} \left(u(\xi_T)\lambda \mathrm{e}^{(\rho-r)T} +  \xi_T  \right)+   \int_{0}^{T} \rho \mathrm{e}^{-\rho t}   \left(\alpha_t + h(\alpha_t) \delta \lambda \mathrm{e}^{(\rho-r)t} \right) \mathrm{d}t \Bigg] \right\}\\
    &= \inf_{\lambda \leq 0} \left\{\lambda u(R)  +\sum_{i=1}^{N} \sup_{(\xi_i,\tau_i)}
    \mathrm{e}^{-\rho \tau_i}  (-1 ) \cdot \left(  u(\xi_i)\lambda \mathrm{e}^{(\rho-r)\tau_i} + (1 + k)\xi_i  \right)   \right.\\
    &\quad\quad\quad\left.- \sup_{\xi_T} \mathrm{e}^{-\rho T} \left(u(\xi_T)\lambda \mathrm{e}^{(\rho-r)T} +  \xi_T  \right)+ \int_{0}^{T} \rho \mathrm{e}^{-\rho t} \sup_{\alpha}  \left(\alpha_t + h(\alpha_t) \delta \lambda \mathrm{e}^{(\rho-r)t} \right) \mathrm{d}t \right\}\\
    &=\inf_{\lambda \leq 0} \left\{\lambda u(R)  + \sum_{i=1}^{N} \sup_{(\tau_i)}
   \left\{- \mathrm{e}^{-\rho \tau_i} \left(  F^{*,k}(\lambda \mathrm{e}^{(\rho - r)\tau_i}) \right)  \right\} -  \mathrm{e}^{-\rho T} F^{*,0}(\lambda \mathrm{e}^{(\rho - r)T}) \right.\\
   &\left.\quad\quad\quad+ \int_{0}^{T}\rho \mathrm{e}^{-\rho t}G^{*}(\delta  \lambda \mathrm{e}^{(\rho-r)t} ) dt \right\}, 
\end{flalign*}
where the third equality comes from the fact that maximization over $(\alpha,\bar{\tau}_N, \bar{\xi}_N, \xi_T)$ reduces to the pointwise maximization inside the expectation and the integral. Then, the last inequality follows from the definition of $F^{*,k}$,$F^{*,0}$ and $G^{\star}$
\begin{align*}
    F^{*,k}(\lambda \mathrm{e}^{(\rho-r)\tau_i}) &:=  \inf_{y\geq 0}( y \lambda \mathrm{e}^{(\rho-r)\tau_i} - (1 + k)F(y))  =\inf_{\xi_i \geq 0} \left( u(\xi_i)\lambda \mathrm{e}^{(\rho-r)\tau_i} + (1 + k)\xi_i  \right) \;,\; F(y) = -y^{\gamma},\\
    G^*(\delta  \lambda \mathrm{e}^{(\rho-r)t} ) &:= \sup_{a \in A}(a + h(a) \delta  \lambda \mathrm{e}^{(\rho-r)t})  .
\end{align*}
Now we address the optimization on timings of the bonus payment $(\tau_1,\cdots,\tau_N)$, we consider the $ith$ bonus payment
\begin{equation*}
\frac{\partial - \mathrm{e}^{-\rho \tau_i} F^{*,k}(\lambda \mathrm{e}^{(\rho-r)\tau_i})}{ \partial \tau_i} = \rho \mathrm{e}^{-\rho \tau_i} (-  \lambda \mathrm{e}^{(\rho-r)\tau_i} )^{\frac{\gamma}{\gamma-1}}\left(   - \frac{1}{((1 + k) \gamma)^{\frac{1}{\gamma-1 }}}  \right)\left( \frac{\delta\gamma - 1 }{\gamma}\right).
\end{equation*}
It is clear that the monotonicity of $-\mathrm{e}^{-\rho \tau_i} F^{*,k}(\lambda \mathrm{e}^{(\rho-r)\tau_i})$ depends on the sign of $\delta\gamma - 1$. We now proceed to discuss the subcases based on the relationship between $\delta\gamma$ and 1. 
Firstly, when $\delta\gamma > 1$, based on nonpositivity of $\lambda$ and $\gamma > 1$, we have $\frac{\partial - \mathrm{e}^{-\rho \tau_i} F^{*,k}(\lambda \mathrm{e}^{(\rho-r)\tau_i})}{ \partial \tau_i} \leq 0$ and 
\begin{equation*}
    \begin{split}
    &\quad v^{FB,P}(u(R))\\
    &=\inf_{\lambda \leq 0} \left\{\lambda u(R)  + \sum_{i=1}^{N} \sup_{(\tau_i)}
   \left\{- \mathrm{e}^{-\rho \tau_i} \left(  F^{*}(\lambda \mathrm{e}^{(\rho - r)\tau_i}) \right)  \right\} -  \mathrm{e}^{-\rho T} F^{*}(\lambda \mathrm{e}^{(\rho - r)T}) + \int_{0}^{T}\rho \mathrm{e}^{-\rho t}G^{*}(\delta  \lambda \mathrm{e}^{(\rho-r)t} ) dt \right\}  \\ 
    &\leq \inf_{\lambda \leq 0} \left\{\lambda u(R) - N 
    F^{*}(\lambda) -  \mathrm{e}^{-\rho T} F^{*}(\lambda \mathrm{e}^{(\rho - r)T}) + \int_{0}^{T}\rho \mathrm{e}^{-\rho t}G^{*}(\delta  \lambda \mathrm{e}^{(\rho-r)t} ) dt \right\}  .
    \end{split}
\end{equation*}
Next, we consider one admissible bonus payment timing $\tau_1 = 0, \tau_{i+1} = \tau_i + \zeta, i \in \{0,\cdots,N-1\}$, where $0<\zeta < \frac{T}{N}$. Then we have the lower bound
\begin{equation*}
\begin{split}
  &\quad v^{FB,P}(u(R))\geq
        \inf_{\lambda \leq 0} \bigg\{\lambda u(R)  + \sum_{i=1}^{N} 
    \left\{- \mathrm{e}^{-\rho \tau_i} \left(  F^{*,k}(\lambda \mathrm{e}^{(\rho - r)\tau_i}) \right) \right\}\\
    &\quad\quad\quad\quad\quad\quad\quad\quad\quad\quad\quad\quad\quad\quad -  \mathrm{e}^{-\rho T} F^{*,k}(\lambda \mathrm{e}^{(\rho - r)T}) + \int_{0}^{T}\rho \mathrm{e}^{-\rho t}G^{*}(\delta  \lambda \mathrm{e}^{(\rho-r)t} ) dt \bigg\}.
\end{split}
\end{equation*}
\medskip 
Secondly, when $\delta\gamma < 1$, based on nonpositivity of $\lambda$, $\gamma > 1$, and $\frac{\partial - \mathrm{e}^{-\rho \tau_i} F^{*,k}(\lambda \mathrm{e}^{(\rho-r)\tau_i})}{ \partial \tau_i} \geq 0$, following similar argument, we have the upper bound
$$v^{P,FB}(u(R)) \leq \inf_{\lambda \leq 0} \left\{\lambda u(R) -  (N+1)\mathrm{e}^{-\rho T} F^{*}(\lambda \mathrm{e}^{(\rho - r)T}) + \int_{0}^{T}\rho \mathrm{e}^{-\rho t}G^{*}(\delta  \lambda \mathrm{e}^{(\rho-r)t} ) dt \right\}. $$
Considering admissible bonus payment timing $\tau_N = T - \zeta, \tau_i  = \tau_{i+1} - \zeta  , i \in \{0,\cdots,N-1\}$, where $0<\zeta < \frac{T}{N}$. Then we have the lower bound 
\begin{equation*}
    \begin{split}
    & v^{P,FB}(u(R))
    \geq  \inf_{\lambda \leq 0} \bigg\{\lambda u(R)  + \sum_{i=1}^{N}
   \left\{- \mathrm{e}^{-\rho \tau_i} \left(  F^{*}(\lambda \mathrm{e}^{(\rho - r)\tau_i}) \right) \right\} \\    &\quad\quad\quad\quad\quad\quad\quad\quad\quad\quad\quad\quad-  \mathrm{e}^{-\rho T} F^{*}(\lambda \mathrm{e}^{(\rho - r)T}) + \int_{0}^{T}\rho \mathrm{e}^{-\rho t}G^{*}(\delta  \lambda \mathrm{e}^{(\rho-r)t} ) dt \bigg\}.
    \end{split}
\end{equation*}
\medskip 
Finally, when $\delta\gamma = 1$, $\frac{\partial - \mathrm{e}^{-\rho \tau_i} F^{*,k}(\lambda \mathrm{e}^{(\rho-r)\tau_i})}{ \partial \tau_i} = 0$, we could get 
\begin{equation*}
   v^{FB,P}(u(R)) =  \inf_{\lambda \leq 0} \left\{\lambda u(R) -  (N+1)\mathrm{e}^{-\rho T} F^{*}(\lambda \mathrm{e}^{(\rho - r)T}) + \int_{0}^{T}\rho \mathrm{e}^{-\rho t}G^{*}(\delta  \lambda \mathrm{e}^{(\rho-r)t} ) dt \right\}.
\end{equation*}
The proof is complete.
\qed
\subsection{Proof of Theorem \ref{thm::FB}.4}
After establishing both upper and lower bounds on $v^{FB,P}(u(R))$, we proceed to show that the principal's value is uniformly bounded when the agent has the lowest requirement for precommitment to work,    namely
$$
0<v^{FB,P}(u(0))<\rho \bar{a}(1 - \mathrm{e}^{-\rho T}).
$$
\begin{proof} For brevity, we demonstrate the argument only in the case $\delta \gamma<1$; the other cases follow by a similar reasoning.
When $\delta \gamma<1$, Theorem \ref{thm::FB}.2 concludes that
\begin{small}
  \begin{equation*}
  \overline{v}^{FB,P}(u(R)) 
            =  \inf_{\lambda \leq 0} \left\{\lambda u(R) -  N\mathrm{e}^{-\rho T} F^{*,k}(\lambda \mathrm{e}^{(\rho - r)T}) - \mathrm{e}^{-\rho T} F^{*,0}(\lambda \mathrm{e}^{(\rho - r)T}) + \int_{0}^{T}\rho \mathrm{e}^{-\rho t}G^{*}(\delta  \lambda \mathrm{e}^{(\rho-r)t} ) \mathrm{d}t \right\}   .
\end{equation*}  
\end{small}
Next, we define the auxiliary function
\begin{equation*}
    FG(\lambda):=   - N\mathrm{e}^{-\rho T} F^{*,k}(\lambda \mathrm{e}^{(\rho - r)T}) - \mathrm{e}^{-\rho T} F^{*,0}(\lambda \mathrm{e}^{(\rho - r)T}) + \int_{0}^{T}\rho \mathrm{e}^{-\rho t}G^{*}(\delta  \lambda \mathrm{e}^{(\rho-r)t} ) \mathrm{d}t .
\end{equation*}
Due to the definition (\ref{defG*F*}), we know that $FG(0) =  \bar{a}(1 - e^{-\rho T})$,  $ - F^{*,k}(\cdot)$ is a convex decreasing function, and $- F^{*,k}(0) = 0$. Additionally, $G^{*}(\cdot)$ is a convex increasing function, $G^{*}(0) = \bar{a}$, and we have
\begin{align*}
    \quad{FG}^{\prime}(0)
    &= - N\mathrm{e}^{-r T} {F^{*,k}}'(\lambda \mathrm{e}^{(\rho - r)T}) - \mathrm{e}^{-r T} {F^{*,0}}'(\lambda \mathrm{e}^{(\rho - r)T}) + \int_{0}^{T} r \mathrm{e}^{-r t}{G^{*}}'(\delta  \lambda \mathrm{e}^{(\rho-r)t} ) \mathrm{d}t \\
   &=\int_{0}^{T} r \mathrm{e}^{-r t}{G^{*}}^{\prime}(0 ) dt=\int_{0}^{T} r \mathrm{e}^{-r t} h(\bar{a}) dt =h(\bar{a})  (1  -  e^{-rT}) > 0,
\end{align*}
where the third equality comes from the envelope theorem. Since $\lim _{\lambda \rightarrow-\infty} G F(\lambda)=\infty$ with polynomial growth of order comparable to $(-\lambda)^{\frac{\gamma}{\gamma-1}}$, together with the convexity and nonnegativity of $G F(\cdot)$. Hence, we conclude that $ 
 0 < V^{P,FB}(u(0)) < \bar{a}(1 - e^{-\rho T}) $. 
\end{proof}

\section{Proof of Theorem \ref{hjbvi:pp:in}}\label{appendix::SB}

To prove Theorem \ref{hjbvi:pp:in}, we proceed as follows. First, we construct an auxiliary function that serves as a smooth supersolution for the value function. Next, we address the well-posedness of the baseline problem, $V^{p,0}$, corresponding to the case where the principal has no right to offer bonus payments. After showing the well-posedness of $V^{p,0}$, we demonstrate by induction the well-posedness of $V^{p,n}$ for $n \geq 1$.

We now present a key auxiliary result. Let $\Delta:=(b,c,M) \in \RR^{+,3}$,  $a \in \RR^+$, $\gamma > 2$, and $n\in \{0,\ldots,N\}$. We introduce the function $\varphi^{A^{(n,d)},d,\Delta} : [0,T]\times [0,\infty)\mapsto \RR$ defined by
\begin{equation}
\varphi^{A^{(n,d)},d,\Delta}(t,y) := - A^{(n,d)}(t) \mathrm{e}^{d(t -T)} y^{\gamma} + Me^{(T-t)b}(1 - e^{-cy}),
\end{equation}
where
\begin{equation}
    A^{(n,d)}(t) = \begin{cases}
    A^{n} :=\left(\frac{(1+k)^{1 /(\gamma-1)}}{n+(1+k)^{1 /(\gamma-1)}}\right)^{\gamma-1}, &    \text{if $d = \rho - r \gamma>0$},\\
    A^{n}(t) := \left(\frac{(1+k)^{1 /(\gamma-1)}}{n \mathrm{e}^{\frac{d(t-T)}{\gamma - 1}}+(1+k)^{1 /(\gamma-1)}}\right)^{\gamma-1}, &  \text{if $d = \rho - r \gamma \leq 0$}.
    \end{cases}
\end{equation}
\begin{lemma}\label{supersolconstruction}
    For all $\gamma > 2$, $ N \in \mathbb{N}_{+}$, there exists $\Delta_N\in \RR^{+,3}$ such that the following inequalities hold for all $n\in \{0,\ldots,N\}$
    \begin{equation}\label{phi0pde}
    \begin{split}
     -\varphi^{A^{(n,d)},d,\Delta_N}_t + \rho \varphi^{A^{(n,d)},d,\Delta_N} - r\varphi^{A^{(n,d)},d,\Delta_N}_y y\quad\quad\quad\quad\quad\quad\quad\quad\quad\quad\quad\quad \\
     - \sup_{z \in \mathbb{R} }\sup_{\hat{a} \in \hat{\mathcal{A}}(z)}\{rh(\hat{a})\varphi^{A^{(n,d)},d,\Delta_N}_y + \rho \hat{a} + \frac{1}{2}\varphi^{A^{(n,d)},d,\Delta_N}_{yy}r^{2}\sigma^{2}z^{2}  \} &\geq 0,\\  
    \varphi^{A^{(n,d)},d,\Delta_N}(t,0) &\geq 0,\\
    \varphi^{A^{(n,d)},d,\Delta_N}(T,y) &\geq  - A^{n}  y^{\gamma}.
      \end{split}
    \end{equation}  
\end{lemma}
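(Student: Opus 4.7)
The plan is to verify the three inequalities by direct substitution and then pick $\Delta_N=(b,c,M)$ large enough, uniformly in $n\in\{0,\ldots,N\}$. The boundary condition $\varphi^{A^{(n,d)},d,\Delta_N}(t,0)=0$ is immediate since both summands vanish at $y=0$, and the terminal inequality reduces to $M(1-e^{-cy})\geq 0$ once one checks that $A^{(n,d)}(T)=A^n$ in either regime of $d$. So the bulk of the work is the PDE inequality, which I would attack by decomposing $\varphi=W+V$ with $W(t,y):=-A^{(n,d)}(t)e^{d(t-T)}y^\gamma$ (so $W_y,W_{yy}\leq 0$) and $V(t,y):=Me^{(T-t)b}(1-e^{-cy})$ (so $V_y\geq 0$, $V_{yy}\leq 0$).

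The first step exploits the specific choice of $d=\rho-r\gamma$ and of the time-profile $A^{(n,d)}(t)$: a direct computation shows that the linear part of the operator collapses to
\begin{equation*}
-\varphi_t+\rho\varphi-ry\varphi_y=A'(t)e^{d(t-T)}y^\gamma+Me^{(T-t)b}\left[(b+\rho)(1-e^{-cy})-rcye^{-cy}\right].
\end{equation*}
In Case 1 ($d>0$) the profile $A^{(n,d)}$ is constant, so $A'(t)\equiv 0$; in Case 2 ($d\leq 0$), differentiating the explicit formula for $A^n(t)$ yields $A'(t)\geq 0$. Thus the polynomial contribution is always nonnegative, and I can drop it and focus on the $V$-piece.

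The main obstacle is controlling the nonlocal Hamiltonian $\mathcal{H}(\varphi_y,\varphi_{yy})$ from above, because $\varphi_y=W_y+V_y$ changes sign on the domain: $V_y>0$ near $y=0$ while $W_y<0$ dominates for large $y$. Using the constraint $\hat a\in\partial h^\star(z)$ together with strict convexity of $h$ (which forces $z=h'(\hat a)$ whenever $\hat a>0$), the supremum over $z$ can be eliminated and $\mathcal{H}$ reduces to a supremum over $\hat a\in[0,\bar a]$. I would then use $\sup(F_1+F_2)\leq\sup F_1+\sup F_2$ to split $\mathcal{H}\leq S_1+S_2$, with $S_1$ collecting the $W$-contributions and $S_2$ the $V$-contributions. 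Since $W_y,W_{yy}\leq 0$ and $\rho\hat a\leq\rho\bar a$, one gets $S_1\leq\rho\bar a$; for $S_2$, the bounds $h(\hat a)\leq h(\bar a)$ and $(h'(\hat a))^2\geq\beta^2$, the latter following from the convexity of $h'$ and $h'(0)=\beta$ in Assumption~\ref{assm::enoughgammah(a)}, yield
\begin{equation*}
S_2\leq cMe^{(T-t)b}e^{-cy}\left[rh(\bar a)-\tfrac12 r^2\sigma^2\beta^2 c\right],
\end{equation*}
which becomes strictly negative once $c>2h(\bar a)/(r\sigma^2\beta^2)$.

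Combining the linear part with the bound on the Hamiltonian, the PDE inequality reduces to showing $Me^{(T-t)b}G(y)\geq\rho\bar a$ for all $(t,y)\in[0,T]\times[0,\infty)$, where
\begin{equation*}
G(y):=(b+\rho)(1-e^{-cy})+ce^{-cy}\left[\tfrac12 r^2\sigma^2\beta^2 c-rh(\bar a)-ry\right].
\end{equation*}
A one-variable analysis gives $G(\infty)=b+\rho$, $G(0)=c[\tfrac12 r^2\sigma^2\beta^2 c-rh(\bar a)]$, and, for $c$ large enough, a unique interior critical point $y^\star$ at which a short computation yields $G(y^\star)=b+\rho-re^{-cy^\star}\geq b+\rho-r$. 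Taking first $b\geq r$, then $c$ large enough that $G(0)\geq\rho$, and finally $M\geq\bar a$, I obtain $Me^{(T-t)b}G(y)\geq M\rho\geq\rho\bar a$, which closes the inequality. Since none of these choices depend on $n$, the same triple $\Delta_N=(b,c,M)$ works uniformly for all $n\in\{0,\ldots,N\}$.
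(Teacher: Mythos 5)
Your proof is correct, and it takes a genuinely different (and arguably cleaner) route than the paper's. The paper establishes the PDE inequality by enforcing the \emph{two} inequalities separately: the drift part $-\varphi_t+\rho\varphi-ry\varphi_y\geq 0$ and the Hamiltonian part $-\mathcal{H}(\varphi_y,\varphi_{yy})\geq 0$ (so $\mathcal{H}\leq 0$ alone). The second of these is delicate: to dominate the $\rho\hat a$ term by the quadratic $\tfrac12\varphi_{yy}r^2\sigma^2 z^2$ they must exploit the $y^{\gamma-2}$ curvature coming from the $W$-piece, which forces a three-region case analysis on the optimal $\hat a$ and a lower bound $h'(\hat a)\geq\alpha\hat a+\beta$ that requires convexity of $h'$ and $h''(0)=\alpha>0$. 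You instead bound the \emph{sum}: after the standard reduction of the $z$-supremum to $z=h'(\hat a)$ (valid because $\varphi_{yy}<0$; strictly speaking $\mathcal{H}=\max\{0,\sup_{\hat a}\}$, but the case $\mathcal{H}=0$ is trivial since you also show the drift part is nonnegative), you split $\mathcal{H}\leq S_1+S_2$, discard the favorable $W_{yy}$-curvature entirely to get the crude bound $S_1\leq\rho\bar a$, and then absorb this constant into the $V$-drift together with $S_2$. This collapses everything onto the one-variable inequality $Me^{(T-t)b}G(y)\geq\rho\bar a$, whose elementary minimization (with the clean interior critical value $G(y^\star)=b+\rho-re^{-cy^\star}\geq b+\rho-r$) closes the argument for $b\geq r$, $c$ large, $M\geq\bar a$. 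Two small remarks: (i) the bound $(h'(\hat a))^2\geq\beta^2$ only needs $h'$ nondecreasing, i.e.\ convexity of $h$; you do not actually need convexity of $h'$ or $h''(0)=\alpha>0$, so your proof uses a strictly weaker subset of Assumption~\ref{assm::enoughgammah(a)} than the paper's; (ii) the step ``$\mathcal{M}\hat V^{p,0}$-type uniformity in $n$'' is automatic in your construction since $b,c,M$ depend only on the data $r,\rho,\sigma,\bar a,\beta$, not on $n$, exactly as you note. So the approach is both valid and a genuine simplification: the cost is that you prove slightly less (only the combined inequality rather than each part separately), but the lemma only requires the combined inequality.
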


\begin{proof}
   As a preliminary remark, it is evident that for all $ \Delta \in \RR^{+,3}$, the following inequality holds regardless of the sign of $\rho - \gamma r$ 
    $$\varphi^{A^{(N,d)},d,\Delta}(t,0) \geq 0 \; , \;
    \varphi^{A^{(N,d)},d,\Delta}(T,y) \geq g^{n}(y) \;,\; n \in \{0,1,\cdots,N\}. $$
    Next, we justify the existence of $\Delta^{N}$ that ensures the first inequality in \eqref{phi0pde}, depending on the sign of $\rho - \gamma r$.

    \medskip
    Firstly, we discuss the case when $\rho - \gamma r > 0$. 
    To guarantee that the first inequality holds, it is sufficient to show that there exists a $\Delta^{N} \in \RR^{+,3}$ such that
    \begin{align}\label{d>0superfirst1}
    -\varphi^{A^{(N,d)},d,\Delta_N}_t + \rho \varphi^{A^{(N,d)},d,\Delta_N} - r\varphi^{A^{(N,d)},d,\Delta_N}_y y &\geq 0\\  \label{d>0superfirst2}
    - \sup_{z \in \mathbb{R} }\sup_{\hat{a} \in \hat{\mathcal{A}}(z)}\{rh(\hat{a})\varphi^{A^{(N,d)},d,\Delta_N}_y + \rho \hat{a} + \frac{1}{2}\varphi^{A^{(N,d)},d,\Delta_N}_{yy}r^{2}\sigma^{2}z^{2}  \} &\geq 0.
\end{align}
Next, we show the inequality (\ref{d>0superfirst1}) holds if
\begin{equation}\label{setbcdphi1}
  \Delta \in \Delta_1 := \left\{  b \geq rc, c > 0, M> 0 \right\}.
\end{equation} 
If $\Delta \in \Delta_1$, we have
\begin{align*}
    &\quad -\varphi^{A^{(N,d)},d,\Delta}_t + \rho \varphi^{A^{(N,d)},d,\Delta}  - r\varphi^{A^{(N,d)},d,\Delta}_y y  \\
    &=  A^{N}d e^{d(t -T)} y^{\gamma} + M b e^{(T-t)b}(1 - e^{-cy}) + \rho(- A^{N}e^{d(t -T)} y^{\gamma} + Me^{(T-t)b}(1 - e^{-cy}))   \\
    &\quad - r (- A^{N}e^{d(t -T) } \gamma y^{\gamma - 1} +  c Me^{(T-t)b} e^{-cy} ) y\\
    &=  A^{N}e^{d(t -T)} y^{\gamma}(d - \rho + r \gamma )  + M e^{(T-t)b} \left( ( b + \rho )(1 - e^{-cy}) - r c y e^{-cy}   \right).
  \end{align*}
Due to the choice of $d = \rho - r \gamma $, $A > 0$, and $b \geq rc$, we conclude that
\begin{equation*}
    A^{N}e^{d(t -T)} y^{\gamma}(d - \rho + r \gamma )  + M e^{(T-t)b} \left( ( b + \rho )(1 - e^{-cy}) - r c y e^{-cy}   \right) \geq 0.
\end{equation*}
Then, we show that if  
\small{
\begin{align}\label{setbcdphi2}
  \Delta \in \Delta_2 &:= \left\{  b \geq rc, c > \frac{rh(\bar{a}) + \sqrt{r^{2}h(\bar{a})^{2} + 2 r^{2}\sigma^{2}\beta^{2}}}{r^{2}\sigma^{2}\beta^{2}},\right.\\
  &\quad\quad\quad\quad\quad\left. M> \max\left\{\rho \bar{a} \mathrm{e}^{c\underline{y}} , \left(\frac{1}{2}\frac{\rho^{2} \mathrm{e}^{dT}}{r^{2}\sigma^{2}\alpha^{2} A^{N} \gamma(\gamma - 1) \underline{y}^{\gamma-2}} - \frac{\rho \beta}{\alpha} \right)\mathrm{e}^{r\bar{y}}\right\} \right\}\notag
\end{align}
}
where $\alpha = h''(0) , \beta = h'(0)$, the inequality (\ref{d>0superfirst2}) holds. In particular, we have
$$- \sup_{z \in \mathbb{R} }\sup_{\hat{a} \in \hat{\mathcal{A}}(z)}\{rh(\hat{a})\varphi^{A^{N},d,\Delta}_y + \rho \hat{a} + \frac{1}{2}\varphi^{A^{N},d,\Delta}_{yy}r^{2}\sigma^{2}z^{2}  \} \geq 0. $$
Due to the compactness of the feasible effort set, we know that
\begin{equation*}
\begin{split}
     \mathcal{J}(p,q) &= \sup_{z \in \mathbb{R} }\sup_{\hat{a} \in \hat{\mathcal{A}}(z)}\{rh(\hat{a})p + \rho \hat{a} + \frac{1}{2}qr^{2}\sigma^{2}z^{2} \}  \\
     &= \sup_{z \geq 0 }\sup_{\hat{a} \in \hat{\mathcal{A}}(z)}\{rh(\hat{a})p + \rho \hat{a} + \frac{1}{2}qr^{2}\sigma^{2}z^{2} \}  \\ 
     &=\max\left\{\sup_{z \geq h'(0) }\sup_{\hat{a} \in \hat{\mathcal{A}}(z)}\{rh(\hat{a})p + \rho \hat{a} + \frac{1}{2}qr^{2}\sigma^{2}z^{2} \},0 \right\} := \mathcal{J}^{+}(p,q).
\end{split}
\end{equation*}
Thus, to obtain 
$$- \sup_{z \in \mathbb{R} }\sup_{\hat{a} \in \hat{\mathcal{A}}(z)}\{rh(\hat{a})\varphi^{A^{(N,d)},d,\Delta}_y + \rho \hat{a} + \frac{1}{2}\varphi^{A^{(N,d)},d,\Delta}_{yy}r^{2}\sigma^{2}z^{2}  \}  = \mathcal{J}^{+}(\varphi^{A^{(N,d)},d,\Delta}_y,\varphi^{A^{(N,d)},d,\Delta}_{yy})  \geq 0  ,\Delta \in \Delta_2, $$
it is sufficient to show that  for all $ \Delta \in \Delta_2$, we have
$$\mathcal{J}^{+}(\varphi^{A^{(N,d)},d,\Delta}_y,\varphi^{A^{(N,d)},d,\Delta}_{yy}) = 0 \Longleftrightarrow \sup_{z \geq h'(0) }\sup_{\hat{a} \in \hat{\mathcal{A}}(z)}\{rh(\hat{a})\varphi^{A^{(N,d)},d,\Delta}_y + \rho \hat{a} + \frac{1}{2} \varphi^{A^{(N,d)},d,\Delta}_{yy} r^{2}\sigma^{2}z^{2} \} \leq 0 .$$
Next, we define the following function
\begin{equation}\label{JZ}
    \mathcal{JZ}(z,p,q) := \sup_{\hat{a} \in \hat{\mathcal{A}}(z)}\{rh(\hat{a})p + \rho \hat{a} + \frac{1}{2}qr^{2}\sigma^{2}z^{2} \} \; , \; z \geq h'(0).
\end{equation}
Due to the boundedness of $a \in [0,\bar{a}]$, we know that when $z \geq h'(\bar{a})$, $\mathcal{JZ}(z,p,q)$ is a decreasing function with respect $z$ when $q < 0$. When $h'(0) = \beta \leq z \leq h'(\bar{a}) $, we have
\begin{align*}
    &\quad \mathcal{JZ}(z,\varphi^{A^{(N,d)},d,\Delta}_y,\varphi^{A^{(N,d)},d,\Delta}_{yy}) \\
     &= \sup_{\hat{a} \in \hat{\mathcal{A}}(z)}\bigg\{rh(\hat{a}) \varphi^{A^{(N,d)},d,\Delta}_y + \rho \hat{a} + \frac{1}{2}\varphi^{A^{(N,d)},d,\Delta}_{yy} r^{2}\sigma^{2}z^{2} \bigg\}  \\ 
     &= \sup_{\hat{a} \in \hat{\mathcal{A}}(z)}\bigg\{rh(\hat{a}) \left(  - A^{N} e^{d(t -T)} \gamma y^{\gamma - 1}  + c Me^{(T-t)b} e^{-cy}  \right) + \rho \hat{a} \\
     &\quad\quad\quad\quad\quad\quad\quad\quad\quad\quad\quad- \frac{1}{2} r^{2}\sigma^{2}z^{2}  \left(  A^{N} e^{d(t -T )} \gamma (\gamma - 1) y^{\gamma - 2} + c^{2} M e^{(T-t)b} e^{-cy}   \right)  \bigg\}   \\
     &\leq \sup_{\hat{a} \in [0,\bar{a}]} \bigg\{  r h(\bar{a}) c M e^{(T-t)b}  e^{-cy}  + \rho \hat{a} - \frac{1}{2} r^{2}\sigma^{2} {{h'}(\hat{a})}^{2}  \left(  A^{N} e^{d(t -T )} \gamma (\gamma - 1) y^{\gamma - 2} + c^{2} M e^{(T-t)b} e^{-cy}   \right) \bigg\}\\
     &\leq M e^{(T-t)b}  e^{-cy} \left(  r  h(\bar{a})  c  -   \frac{1}{2}r^{2}\sigma^{2} {h'(0)}^{2}c^2 \right)  + \sup_{\hat{a} \in [0,\bar{a}]} \{\rho \hat{a} - \frac{1}{2} r^{2}\sigma^{2} {(\alpha \hat{a} + \beta)}^{2}   A^{N} \mathrm{e}^{-dT} \gamma (\gamma - 1) y^{\gamma - 2} \}\\
     &\leq M e^{-cy} \left(  r  h(\bar{a})  c  -   \frac{1}{2}r^{2}\sigma^{2} {h'(0)}^{2}c^2 \right)  + \sup_{\hat{a} \in [0,\bar{a}]} \{\rho \hat{a} - \frac{1}{2} r^{2}\sigma^{2} {(\alpha \hat{a} + \beta)}^{2}   A^{N} \mathrm{e}^{-dT} \gamma (\gamma - 1) y^{\gamma - 2} \},
\end{align*}
where $\alpha = h''(0) , \beta = h'(0)$.
To show the inequality 
\begin{equation*}
    M e^{-cy} \left(  r  h(\bar{a})  c  -   \frac{1}{2}r^{2}\sigma^{2} {h'(0)}^{2}c^2 \right)  + \sup_{\hat{a} \in [0,\bar{a}]} \{\rho \hat{a} - \frac{1}{2} r^{2}\sigma^{2} {(\alpha \hat{a} + \beta)}^{2}   A^{N} \mathrm{e}^{-dT} \gamma (\gamma - 1) y^{\gamma - 2} \} \leq 0,
\end{equation*}
it is enough to justify
\begin{equation*}
     \sup_{\hat{a} \in [0,\bar{a}]} \{\rho \hat{a} - \frac{1}{2} r^{2}\sigma^{2} {(\alpha \hat{a} + \beta)}^{2}   A^{N} \mathrm{e}^{-dT} \gamma (\gamma - 1) y^{\gamma - 2} \} \mathrm{e}^{cy} \leq M \; , \; y \in [0,\infty) .
\end{equation*}
Using that the boundedness of $\hat{a} \in [0,\bar{a}]$,
we choose $\underline{y}$ ,  $\bar{y}$ where $ 0 < \underline{y}  \leq \bar{y}$ to make the following equalities hold
\begin{equation*}
        \frac{1}{\alpha}\left( \frac{\rho \mathrm{e}^{dT}}{r^{2}\sigma^{2}\alpha A^{N} \gamma(\gamma - 1) \underline{y}^{\gamma-2}} - \beta \right) = \bar{a}\;,\; 
        \frac{1}{\alpha}\left( \frac{\rho \mathrm{e}^{dT}}{r^{2}\sigma^{2}\alpha A^{N} \gamma(\gamma - 1) \bar{y}^{\gamma-2}} - \beta \right) = 0.
\end{equation*}
Then, we separate three situations for discussion:\\
If $\frac{1}{\alpha}\left( \frac{\rho \mathrm{e}^{dT}}{r^{2}\sigma^{2}\alpha A^{N} \gamma(\gamma - 1) y^{\gamma-2}} - \beta \right) > \bar{a}$, we have
\begin{equation*}
     \sup_{\hat{a} \in [0,\bar{a}]} \{\rho \hat{a} - \frac{1}{2} r^{2}\sigma^{2} {(\alpha \hat{a} + \beta)}^{2}   A^{N} \mathrm{e}^{-dT} \gamma (\gamma - 1) y^{\gamma - 2} \} \mathrm{e}^{cy} 
    \leq \rho \bar{a} \mathrm{e}^{c\underline{y}} \leq M.
\end{equation*}
If $\frac{1}{\alpha}\left( \frac{\rho \mathrm{e}^{dT}}{r^{2}\sigma^{2}\alpha A^{N} \gamma(\gamma - 1) y^{\gamma-2}} - \beta \right) \leq 0$, the inequality is automatically satisfied.\\
If $  0 < \frac{1}{\alpha}\left( \frac{\rho \mathrm{e}^{dT}}{r^{2}\sigma^{2}\alpha A^{N} \gamma(\gamma - 1) y^{\gamma-2}} - \beta \right) < \bar{a}$, which means $ y \in (\underline{y},\bar{y}) $. Due to the choice of $M$ in $\Delta_2$ defined in (\ref{setbcdphi2}), we have
\begin{equation*}
\begin{split}
    &\quad \sup_{\hat{a} \in [0,\bar{a}]} \{\rho \hat{a} - \frac{1}{2} r^{2}\sigma^{2} {(\alpha \hat{a} + \beta)}^{2}   A^{N} \mathrm{e}^{-dT} \gamma (\gamma - 1) y^{\gamma - 2} \} \mathrm{e}^{cy}\\ 
    &\leq \frac{\rho}{\alpha}\left( \frac{\rho \mathrm{e}^{dT}}{r^{2}\sigma^{2}\alpha A^{N} \gamma(\gamma - 1) y^{\gamma-2}} - \beta \right) - \frac{1}{2} r^{2}\sigma^{2} A^{N} \mathrm{e}^{-dT} \gamma (\gamma - 1) y^{\gamma - 2} \left(\frac{\rho \mathrm{e}^{dT}}{r^{2}\sigma^{2}\alpha A^{N} \gamma(\gamma - 1) y^{\gamma-2}} \right)^{2} \\
    &\leq \left(\frac{1}{2}\frac{\rho^{2} \mathrm{e}^{dT}}{r^{2}\sigma^{2}\alpha^{2} A^{N} \gamma(\gamma - 1) \underline{y}^{\gamma-2}} - \frac{\rho \beta}{\alpha} \right)\mathrm{e}^{r\bar{y}}\\
    &\leq M.
\end{split}
\end{equation*}
Picking $\Delta_N \in \Delta_1 \cap \Delta_2 $, we  show the existence of $\Delta_N$ making the inequalities in  (\ref{phi0pde}) hold.\\
Based on the monotonicity and strict positivity of $A^n$, namely $0<A^{n+1} \leq A^n \leq 1$ for $n \in \mathbb{N}_{+}$, we deduce that $\Delta_N$ can be chosen such that $\varphi^{A^{(n,d)},d,\Delta_N}$ satisfies the inequalities in (\ref{phi0pde}).

\medskip
Secondly, we consider the case:  $\rho - \gamma r \leq 0$. To guarantee the first inequality in (\ref{phi0pde}) holds, it is sufficient to show that there exists a $\Delta^{N} \in \RR^{+,3}$ such that
\begin{align}\label{d<0superfirst1}
    -\varphi^{A^{(N,d)},d,\Delta_N}_t + \rho \varphi^{A^{(N,d)},d,\Delta_N} - r\varphi^{A^{(N,d)},d,\Delta_N}_y y &\geq 0,\\  \label{d<0superfirst2}
    - \sup_{z \in \mathbb{R} }\sup_{\hat{a} \in \hat{\mathcal{A}}(z)}\{rh(\hat{a})\varphi^{A^{(N,d)},d,\Delta_N}_y + \rho \hat{a} + \frac{1}{2}\varphi^{A^{(N,d)},d,\Delta_N}_{yy}r^{2}\sigma^{2}z^{2}  \} &\geq 0.
\end{align}
When $\Delta \in \Delta_1$, we have
\begin{align*}
    &\quad -\varphi^{A^{(N,d)},d,\Delta}_t + \rho \varphi^{A^{(N,d)},d,\Delta}  - r\varphi^{A^{(N,d)},d,\Delta}_y y  \\
    &=  A^{N}(t)e^{d(t -T)} y^{\gamma} \left(\frac{d}{(\gamma - 1) ( N (\frac{\mathrm{e}^{d(t-T)}}{1+k})^{\frac{1}{\gamma - 1}} + 1 )} - d \right)  + M e^{(T-t)b} \left( ( b + \rho )(1 - e^{-cy}) - r c y e^{-cy} \right).
  \end{align*}
Due to  $d = \rho - r \gamma < 0 $, $\gamma > 2$,  $b \geq rc$, and $M > 0$, we have
\begin{equation*}
   -\varphi^{A^{(N,d)},d,\Delta_N}_t + \rho \varphi^{A^{(N,d)},d,\Delta_N} - r\varphi^{A^{(N,d)},d,\Delta_N}_y y \geq 0\\
\end{equation*}
Next, we move to the discussion of the inequality (\ref{d<0superfirst2}). Due to the boundness of the function $\mathrm{e}^{d(t-T)}$, the same argument used in the case $d > 0$ could be applied. Then, we redefine the set $\Delta_2$ in the following form
\small{
\begin{equation}\label{setbcdphi2d>0}
\begin{split}
  \Delta_2 &:= \left\{  b \geq rc, c > \frac{rh(\bar{a}) + \sqrt{r^{2}h(\bar{a})^{2} + 2 r^{2}\sigma^{2}\beta^{2}}}{r^{2}\sigma^{2}\beta^{2}}, \right.\\
  &\quad \quad\quad\quad\quad\quad\left. M> \max\{\rho \bar{a} \mathrm{e}^{c\underline{y}} , \left(\frac{1}{2}\frac{\rho^{2} }{r^{2}\sigma^{2}\alpha^{2} A^{N} \gamma(\gamma - 1) \underline{y}^{\gamma-2}} - \frac{\rho \beta}{\alpha} \right)\mathrm{e}^{r\bar{y}}\} \right\}. 
\end{split}
\end{equation}
}
Therefore, if $\Delta \in  \Delta_2 $,  
where $\alpha = h''(0) , \beta = h'(0)$, we have
$$- \sup_{z \in \mathbb{R} }\sup_{\hat{a} \in \hat{\mathcal{A}}(z)}\{rh(\hat{a})\varphi^{A^{(N,d)},d,\Delta}_y + \rho \hat{a} + \frac{1}{2}\varphi^{A^{(N,d)},d,\Delta}_{yy}r^{2}\sigma^{2}z^{2}  \} \geq 0 .$$
Similarly, based on the strict positivity of $A^{n}(t)$ and the strict monotonicity of $A^n(t)$ with respect to $n$, namely $0<A^{n+1}(t) \leq A^{n}(t) \leq 1$ for $n \in \mathbb{N}_{+}$, we deduce that $\Delta_N$ can be chosen such that $\varphi^{A^{(n,d)}, d, \Delta_N}$ satisfies the inequalities in (\ref{phi0pde}).
The proof is complete.
\qed   
\end{proof}

\subsection{Proof of Theorem \ref{hjbvi:pp:in} on $V^{p,0}$ }
\subsubsection{Proof of Theorem \ref{hjbvi:pp:in}.1 on $V^{p,0}$ }\label{thm2.1::vp0}
In this section, we focus on the baseline problem: $V^{p,0}(t,y)$. We introduce the open state constraint problem as follows
\begin{equation}
\begin{split}
\hat{V}^{p,0}(t,y) &= \sup_{Z \in {\hat{\mathcal{U}}}^{0}(t,y),\hat{a}\in  \hat{\mathcal{A}}(Z)}J^{p,0}(t,y;Z,\hat{a}),  \\ 
J^{p,0}(t,y;Z,\hat{a})&= \mathbb{E}\left[ \int_t^T \rho \mathrm{e}^{-\rho(s-t)}\left(\hat{a}_s\right) \mathrm{d} s-\mathrm{e}^{-\rho(T-t)}u^{-1}\left(Y_T^{t, y, Z,\hat{a} }\right) \right],
\end{split}
\end{equation}
where
$$Y_{\hat{t}}^{t,y, Z,\hat{a}} = y + \int_t^{\hat{t}} r\left(Y_s^{t,y,Z,\hat{a}}+h\left(\hat{a}_s\right)\right) \mathrm{d} s+\int_t^{\hat{t}}r Z_s \sigma \mathrm{~d} B_s \hspace{1 mm} ,\hspace{1 mm} t\leq\hat{t} \leq T,$$
and $\hat{\mathcal{U}}^{0}(t,y):= \left\{Z \in \mathcal{U}^{0}(t,y) : \hspace{1 mm} Y_s^{t,y,Z,\hat{a}}>0, \hspace{1 mm} \PP -a.s \hspace{1 mm},\hspace{1 mm} \hat{a} \in \hat{\mathcal{A}}(Z) \hspace{1 mm} , \hspace{1 mm} t \leq s \leq T \right\}$, for all $(t,y) \in [0,T] \times (0,\infty)$.

\medskip
A direct proof of the dynamic programming principle (DPP) would require the application of a measurable selection theorem, a step that becomes technically intricate in the presence of the above control constraints. To circumvent this difficulty, we adopt the weak formulation approach developed in \cite{bouchard2011weak} and \cite{bouchard2012weak}. By reformulating the problem in the sense of Bolza and Lagrange, this method allows us to establish a weak version of the DPP without resorting to measurable selection arguments. Specifically, for any stopping time $\tau \in \mathcal{T}^{t}_{t,T}$, we obtain
\begin{enumerate}
    \item Consider a lower semicontinuous function $\psi:[0, T] \times (0,\infty) \rightarrow \mathbb{R}$, such that $\hat{V}^{p,0} \leq \psi$, and $\mathbb{E}\left[\psi\left(\tau, Y_\tau^{t, y, Z}\right)^{-}\right]<\infty$, for all $(t, y) \in[0, T] \times (0,\infty)$. Then
        \begin{equation}\label{eq::principalvopen::weakdpeq1}
            \hat{V}^{p,0}(t,y) \leq \sup _{Z \in \hat{\mathcal{U}}^{0}(t,y) , \hat{a}\in  \hat{\mathcal{A}}(Z)} \mathbb{E} \left[ \int_{t}^{\tau}\rho  \mathrm{e}^{-\rho s} \hat{a}_s  ds +    \mathrm{e}^{-\rho (\tau - t)}\psi\left(\tau, Y_\tau^{t, y, Z}\right)\right]. 
        \end{equation}
    \item Let $\psi:[0, T] \times (0,\infty) \rightarrow \mathbb{R}$ be a continuous function such that $\hat{V}^{p,0} \geq \psi$. Then\\  $\mathbb{E}\left[\psi\left(\tau, Y_\tau^{t, y, Z}\right)^{+}\right]<\infty$, for every $Z \in \hat{\mathcal{U}}^{0}(t,y)$, and
   \begin{equation}\label{eq::principalvopen::weakdpeq2}
        \hat{V}^{p,0}(t, y) \geq \sup _{Z \in \hat{\mathcal{U}}^{0}(t,y), \hat{a}\in  \hat{\mathcal{A}}(Z)} E\left[ \int_{t}^{\tau}\rho  \mathrm{e}^{-\rho s} \hat{a}_s  ds + \mathrm{e}^{-\rho (\tau - t)}\psi\left(\tau, Y_\tau^{t, y, Z}\right)\right].
   \end{equation}
\end{enumerate}
Next, we show that the value function $\hat{V}^{p,0}$ is a viscosity solution to the associated Hamilton-Jacobi-Bellman variational inequalities (\ref{hjbvi:pp:in}). For all $n\in \{0,\ldots,N\}$, we introduce 
\begin{equation}
     {{\hat{V}}}^{{p,n}^*}(t, y):=  \limsup _{\substack{\left(t^{\prime}, y^{\prime}\right) \rightarrow(t, y) \\ 
\left(t^{\prime}, y^{\prime}\right) \in[0, T) \times (0,\infty)}} {\hat{V}}^{p,n} \left(t^{\prime}, y^{\prime}\right), \;\;{\hat{V}^{p,n}}_{*}(t, y):= \liminf _{\substack{\left(t^{\prime}, y^{\prime}\right) \rightarrow(t, y) \\
\left(t^{\prime}, y^{\prime}\right) \in[0, T) \times (0,\infty)}} {\hat{V}}^{p,n}\left(t^{\prime}, y^{\prime}\right).
\end{equation}
\begin{lemma}\label{viscositycharacterizationv0}
  The value function ${\hat{V}^{p,0}}_{*}$ ($ {{\hat{V}}}^{{p,0}^*}$) is a lower semicontinuous (upper semicontinuous) viscosity supersolution (viscosity subsolution) of the following Hamilton-Jacobi-Bellman variational inequality on the domain $[0,T) \times (0,\infty)$
  \begin{equation}\label{vi::principalv0open}
     \min\{ -v^{p,0}_t + \mathcal{G}v^{p,0}  , - v^{p,0}_{yy}\} = 0.
  \end{equation}
  For the terminal condition, ${\hat{V}^{p,0}}_{*}(T,\cdot)$ ($ {{\hat{V}}}^{{p,0}^*}(T,\cdot)$) is a lower semicontinuous (upper semicontinuous) viscosity supersolution (viscosity subsolution) of the variational inequality on the domain $(0,\infty)$
  \begin{equation}\label{vi::principalv0openterminal}
    \min\{v^{p,0}(T-,\cdot) - F , - v^{p,0}_{yy}(T-,\cdot) \} = 0.
  \end{equation}
\end{lemma}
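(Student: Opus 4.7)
I would combine the weak dynamic programming inequalities \eqref{eq::principalvopen::weakdpeq1}--\eqref{eq::principalvopen::weakdpeq2} with It\^o's formula applied to $e^{-\rho(s-t_n)}\varphi(s, Y_s^{t_n, y_n, Z, \hat a})$ for a smooth test function $\varphi$, following the Bouchard--Touzi scheme. Because $Z$ ranges over $[0,\infty)$, the variational inequality couples the HJB equation with the concavity obstacle $-v_{yy} \ge 0$, so the supersolution argument has to be split according to the sign of $\varphi_{yy}$ at the touching point. The smooth majorant $\varphi^{A^{(0,d)}, d, \Delta_N}$ of Lemma~\ref{supersolconstruction} supplies the global polynomial-exponential bound on $\hat V^{p,0}$ needed to produce the l.s.c.\ test functions $\psi \ge \hat V^{p,0}$ required by \eqref{eq::principalvopen::weakdpeq1}.

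\medskip

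\textbf{Subsolution.} Let $\varphi \in C^{1,2}$ be such that $\hat V^{p,0,*} - \varphi$ attains a strict local maximum equal to $0$ at $(t_0, y_0) \in [0, T) \times (0, \infty)$, and suppose for contradiction that $\min\{-\varphi_t + \mathcal G\varphi, -\varphi_{yy}\}(t_0, y_0) > 2\eta > 0$. By continuity, $\varphi_{yy} < -\eta$ on a relatively compact parabolic neighborhood $B$ of $(t_0, y_0)$ in $[0,T) \times (0,\infty)$, and for every admissible $(z, \hat a)$ with $\hat a \in \hat{\mathcal A}(z)$,
\begin{equation*}
    \varphi_t + r(y + h(\hat a)) \varphi_y + \tfrac{1}{2} r^2 \sigma^2 z^2 \varphi_{yy} - \rho \varphi + \rho \hat a < -\eta \quad \text{on } B.
\end{equation*}
Picking $(t_n, y_n) \to (t_0, y_0)$ with $\hat V^{p,0}(t_n, y_n) \to \varphi(t_0, y_0)$, stopping each controlled trajectory at the first exit $\tau_n$ of $(s, Y_s)$ from $B$, applying It\^o and taking expectations uniformly in $(Z, \hat a)$, and invoking \eqref{eq::principalvopen::weakdpeq1} with a majorant $\psi$ that coincides with $\varphi$ on $\bar B$ and dominates $\hat V^{p,0}$ globally (the latter is ensured by Lemma~\ref{supersolconstruction}), one obtains $\hat V^{p,0}(t_n, y_n) \le \varphi(t_n, y_n) - c\eta$ for a uniform $c > 0$ coming from a lower bound on $E[\tau_n - t_n]$. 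Passing $n \to \infty$ contradicts $\hat V^{p,0,*}(t_0, y_0) = \varphi(t_0, y_0)$.

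\medskip

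\textbf{Supersolution.} Let $\varphi \in C^{1,2}$ satisfy that $\hat V^{p,0}_* - \varphi$ attains a strict local minimum equal to $0$ at $(t_0, y_0)$. If $\varphi_{yy}(t_0, y_0) \le 0$ and $-\varphi_t + \mathcal G \varphi(t_0, y_0) < -2\eta$, then the supremum defining $\mathcal G\varphi$ is finite; one selects an $\eta$-optimal constant pair $(z^*, \hat a^*)$, and the It\^o-plus-\eqref{eq::principalvopen::weakdpeq2} argument runs symmetrically to the subsolution case, producing a contradiction. If instead $\varphi_{yy}(t_0, y_0) > 0$, I would use the constant control $Z_s \equiv z$ with $z$ large and any measurable $\hat a \in \hat{\mathcal A}(z)$ on a parabolic cylinder $B_n := [t_n, t_n + \Delta z^{-2}] \times (y_0 - \epsilon, y_0 + \epsilon)$, chosen so that $\varphi_{yy} > c_0 > 0$ on $B_n$. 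Balancing the positive drift $\tfrac{1}{2} r^2 \sigma^2 z^2 \varphi_{yy}$ against the expected sojourn $E[\tau_n - t_n] \sim \epsilon^2 / z^2$, a careful joint choice of the scales $(\epsilon, z)$ produces a uniform positive gain; feeding this back into \eqref{eq::principalvopen::weakdpeq2} contradicts $\hat V^{p,0}_*(t_0, y_0) = \varphi(t_0, y_0)$.

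\medskip

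\textbf{Terminal condition and main obstacle.} The pointwise inequality $\hat V^{p,0}_*(T-, y) \ge F(y) = -y^{\gamma}$ in \eqref{vi::principalv0openterminal} follows by plugging the zero-control strategy $(Z, \hat a) \equiv (0, 0)$ into $\hat V^{p,0}$: $Y_T^{t, y, 0, 0} = y e^{r(T-t)} \to y$ as $t \to T$, so $\hat V^{p,0}(t, y) \ge -e^{-\rho(T-t)} y^{\gamma} e^{r\gamma(T-t)} \to -y^{\gamma} = F(y)$. The viscosity concavity bound $-\hat V^{p,0}_{*, yy}(T-, y) \ge 0$ and the subsolution part of \eqref{vi::principalv0openterminal} follow by adapting the same It\^o-plus-WDPP arguments to sequences $(t_n, y_n) \to (T, y_0)$ with $\tau_n = T$, using the continuity of $Y_T$ in the initial data. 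The main obstacle is the supersolution case $\varphi_{yy}(t_0, y_0) > 0$: the unboundedness of $Z$ creates a delicate race between the quadratic growth in $z$ of the positive drift and the quadratic shrinkage in $z^{-2}$ of the expected sojourn, so the contradiction is extracted only after the joint optimization of $\epsilon$ and $z$ sketched above. A secondary technical difficulty is the construction of the global majorant $\psi \ge \hat V^{p,0}$ coinciding with $\varphi$ on $\bar B$ near the absorbing boundary $y = 0$, which is resolved by combining a smooth cutoff with the supersolution $\varphi^{A^{(0,d)}, d, \Delta_N}$ of Lemma~\ref{supersolconstruction}.
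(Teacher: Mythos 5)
Your supersolution case with $\varphi_{yy}(t_0,y_0)>0$ is unnecessarily delicate, and your subsolution argument has a genuine bug. For the supersolution: the paper's key observation is that when $\varphi_{yy}(t_0,y_0)>0$, the operator $\mathcal{G}\varphi(t_0,y_0)=-\infty$, because the supremum over $z\geq 0$ of $\tfrac{1}{2}\varphi_{yy}r^2\sigma^2 z^2$ diverges. Hence, under the contradiction assumption $\min\{-\varphi_t+\mathcal{G}\varphi,\,-\varphi_{yy}\}(t_0,y_0)<0$, the subcase $-\varphi_t+\mathcal{G}\varphi\geq 0$ together with $\varphi_{yy}>0$ is automatically excluded, and in what remains one always has $-\varphi_t+\mathcal{G}\varphi<0$. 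This supplies a single fixed constant control $\tilde z>h'(0)$ for which $-\bar\varphi_t+\mathcal{G}^{\tilde z}\bar\varphi<0$ on a small ball $B^r(t_0,y_0)$ (take $\tilde z$ large when $\varphi_{yy}>0$), and the It\^o-plus-WDPP contradiction follows without any race between drift and sojourn time. Your joint optimization of $(\epsilon,z)$ is solving a problem that does not arise.

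For the subsolution, your claim of ``a uniform $c>0$ coming from a lower bound on $E[\tau_n-t_n]$'' is false: the expected exit time from a fixed neighborhood is not bounded below over admissible controls, since it decays like $\epsilon^2/z^2$ as the constant control $z\to\infty$. Without a further gain, the sup over controls in \eqref{eq::principalvopen::weakdpeq1} erases your per-control gap and yields only $\hat V^{p,0}(t_n,y_n)\leq\varphi(t_n,y_n)$, which is not a contradiction. The paper sidesteps this by replacing $\varphi$ with the penalized test function $\bar\varphi(t,y):=\varphi(t,y)+\big(|t-t_0|^2+|y-y_0|^4\big)$: the penalty term is positive and bounded away from zero on $\partial B^r$, so the strict gap is harvested at the exit point of the controlled process rather than from the Itô integral, and it is therefore automatically uniform in the control. (The mirror modification with a minus sign appears in the supersolution proof.) You should include these penalty modifications; they are the device the WDPP framework of Bouchard--Touzi relies on. Finally, for the terminal condition, your zero-control lower bound $\hat V^{p,0}(t,y)\geq -\mathrm{e}^{(r\gamma-\rho)(T-t)}y^\gamma$ is correct and matches the paper's inequality \eqref{upperlowerboundvp0}, but the concavity and subsolution parts of \eqref{vi::principalv0openterminal} are left too vague; the paper delegates these to the general parabolic-boundary lemmas of \cite{pham2009continuous} rather than proving them from scratch.
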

\begin{proof}

\medskip
Firstly, we show the ${\hat{V}^{p,0}}_{*}(t,y)$ is the viscosity supersolution to variational inequality (\ref{vi::principalv0open}). Let $(t_0,y_0) \in [0,T)\times (0,\infty)$.
We pick a smooth function $\varphi:[0,T) \times (0,\infty) \to \mathbb{R}$ so that
$$0 = ({\hat{V}^{p,0}}_{*} - \varphi) (t_0,y_0) <  ({\hat{V}^{p,0}}_{*} - \varphi)(t,y),\quad \forall (t,y) \in [0,T) \times (0,\infty),\quad (t,y)\neq (t_0,y_0).$$
We aim to show
$$\min\{ -\varphi_t(t_0,y_0) + \mathcal{G}\varphi(t_0,y_0) , - \varphi_{yy}(t_0,y_0)\}  \geq  0.$$
We prove the above inequality through contradiction. We assume that
at $(t_0,y_0)$
$$\min\{ -\varphi_t(t_0,y_0) + \mathcal{G}\varphi(t_0,y_0) , - \varphi_{yy}(t_0,y_0)\}  < 0.$$
Moreover, for the variational inequality above, there are three possible situations:
\begin{enumerate}
    \item $-\varphi_t(t_0,y_0) + \mathcal{G}\varphi(t_0,y_0) < 0\;,\;  - \varphi_{yy}(t_0,y_0)  < 0,$
    \item $-\varphi_t(t_0,y_0) + \mathcal{G}\varphi(t_0,y_0) <  0  \;,\; - \varphi_{yy}(t_0,y_0)  \geq  0,$
    \item $-\varphi_t(t_0,y_0) + \mathcal{G}\varphi(t_0,y_0) \geq  0  \;,\; - \varphi_{yy}(t_0,y_0)  < 0.$ \label{case3::exludes}
\end{enumerate}
From the definition of $\mathcal{G}$ in \eqref{operatorG}, we can exclude the subcase (\ref{case3::exludes}) above. \\Next, we introduce the function
$$\bar{\varphi}(t,y) :=\varphi(t,y) - (|t - t_0|^{2} + |y - y_0|^{4}).$$
By the exclusion of the subcase (\ref{case3::exludes}), combined with $\left(\varphi, \varphi_t, \varphi_y,  \varphi_{yy}\right)(t_0, y_0)=\left(\bar{\varphi},\bar{\varphi}_t, \bar{\varphi}_y,  \bar{\varphi}_{yy}\right)(t_0, y_0)$, there exists a constant control $\tilde{z} > h'(0) $, and $r > 0$, $r + t_0 < T$, such that
$$-\bar{\varphi}_t(t,y) + \mathcal{G}^{\tilde{z}}\bar{\varphi}(t,y) < 0 \;,\; \forall (t,y) \in B^{r}(t_0,y_0),$$
where $B^{r}\left(t_0, y_0\right)$ denotes the ball of radius $r$ and center $\left(t_0, y_0\right)$. Given this particular choice of $r$, we have
\begin{equation}\label{v0barvarphivarphi}
   \bar{\varphi}(t,y) \leq \varphi(t,y) -  2\eta, \;\text{for some}\; \eta > 0, (t,y) \in ( [0,T] \times (0,\infty)  ) \setminus B^{r}(t_0,y_0) . 
\end{equation}
Hence, rescaling the previous inequality, we obtain
\begin{equation}\label{v0barvarphivarphi.1}
   \bar{\varphi}(t,y) \leq \varphi(t,y) -  2\mathrm{e}^{\rho(t_0 + r)}\tilde{\eta},\;\tilde{\eta} > 0, (t,y) \in ( [0,T] \times (0,\infty)  ) \setminus B^{r}(t_0,y_0)  .
\end{equation}
Let $(t_m,y_m)_{m\geq 1}$ be a sequence in $B^{r}(t_0,y_0)$, such that 
$$(t_m,y_m) \to (t_0,y_0), \quad  \quad \hat{V}^{p,0}(t_m,y_m) \to {\hat{V}^{p,0}}_{*}(t_0,y_0). $$
For $m$ big enough, the following inequality holds
\begin{equation*}
  \hat{V}^{p,0}(t_m,y_m) \leq  \bar{\varphi}(t_m,y_m) + \tilde{\eta}  .
\end{equation*}
Then, we define the sequence of the stopping times $(\tau_{m})_{m\geq 1}$ by
$$\tau_m := \inf\{s \geq t_m : (s, Y^{t_m,y_m,\tilde{z}}_s) \notin B^{r}(t_0,y_0) \}.$$ Applying Ito's lemma to $\bar{\varphi}$, we have 
\begin{equation*}
\begin{split}
        &\bar{\varphi}(t_m,y_m)\\
      =& \mathbb{E}\left[ \mathrm{e}^{-\rho (\tau_m - t_m)}\bar{\varphi}(\tau_m, Y^{t_m,y_m,\tilde{z}}_{\tau_m}) +  \int_{t_m}^{\tau_m}\mathrm{e}^{-\rho (s - t_m)}( -\bar{\varphi}_t(s,Y^{t_m,y_m,\tilde{z}}_{\tau_s}) + \mathcal{L}^{\tilde{z}}\bar{\varphi}(s,Y^{t_m,y_m,\tilde{z}}_{\tau_s}) )  ds  \right] \\
     \leq&  \mathbb{E}\left[  \mathrm{e}^{-\rho (\tau_m - t_m)}\bar{\varphi}(\tau_m, Y^{t_m,y_m,\tilde{z}}_{\tau_m}) +\int_{t_m}^{\tau_m} \rho \hat{a}_s   \mathrm{e}^{-\rho (s-t_m)} ds   \right],\\ 
\end{split}
\end{equation*}
where
\begin{equation}\label{Loperatorz}
\mathcal{L}^{\tilde{z}}\bar{\varphi}(t,y): = \rho \bar{\varphi} - \sup_{\hat{a} \in \hat{\mathcal{A}}(\tilde{z})}\{rh(\hat{a})\bar{\varphi}_y + \frac{1}{2}\bar{\varphi}_{yy}r^{2}\sigma^{2}{\tilde{z}}^{2}  \} - r \bar{\varphi}_y y .
\end{equation}
As a result, we have 
\begin{equation*}
    \begin{split}
        \hat{V}^{p,0}(t_m,y_m) &\leq \bar{\varphi}(t_m,y_m) + \tilde{\eta
        }\\
        &  \leq  \mathbb{E}\left[ \mathrm{e}^{-\rho (\tau_m - t_m)}\bar{\varphi}(\tau_m, Y^{t_m,y_m,\tilde{z}}_{\tau_m}) + \int_{t_m}^{\tau_m} \rho \hat{a}_s   \mathrm{e}^{-\rho (s-t_m)} ds \right] + \tilde{\eta
        } \\
         & \leq  \mathbb{E}\left[ \mathrm{e}^{-\rho (\tau_m - t_m)}\varphi(\tau_m, Y^{t_m,y_m,\tilde{z}}_{\tau_m}) +\int_{t_m}^{\tau_m} \rho \hat{a}_s   \mathrm{e}^{-\rho (s-t_m)} ds  \right] - \tilde{\eta
        },\\
    \end{split}
\end{equation*}
which contradicts the WDPP inequality (\ref{eq::principalvopen::weakdpeq2}).

\medskip
Secondly, we show that ${{\hat{V}}}^{{p,0}^*}$ is a viscosity subsolution to the variational inequality (\ref{vi::principalv0open}). We proceed by contradiction. Assume there exist $\left(t_0, y_0\right) \in[0, T) \times (0,\infty)$ and a smooth test function $\varphi:[0, T) \times (0,\infty) \to \mathbb{R}$ satisfying
$$0 = ({{\hat{V}}}^{{p,0}^*} - \varphi) (t_0,y_0) >  ( {{\hat{V}}}^{{p,0}^*} - \varphi)(t,y) \;,\; \forall (t,y) \in [0,T) \times (0,\infty) , (t,y)\neq (t_0,y_0),$$
and
$$\min\{ -\varphi_t(t_0,y_0) + \mathcal{G}\varphi(t_0,y_0) , - \varphi_{yy}(t_0,y_0)\}  > 0.$$\\
Next, we define the function
$$\bar{\varphi}(t,y) :=\varphi(t,y) + (|t - t_0|^{2} + |y - y_0|^{4}).$$
Based on $\left(\varphi, \varphi_t, \varphi_y,  \varphi_{yy}\right)(t_0, y_0)=\left(\bar{\varphi},\bar{\varphi}_t, \bar{\varphi}_y,  \bar{\varphi}_{yy}\right)(t_0, y_0)$, and the continuity of $\mathcal{G}$, we obtain
\begin{equation*}
    -\bar{\varphi}_t(t,y) + \mathcal{G}^{\tilde{z}}\bar{\varphi}(t,y) > 0,  \forall \tilde{z} \in [0,\infty) , \quad  \forall (t,y) \in B^{r_\e}(t_0,y_0).
\end{equation*}
With choice of $r$, through the similar scaling argument as inequality (\ref{v0barvarphivarphi.1}), we have
\begin{equation}\label{v0barvarphivarphisub}
   \bar{\varphi}(t,y) \geq \varphi(t,y) + 2 e^{\rho(t_0+r)}  \tilde{\eta} \quad, \tilde{\eta} > 0, (t,y) \in ( [0,T] \times (0,\infty)  ) \setminus B^{r}(t_0,y_0).  
\end{equation}
Let $(t_m,y_m)_{m\geq 1}$  be a sequence in $B^{r}(t_0,y_0)$ such that 
$$(t_m,y_m) \to (t_0,y_0) \; , \; \hat{V}^{p,0}(t_m,y_m) \to {{\hat{V}}}^{{p,0}^*}(t_0,y_0).$$
Let $(\tau_{m})_{m\geq 1}$ be a sequence of stopping times defined through picking arbitrary controls $\tilde{z}^{m} \in \hat{\mathcal{U}}^{0}(t_m,y_m)$
$$\tau_m := \inf\{s \geq t_m : (s, Y^{t_m,y_m,\tilde{z}^{m}}_s) \notin B^{r}(t_0,y_0) \}.$$ Applying Ito's lemma to $\bar{\varphi}$, we have
\begin{small}
\begin{equation*}
\begin{split}
     \bar{\varphi}(t_m,y_m) &= \mathbb{E}\left[ \mathrm{e}^{-\rho (\tau_m - t_m)}\bar{\varphi}(\tau_m, Y^{t_m,y_m,\tilde{z}^{m}}_{\tau_m}) +  \int_{t_m}^{\tau_m}\mathrm{e}^{-\rho (s - t_m)}( -\bar{\varphi}_t(s,Y^{t_m,y_m,\tilde{z}^{m}}_{\tau_s}) + \mathcal{G}^{\tilde{z}^{m}}\bar{\varphi}(s,Y^{t_m,y_m,\tilde{z}^{m}}_{\tau_s}) )  ds   \right] \\
     &\geq  \mathbb{E}\left[ \mathrm{e}^{-\rho (\tau_m - t_m)}\bar{\varphi}(\tau_m, Y^{t_m,y_m,\tilde{z}^{m}}_{\tau_m}) + \int_{t_m}^{\tau_m} \rho \mathrm{e}^{-\rho (s - t_m)} \hat{a}^{\tilde{z}^{m}}_s ds \right]. \\ 
\end{split}
\end{equation*}
\end{small}
The above inequality follows directly from
\begin{equation*}
    \begin{split}
       &\quad -\bar{\varphi}_t(t,y)  + \mathcal{G}^{\tilde{z}^{m}}\bar{\varphi}(t,y)  \\
       &=  -\bar{\varphi}_t(t,y)  +  \rho \bar{\varphi} - \sup_{\hat{a} \in \hat{\mathcal{A}}(\tilde{z}^{m})}\{rh(\hat{a})\bar{\varphi}_y + \rho \hat{a} + \frac{1}{2}\bar{\varphi}_{yy}r^{2}\sigma^{2}\tilde{z}^{{m}^{2}}  \} - r\bar{\varphi}_y y\\
       & > 0 \; ,  \; \forall (t,y) \in B^{r}(t_0,y_0).
    \end{split}
\end{equation*}
By (\ref{v0barvarphivarphisub}), we have
\begin{equation*}
    \bar{\varphi}(t_m,y_m) \geq \mathbb{E}\Bigg[  \mathrm{e}^{-\rho (\tau_m - t_m)}\varphi(\tau_m, Y^{t_m,y_m,\tilde{z}^{m}}_{\tau_m}) + \int_{t_m}^{\tau_m} \rho \mathrm{e}^{-\rho (s - t_m)} \hat{a}^{\tilde{z}^{m}}_s ds\Bigg]  + 2\tilde{\eta}.
\end{equation*}
In addition, for sufficiently large $m$,  we have $  \hat{V}^{p,0}(t_m,y_m) +  \tilde{\eta} \geq \bar{\varphi}(t_m,y_m)$. As a result, we obtain
\begin{equation*}
    \hat{V}^{p,0}(t_m,y_m) \geq \mathbb{E}\Bigg[  \mathrm{e}^{-\rho (\tau_m - t_m)}\varphi(\tau_m, Y^{t_m,y_m,\tilde{z}^{m}}_{\tau_m}) + \int_{t_m}^{\tau_m} \rho \mathrm{e}^{-\rho (s - t_m)} \hat{a}^{\tilde{z}^{m}}_s ds\Bigg]  + \tilde{\eta},
\end{equation*}
for any arbitrary admissible control $\tilde{z}^{m}$. Since $\varphi\geq \hat{V}^{p,0}$, we reach a contradiction to the inequality (\ref{eq::principalvopen::weakdpeq1}).\\
Next, we consider the viscosity characterization of
$\hat{V}^{p,0}$ at the terminal time $T$.\\
Firstly, we show that ${\hat{V}^{p,0}}_{*}$ is the viscosity supersolution to the following variational inequality:
$$ \min\{v^{p,0}(T-,y) - F(y) , - v^{p,0}_{yy}(T-,y) \} = 0.$$
Applying \cite[Lemma 4.3.2]{pham2009continuous}, we could deduce that ${\hat{V}^{p,0}}_{*}$ is the viscosity supersolution.\\
Secondly, we show that $\hat{V}^{{p,0}^{*}}(T,y)$  is a viscosity subsolution to variational inequality (\ref{vi::principalv0openterminal}). With help of inequality (\ref{eq::principalvopen::weakdpeq1}), we could deduce that \cite[Lemma 4.3.3]{pham2009continuous} is true. Through applying  \cite[Lemma 4.3.4]{pham2009continuous}, we conclude that  $\hat{V}^{{p,0}^{*}}(T,y)$  is a viscosity subsolution to variational inequality (\ref{vi::principalv0openterminal}).
The proof is complete.\qed
\end{proof}
\medskip
Following Lemma \ref{viscositycharacterizationv0} above, we have the conclusion that ${\hat{V}^{p,0}}_{*}(t,y)$ ($ {{\hat{V}}}^{{p,0}^*}(t,y)$) is a lower semicontinuous (upper semicontinuous) viscosity supersolution (viscosity subsolution) of the Hamilton-Jacobi-Bellman variational inequalities (\ref{vi::principalv0open}),(\ref{vi::principalv0openterminal}).

\medskip
Next, we focus on the derivation of the desired upper bound on $\hat{V}^{p,0}$.\\
Consider $(t,y) \in [0,T] \times (0,\infty)$, a control process $Z \in \hat{\mathcal{U}}^{0}(t,y)$, and an $\FF$-stopping time $\theta$. Applying Itô's Lemma, we obtain
\begin{align} \nonumber
         &\quad \mathrm{e}^{-\rho(\theta - t)}\varphi^{A^{(0,d)},d,\Delta_N}\left(\theta, Y_{\theta}^{t, y, Z}\right) - \varphi^{A^{(0,d)},d,\Delta_N}(t, y) + \int_{t}^{\theta}\rho  \mathrm{e}^{-\rho(s-t)}\hat{a}^{z}_s ds\\  \nonumber
         & =  - \int_{t}^{\theta} \mathrm{e}^{-\rho(s-t)} \left( -\varphi^{A^{(0,d)},d,\Delta_N}_t(s,Y^{t,y,Z}_{s}) + \mathcal{L}^{z}\varphi^{A^{(0,d)},d,\Delta_N}(s,Y^{t,y,Z}_{s}) - \rho\hat{a}_s \right) ds \\ \label{super.sol.ineq::v0}
         &\quad + \sigma\int_{t}^{\theta} \mathrm{e}^{-\rho(s-t)} \varphi_y^{A^{(0,d)},d,\Delta_N}(s,Y^{t,y,Z}_{s}) Z_s dB_s,
\end{align}
where $A^{(0,d)}(t) = 1$, $\Delta_N$ is constructed in Lemma \ref{supersolconstruction}, and the operator $\mathcal{L}^z$ is defined in  (\ref{Loperatorz}).
Then we introduce a localizing sequence of $\FF$-stopping times $(\theta_n)_{n\in \NN} $ defined by
$$\theta_n:=\inf \left\{u \geq t:\left| \mathrm{e}^{-\rho(s-t)} \varphi_y^{A^{n},d,\Delta_N}(s,Y^{t,y,Z}_{s})Z_s \right| \geq n\right\} \wedge T.$$
\medskip
Fixing $\theta := \theta_n$, and taking expectations in \eqref{super.sol.ineq::v0}, we obtain from Lemma \ref{supersolconstruction} 
\begin{equation*}
   \varphi^{A^{(0,d)},d,\Delta_N}(t, y) \geq \EE\left[ \mathrm{e}^{-\rho(\theta_n - t)}\varphi^{A^{(0,d)},d,\Delta_N}\left(\theta_n, Y_{\theta_n}^{t, y, Z}\right)  + \int_{t}^{\theta_n}\rho  \mathrm{e}^{-\rho(s-t)}\hat{a}^{z}_s ds \right].
\end{equation*}
Next, by the set of admissible controls $\hat{\mathcal{U}}^{0}(t,y)$ (\ref{condition::interability::YZ::reduction2}) and standard SDE estimates, we have
\begin{align*}
\EE\left[ \sup_{ t \leq \theta \leq T} \left| \int_{t}^{\theta}\rho  \mathrm{e}^{-\rho(\theta -t)}\hat{a}^{z}_s ds  + \varphi^{A^{(0,d)},d,\Delta_N}\left(\theta, Y_{\theta}^{t, y, Z}\right) \right|\right] < \infty.
\end{align*}
 Hence, using that $\lim_{n\rightarrow\infty} \theta_n = T, \PP-a.s.$ and the continuity of $\varphi^{A^{(0,d)},d,\Delta_N}$, we obtain by the dominated convergence theorem
\begin{equation}\label{ineq.value}
     \varphi^{A^{(0,d)},d,\Delta_N}(t, y)  \geq \EE\left[  \mathrm{e}^{-\rho(T - t)}\varphi^{A^{(0,d)},d,\Delta_N}\left(T, Y_{T}^{t, y, Z}\right) + \int_{t}^{T}\rho  \mathrm{e}^{-\rho(s-t)}\hat{a}^{Z}_s ds  \right].
\end{equation}
Taking the supremum over $Z \in \hat{\mathcal{U}}^{0}(t,y)$ in \eqref{ineq.value}, and using that $\varphi^{A^{(0,d)},d,\Delta_N}(T, y) \geq F(y)$ for all $y\geq0$, we obtain
\begin{align}\label{limit.lower.envelope}
     \varphi^{A^{(0,d)},d,\Delta_N}(t, y) \geq \hat{V}^{p,0}(t,y).
\end{align}
Moreover, setting $Z=0$, we get the following inequalities
\begin{equation}\label{upperlowerboundvp0}
   - \mathrm{e}^{(T-t)(r\gamma - \rho)}y^{\gamma}  \leq \hat{V}^{p,0}(t,y) \leq  \varphi^{A^{(0,d)},d,\Delta_N}(t, y).
\end{equation}
The previous implies
$$ \lim_{(t',y') \to (t,0),y' > 0} - \mathrm{e}^{(T-t')(r\gamma - \rho)}{y'}^{\gamma}  \leq {\hat{V}^{p,0}}_{*}(t,0) \leq {{\hat{V}}}^{{p,0}^*}(t,0) \leq  \lim_{(t',y') \to (t,0),y' > 0}  \varphi^{A^{(0,d)},d,\Delta_N}(t', y').$$
Finally, we discuss the terminal condition. From (\ref{upperlowerboundvp0}), we know that $\hat{V}^{p,0}$ has polynomial growth in $y$. Moreover, the variational inequality (\ref{vi::principalv0openterminal}) falls within the framework of \cite[Section V.1]{ishii1990viscosity}. Consequently, the comparison principle holds, and we conclude that
$$
\hat{V}^{p, 0^*}(T, y)=\hat{V}_*^{p, 0}(T, y)=-y^\gamma \;,\;  y \in (0,\infty).
$$
After discussing the behaviour of $\hat{V}^{p,0}$ at the terminal and boundary regions, we apply Lemma \ref{cmp::v0hjbvi} (comparison principle) to deduce that $\hat{V}^{p,0}$ is the unique continuous viscosity solution of the following Dirichlet problem
\begin{equation}\label{hjbvi:pp:dirichletv0}
\begin{split}
 \min\{ -v^{p,0}_t + \mathcal{G}v^{p,0}  , - v^{p,0}_{yy}(t,y)\} &= 0 \;,\;    [0,T) \times (0,\infty) ,\\
v^{p,0}(T,y) &= -y^{\gamma} \;,\; y\in [0,\infty),\\
v^{p,0}(t,0) &= 0 \;,\; t \in [0,T]. \\
\end{split}   
\end{equation}
Following a similar argument to Lemma \ref{viscositycharacterizationv0}, we conclude that ${V^{p,0}}^{*}$ is the viscosity subsolution to Dirichlet Problem (\ref{hjbvi:pp:dirichletv0}). Finally, we have
$${\hat{V}^{p,0}}_{*}(t,y) \leq {V^{p,0}_{*}}(t,y)  \leq {V}^{{p,0}^*}(t,y) \leq {\hat{V}^{p,0}}_{*}(t,y)\leq {\hat{V}}^{{p,0}^{*}}(t,y), (t,y) \in [0,T) \times (0,\infty),$$
where we used Lemma \ref{cmp::v0hjbvi} in the third inequality. In conclusion, $V^{p,0}$ is the unique continuous viscosity solution to the Dirichlet Problem (\ref{hjbvi:pp:dirichletv0}).

\subsubsection{Proof of Theorem \ref{hjbvi:pp:in}.2 on $V^{p,0}$}
Since $V^{p,0}$ is a viscosity supersolution of the Dirichlet Problem (\ref{hjbvi:pp:dirichletv0}), we must have $- V^{p,0}_{yy}(t,y) \geq 0$ for all $(t,y)\in [0,T]\times [0,\infty) $ in the viscosity sense; otherwise, $\mathcal{J}^{+}\left(V^{p,0}_{y}(t,y), V^{p,0}_{yy}(t,y)\right)=\infty$. This implies that $V^{p,0}(t,\cdot)$ is concave for all $t\in [0,T]$.

\subsection{Proof of Theorem \ref{hjbvi:pp:in} for $V^{p,n}$}
After showing the baseline case, we prove Theorem \ref{hjbvi:pp:in} by induction. In particular, our induction hypothesis assumes that Theorem \ref{hjbvi:pp:in} holds for $V^{p, n-1}$ with $n \in \{1,\ldots,N\}$, and our objective is to show that it also holds for $V^{p, n}$. We provide a detailed discussion for the case $n=1$. The case $n\geq 2$ will follow an identical procedure. We begin by introducing the open-state-constraint value function
\begin{equation}
\begin{split}
\hat{V}^{p,1}(t,y) &:= \sup_{(\nu^{1}) \in {\hat{\mathcal{U}}}^{1}(t,y),\hat{a}\in  \hat{\mathcal{A}}(\nu^{1})}J^{p,1}(t,y;\nu^{1},\hat{a}),  \\ 
J^{p,1}(t,y;\nu^{1},\hat{a})&:= \mathbb{E}\left[ \int_t^T \rho \mathrm{e}^{-\rho(s-t)}\left(\hat{a}_s\right) \mathrm{d} s-   \mathrm{e}^{-\rho (\tau_1 - t)} (1+k)\eta^{\gamma}_1-\mathrm{e}^{-\rho(T-t)}u^{-1}\left(Y_T^{t, y, \nu^{1} }\right) \right],
\end{split}
\end{equation}
where $\nu^{1} := (Z,\tau_1,\eta_1)$, and 
$$Y_{\hat{t}}^{t,y, \nu^{1},\hat{a}} = y + \int_t^{\hat{t}} r\left(Y_s^{t,y,\nu^{1},\hat{a}}+h\left(\hat{a}_s\right)\right) \mathrm{d} s+\int_t^{\hat{t}}r Z_s \sigma \mathrm{~d} B_s - \eta_1\mathbbm{1}_{[\tau_1,T]}(\hat{t})  \hspace{1 mm} ,\hspace{1 mm} t\leq\hat{t} \leq T.$$
Correspondingly, we define the continuous process of $\bar{Y}$ without impulse control
$$\bar{Y}_{\hat{t}}^{t,y,Z,\hat{a}} := y + \int_t^{\hat{t}} r\left(\bar{Y}_s^{t,y,Z,\hat{a}}+h\left(\hat{a}_s\right)\right) \mathrm{d} s+\int_t^{\hat{t}}r Z_s \sigma \mathrm{~d} B_s   \hspace{1 mm} ,\hspace{1 mm} t\leq\hat{t} \leq T.$$
And we further set $Y_{t-}:= y$,  $\hat{\mathcal{U}}^{1}(t,y):= \left\{\nu^{1} \in \mathcal{U}^{1}(t,y) : \hspace{1 mm} Y_s^{t,y,\nu^{1},\hat{a}}>0 \hspace{1 mm} \PP -a.s \hspace{1 mm},\hspace{1 mm} \hat{a} \in \hat{\mathcal{A}}(\nu^{1}) \hspace{1 mm} , \hspace{1 mm} t \leq s \leq T \right\}$, for all $(t,y) \in [0,T] \times (0,\infty)$.

\medskip
The following lemma establishes a crucial link between $\hat{V}^0$ and $\hat{V}^1$. 
\begin{lemma}\label{lm::hatVp1Vp0}The following representation holds
\begin{equation}\label{vpkpk-1}
    \hat{V}^{p,1}(t,y) = \sup_{(Z,\tau_1) \in \hat{\mathcal{U}}^{1}(t,y) , \hat{a} \in \hat{\mathcal{A}}(Z)} \mathbb{E} \left[\int_{t}^{\tau_1}\rho  \mathrm{e}^{-\rho(s-t)}\hat{a}_s ds + \mathrm{e}^{-\rho(\tau_1 - t)}\mathcal{M}\hat{V}^{p,0}(\tau_1,\bar{Y}^{t,y,Z}_{\tau_1})\right].
\end{equation}
\end{lemma}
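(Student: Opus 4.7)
The plan is to prove the two inequalities separately using a dynamic programming style argument, where the stopping time $\tau_1$ of the single bonus serves as the natural decomposition point for the control problem.

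For the inequality $\hat{V}^{p,1}(t,y) \leq \sup \mathbb{E}[\cdots]$ (the upper bound), I would fix an arbitrary admissible $\nu^{1} = (Z,\tau_1,\eta_1) \in \hat{\mathcal{U}}^{1}(t,y)$ with associated $\hat{a} \in \hat{\mathcal{A}}(Z)$, and decompose the reward functional at the stopping time $\tau_1$. Conditioning on $\mathcal{F}_{\tau_1}$ and using the strong Markov property of the diffusion $Y$ together with the tower property, the conditional expectation of the terms from $\tau_1$ to $T$ coincides with a value functional of the no-bonus problem starting from state $(\tau_1, \bar{Y}^{t,y,Z}_{\tau_1} - \eta_1)$. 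Thus this conditional expectation is bounded above by $\hat{V}^{p,0}(\tau_1, \bar{Y}^{t,y,Z}_{\tau_1} - \eta_1)$. After subtracting the transaction cost $(1+k)\eta_1^{\gamma}$ and taking the essential supremum over the $\mathcal{F}_{\tau_1}$-measurable random variable $\eta_1 \in [0, \bar{Y}^{t,y,Z}_{\tau_1}]$, the operator $\mathcal{M}$ appears by definition \eqref{operatorM}, giving the desired inequality.

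For the reverse inequality, I would fix $\varepsilon > 0$ and any $(Z,\tau_1)$ such that $Z$ is admissible for the no-bonus problem up to time $\tau_1$ in the sense that $\bar{Y}^{t,y,Z}_s > 0$ for $s \in [t,\tau_1]$. The construction of a near-optimal $\eta_1$ requires a measurable selection: since $\mathcal{M}\hat{V}^{p,0}(\tau_1, \bar{Y}^{t,y,Z}_{\tau_1})$ is defined as a supremum of a jointly measurable function over a compact interval, a classical measurable selection theorem (e.g., Filippov's implicit function theorem or Berge's maximum theorem combined with the continuity of $\hat{V}^{p,0}$ established in Section \ref{thm2.1::vp0}) yields an $\mathcal{F}_{\tau_1}$-measurable $\eta_1^{\varepsilon}$ that attains the supremum within $\varepsilon$. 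Then, on the event $\{\bar{Y}^{t,y,Z}_{\tau_1} - \eta_1^{\varepsilon} > 0\}$, I would concatenate $Z$ with an $\varepsilon$-optimal control for the no-bonus problem on $[\tau_1, T]$ starting from $(\tau_1, \bar{Y}^{t,y,Z}_{\tau_1} - \eta_1^{\varepsilon})$. This requires a second measurable selection of an $\varepsilon$-optimal control for $\hat{V}^{p,0}$, parametrized by the $\mathcal{F}_{\tau_1}$-measurable initial data; one standard workaround is to use a countable partition argument on the state variable (as in the classical proof of DPP for controlled diffusions), choosing deterministic $\varepsilon$-optimal controls on each cell. Taking expectations and letting $\varepsilon \to 0$ closes the gap, provided the integrability conditions are preserved, which follows from the super-solution bound \eqref{upperlowerboundvp0}.

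The main obstacle, as is customary for dynamic programming proofs in the strong formulation, is the measurable selection step used to paste the near-optimal continuation strategy after $\tau_1$. The difficulty is compounded by the open state constraint $Y > 0$: the selected continuation control must keep the post-jump process strictly positive almost surely, which prevents a naive application of measurable selection. I would handle this by first restricting attention to stopping times and jumps satisfying $\bar{Y}^{t,y,Z}_{\tau_1} - \eta_1^{\varepsilon} \geq \delta > 0$, proving the representation on this enlarged admissible class, and then passing to the limit $\delta \downarrow 0$ using the continuity of $\hat{V}^{p,0}$ at the boundary (where $\hat{V}^{p,0}(\cdot,0) = 0$) together with the monotone convergence theorem applied to the admissible controls. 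Alternatively, one may bypass the strong DPP entirely by invoking the weak dynamic programming principle \eqref{eq::principalvopen::weakdpeq1}--\eqref{eq::principalvopen::weakdpeq2} for the combined mixed control-stopping problem, exactly as in the baseline case treated in Lemma \ref{viscositycharacterizationv0}, which sidesteps the measurable selection issue at the cost of a slightly weaker intermediate statement that is nonetheless sufficient once combined with the continuity of $\hat{V}^{p,0}$.
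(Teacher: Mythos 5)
Your proposal is correct and follows essentially the same route as the paper: decompose the reward at $\tau_1$ via the tower property and invoke $\hat{V}^{p,0}$ together with the operator $\mathcal{M}$ for the easy direction, then paste an $\varepsilon$-optimal continuation control after $\tau_1$ by a measurable selection argument for the hard direction. Your more explicit attention to the open state constraint (the $\delta$-approximation closed via continuity of $\hat{V}^{p,0}$ at the boundary) and to the measurable selection implicit in moving the supremum over $\mathcal{F}_{\tau_1}$-measurable $\eta_1$ inside the expectation are legitimate refinements of steps the paper leaves tacit, and the weak-DPP alternative you mention is not the paper's argument for this lemma but is a plausible fallback.
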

\begin{proof}
Firstly, we show the easy direction. 
Starting from the bonus payment time valued at $\tau_1 \in [t,T)$ and applying the tower property of conditional expectation, we obtain
\begin{equation*}
\begin{split}
 J^{p,1}(t,y;\nu^{1},\hat{a})
     &=  \mathbb{E}\left[ \int_t^{T} \rho \mathrm{e}^{-\rho (s -t) }\left(\hat{a}_s \right) \mathrm{d} s   - \mathrm{e}^{-\rho (\tau_1 - t)} (1+k)\eta^{\gamma}_1   - \mathrm{e}^{\rho(T-t)}u^{-1}(Y^{t,y, Z,\tau_1,\eta_1}_{T})    \right]    \\
     &=\mathbb{E}\left[ \int_t^{\tau_1} \rho \mathrm{e}^{-\rho (s -t) }\left(\hat{a}_s \right) \mathrm{d} s   + \mathrm{e}^{-\rho (\tau_1 - t)} J^{p,0}(\tau_1,\bar{Y}^{t,y,Z,\tau_1}_{\tau_1} -\eta_1 ) -\mathrm{e}^{-\rho (\tau_1 - t)} (1+k)\eta^{\gamma}_1   \right] \\
     &\leq \mathbb{E}\left[ \int_t^{\tau_1} \rho \mathrm{e}^{-\rho (s -t) }\left(\hat{a}_s \right) \mathrm{d} s   + \mathrm{e}^{-\rho (\tau_1 - t)} \hat{V}^{p,0}(\tau_1,\bar{Y}^{t,y,Z,\tau_1}_{\tau_1} -\eta_1)  - \mathrm{e}^{-\rho (\tau_1 - t)} (1+k)\eta^{\gamma}_1   \right] \\
     &\leq \mathbb{E}\left[ \int_t^{\tau_1} \rho \mathrm{e}^{-\rho (s -t) }\left(\hat{a}_s \right) \mathrm{d} s   + \mathrm{e}^{-\rho (\tau_1 - t)} \mathcal{M}\hat{V}^{p,0}(\tau_1,\bar{Y}^{t,y,Z}_{\tau_1})  \right] \\
     &\leq \sup_{(Z,\tau_1) \in \mathcal{\hat{U}}^{1}(t,y) , \hat{a} \in \hat{\mathcal{A}}(Z) } \mathbb{E}\left[ \int_t^{\tau_1} \rho \mathrm{e}^{-\rho (s -t) }\left(\hat{a}_s \right) \mathrm{d} s   + \mathrm{e}^{-\rho (\tau_1 - t)} \mathcal{M}\hat{V}^{p,0}(\tau_1,\bar{Y}^{t,y,Z}_{\tau_1})   \right]. 
\end{split}
\end{equation*}
By taking the supremum over admissible control $(Z,\tau_1,\eta_1) \in \mathcal{\hat{U}}^{1}(t,y)$, we conclude
\begin{equation}\label{vp1Mv0<=}
  \hat{V}^{p,1}(t,y) \leq \sup_{(Z,\tau_1) \in \mathcal{\hat{U}}^{1}(t,y)  , \hat{a} \in \hat{\mathcal{A}}(Z) } \mathbb{E}\left[ \int_t^{\tau_1} \rho \mathrm{e}^{-\rho (s -t) }\left(\hat{a}_s \right) \mathrm{d} s   + \mathrm{e}^{-\rho (\tau_1 - t)} \mathcal{M}\hat{V}^{p,0}(\tau_1,\bar{Y}^{t,y,Z}_{\tau_1})   \right]  
\end{equation}
Next, we show the hard direction. 
Fix some arbitrary control $Z \in {\hat{\mathcal{U}}}^{1}(t, y)$ and $ \tau_1 \in \mathcal{T}_{[t, T)}$, $\eta_1(\omega) \in [0, Y^{t,y,Z}_{\tau_1-}(\omega)) $. By the continuity of $\hat{V}^{p,0}(t,y)$ established in \ref{thm2.1::vp0}, noting the impulse time $\tau_1$ valued in $[t,T)$ which is predictable, for any $\mathrm{e}^{\rho T} \varepsilon>0$ and $\omega \in \Omega$, 
the set of $\mathrm{e}^{\rho T} \varepsilon$-optimal control for $\hat{V}^{p,0}\left(\tau_1(\omega), Y_{\tau_1(\omega)}^{t, y,Z,\eta_1}(\omega)\right)$ defined as
$$\left\{ Z \in {\hat{\mathcal{U}}}^{0}\left(\tau_1(\omega), Y_{\tau_1(\omega)}^{t, y, Z, \eta_1}(\omega)\right):\hat{V}^{p,0}\left(\tau_1(\omega), Y_{\tau_1(\omega)}^{t, y,Z,\eta_1}(\omega)\right) -\mathrm{e}^{\rho T} \varepsilon \leq J^{p,0}\left( \tau_1(\omega), Y_{\tau_1(\omega)}^{t, y , Z ,\eta_1}(\omega); Z\right)\right\}$$
is not empty. By a measurable selection argument, there exists $Z^{\e,\omega}$ in this set so that the triplet $\left(\hat{Z}, \tau_1, \eta_1\right)$ belongs to $\hat{\mathcal{U}}^1(t, y)$, where 
$$
\hat{Z}_s(\omega)=\left\{\begin{array}{c}
Z_s(\omega)\;,\;  s \in[0, \tau_1(\omega)], \\
Z_s^{\varepsilon, \omega}(\omega)\;,\; s \in(\tau_1(\omega), T].
\end{array}\right.
$$
 Then, we have
\begin{equation*}
\begin{split}
     &\hat{V}^{p,1}(t,y)\\
     \geq& J^{p,1}(t,y;(\hat{Z},\tau_1,\eta_1) )\\
     =  &\mathbb{E}\left[ \int_t^{T} \rho \mathrm{e}^{-\rho (s -t) }\left(\hat{a}_s \right) \mathrm{d} s   - \mathrm{e}^{-\rho (\tau_1 - t)} (1+k)\eta^{\gamma}_1   - \mathrm{e}^{\rho(T-t)}u^{-1}(Y^{t,y, Z,\bar{\tau}_N,\bar{\eta}_n}_{T})    \right]    \\
     =&\mathbb{E}\left[ \int_t^{\tau_1} \rho \mathrm{e}^{-\rho (s -t) }\left(\hat{a}_s \right) \mathrm{d} s   + \mathrm{e}^{-\rho (\tau_1 - t)} J^{p,0}(\tau_1,Y^{t,y,\hat{Z},\tau_1,\eta_1}_{\tau_1}) -\mathrm{e}^{-\rho (\tau_1 - t)} (1+k)\eta^{\gamma}_1   \right] \\
     =&\mathbb{E}\left[ \int_t^{\tau_1} \rho \mathrm{e}^{-\rho (s -t) }\left(\hat{a}_s \right) \mathrm{d} s   + \mathrm{e}^{-\rho (\tau_1 - t)} J^{p,0}(\tau_1,\bar{Y}^{t,y,\hat{Z},\tau_1}_{\tau_1} - \eta_1) -\mathrm{e}^{-\rho (\tau_1 - t)} (1+k)\eta^{\gamma}_1   \right] \\
     \geq& \mathbb{E}\left[ \int_t^{\tau_1} \rho \mathrm{e}^{-\rho (s -t) }\left(\hat{a}_s \right) \mathrm{d} s   + \mathrm{e}^{-\rho (\tau_1 - t)} \hat{V}^{p,0}(\tau_1,\bar{Y}^{t,y,Z,\tau_1}_{\tau_1} - \eta_1)  - \mathrm{e}^{-\rho (\tau_1 - t)} (1+k)\eta^{\gamma}_1   \right] - \e .\\
\end{split}
\end{equation*}
From the arbitraness of $(Z,\tau_1,\eta_1) \in  \mathcal{\hat{U}}^{1}(t,y)$, we deduce
\begin{equation*}
\begin{split}
        &\hat{V}^{p,1}(t,y) \\
         \geq& \sup_{(Z,\tau_1,\eta_1) \in  \mathcal{\hat{U}}^{1}(t,y) , \hat{a} \in \hat{\mathcal{A}}(Z)}\mathbb{E}\left[ \int_t^{\tau_1} \rho \mathrm{e}^{-\rho (s -t) }\left(\hat{a}_s \right) \mathrm{d} s   + \mathrm{e}^{-\rho (\tau_1 - t)} \hat{V}^{p,0}(\tau_1,\bar{Y}^{t,y,Z,\tau_1}_{\tau_1} - \eta_1) \right. \\
         &\quad\quad \quad\quad\quad\quad\quad\quad\quad\left. - \mathrm{e}^{-\rho (\tau_1 - t)} (1+k)\eta^{\gamma}_1   \right] - \e \\
        =&\sup_{(Z,\tau_1 ) \in  \mathcal{\hat{U}}^{1}(t,y) , \hat{a} \in \hat{\mathcal{A}}(Z)}\mathbb{E}\left[ \int_t^{\tau_1} \rho \mathrm{e}^{-\rho (s -t) }\left(\hat{a}_s \right) \mathrm{d} s   + \mathrm{e}^{-\rho (\tau_1 - t)} \mathcal{M}\hat{V}^{p,0}(\tau_1,\bar{Y}^{t,y,Z}_{\tau_1})   \right] -\e.
\end{split}
\end{equation*}
As a result, we obtain the desired inequality
\begin{equation}\label{vp1Mv0>=}
  \hat{V}^{p,1}(t,y) \geq \sup_{(Z,\tau_1) \in \mathcal{\hat{U}}^{1}(t,y) , \hat{a} \in \hat{\mathcal{A}}(Z) } \mathbb{E}\left[ \int_t^{\tau_1} \rho \mathrm{e}^{-\rho (s -t) }\left(\hat{a}_s \right) \mathrm{d} s   + \mathrm{e}^{-\rho (\tau_1 - t)} \mathcal{M}\hat{V}^{p,0}(\tau_1,\bar{Y}^{t,y,Z}_{\tau_1})   \right].  
\end{equation}
Combining the inequalities (\ref{vp1Mv0<=}) and (\ref{vp1Mv0>=}), we prove the lemma. 
\qed
\end{proof}

\begin{lemma}\label{supersolvpk}
\medskip

Let $\Delta_N$ be the three-tuple introduced in Lemma \ref{supersolconstruction}. Then, for every $(t, y) \in[0, T] \times (0,\infty), \nu^{1}\in \hat{\mathcal{U}}^{1}(t, y)$, and  $\theta \in \mathcal{T}_{[t,T)}$, the random variable $\varphi^{A^{(1,d)},d,\Delta_N}\left(\theta, \bar{Y}_{\theta}^{t,y ,\nu^{1},\hat{a}}\right)$ is integrable and satisfies
$$
\mathbb{E}\left[\int_{t}^{\theta}\rho  \mathrm{e}^{-\rho(s-t)}\hat{a}_s \mathrm{d} s + \mathrm{e^{-\rho(\theta - t)}}\varphi^{A^{(1,d)},d,\Delta_N}\left(\theta, \bar{Y}_{\theta}^{t, y,\nu^{1},\hat{a}}\right)\right] \leq \varphi^{A^{(1,d)},d,\Delta_N}(t, y) .
$$

Moreover,
$$
\hat{V}^{p,1}(t, y) \leq \varphi^{A^{(1,d)},d,\Delta_N}(t, y)\; , \; \forall(t, y) \in[0, T] \times (0,\infty) .
$$
\end{lemma}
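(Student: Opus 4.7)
The plan is to mimic the supersolution argument that was used to prove the analogous bound $\hat V^{p,0}\le \varphi^{A^{(0,d)},d,\Delta_N}$ and then to chain it with Lemma \ref{lm::hatVp1Vp0}. First I would fix $(t,y)$, a control $\nu^{1}=(Z,\tau_1,\eta_1)\in\hat{\mathcal{U}}^{1}(t,y)$, a stopping time $\theta\in\mathcal{T}_{[t,T)}$, and apply Itô's formula to $s\mapsto \mathrm{e}^{-\rho(s-t)}\varphi^{A^{(1,d)},d,\Delta_N}(s,\bar Y_s^{t,y,Z,\hat a})$ on $[t,\theta]$, exactly as in display \eqref{super.sol.ineq::v0}. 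Using the PDE inequality guaranteed by Lemma \ref{supersolconstruction} with $n=1$, the finite-variation part carries the sign
$$ -\varphi_t^{A^{(1,d)},d,\Delta_N}+\mathcal{L}^{Z_s}\varphi^{A^{(1,d)},d,\Delta_N}-\rho\hat a_s\ \ge\ 0,$$
so after localizing by $\theta_m:=\inf\{s\ge t:|\mathrm{e}^{-\rho(s-t)}\varphi_y^{A^{(1,d)},d,\Delta_N}(s,\bar Y_s)Z_s|\ge m\}\wedge\theta$, the stochastic integral becomes a true martingale and taking expectation gives
$$\mathbb{E}\!\left[\int_t^{\theta_m}\!\rho\mathrm{e}^{-\rho(s-t)}\hat a_s\,ds+\mathrm{e}^{-\rho(\theta_m-t)}\varphi^{A^{(1,d)},d,\Delta_N}(\theta_m,\bar Y_{\theta_m})\right]\le\varphi^{A^{(1,d)},d,\Delta_N}(t,y).$$
The integrability requirements \eqref{condition::interability::YZ::reduction2} and the polynomial-plus-bounded structure of $\varphi^{A^{(1,d)},d,\Delta_N}$ give a uniform $L^1$ domination, so dominated convergence lets me send $m\to\infty$ to recover the announced inequality with $\theta_m$ replaced by $\theta$, and in particular the integrability of $\varphi^{A^{(1,d)},d,\Delta_N}(\theta,\bar Y_\theta)$.

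For the second claim I would invoke Lemma \ref{lm::hatVp1Vp0} to reduce $\hat V^{p,1}$ to a stopping problem with obstacle $\mathcal{M}\hat V^{p,0}$. Since the induction base has already established $\hat V^{p,0}\le \varphi^{A^{(0,d)},d,\Delta_N}$ on $[0,T]\times(0,\infty)$, and the operator $\mathcal M$ from \eqref{operatorM} is clearly monotone in its argument, I get $\mathcal M\hat V^{p,0}\le \mathcal M\varphi^{A^{(0,d)},d,\Delta_N}$. The key algebraic step is to verify the pointwise obstacle inequality
$$\mathcal{M}\varphi^{A^{(0,d)},d,\Delta_N}(t,y)\ \le\ \varphi^{A^{(1,d)},d,\Delta_N}(t,y),\qquad (t,y)\in[0,T]\times[0,\infty).$$
I would split $\varphi^{A^{(0,d)},d,\Delta_N}$ as $f(t,\cdot)+g(t,\cdot)$ with $f(t,y)=-\mathrm{e}^{d(t-T)}y^\gamma$ and $g(t,y)=Me^{(T-t)b}(1-e^{-cy})$, and use the subadditivity of suprema. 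For $g$, monotonicity in $y$ makes the internal maximum trivial (attained at $\eta=0$), leaving exactly the $g(t,y)$ piece of $\varphi^{A^{(1,d)},d,\Delta_N}$. For $f-( 1+k)\eta^\gamma$ the maximization in $\eta\in[0,y]$ is the standard one-parameter power problem, whose optimum yields
$$-\left(\tfrac{(1+k)^{1/(\gamma-1)}}{\mathrm{e}^{d(t-T)/(\gamma-1)}+(1+k)^{1/(\gamma-1)}}\right)^{\!\gamma-1}\mathrm{e}^{d(t-T)}y^\gamma;$$
this is exactly $-A^{(1,d)}(t)\mathrm{e}^{d(t-T)}y^\gamma$ when $d\le 0$, and is dominated by it when $d>0$ thanks to the inequality $\mathrm{e}^{d(t-T)/(\gamma-1)}\le 1$ that is built into the definition of the constant $A^1$. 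Adding the two pieces closes the comparison.

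Combining these facts, and applying the first part of the lemma with $\theta=\tau_1$ and the reward $\hat a$, I obtain
$$\hat V^{p,1}(t,y)\le\sup_{(Z,\tau_1),\hat a}\mathbb{E}\!\left[\int_t^{\tau_1}\!\rho\mathrm{e}^{-\rho(s-t)}\hat a_s\,ds+\mathrm{e}^{-\rho(\tau_1-t)}\varphi^{A^{(1,d)},d,\Delta_N}(\tau_1,\bar Y_{\tau_1})\right]\le\varphi^{A^{(1,d)},d,\Delta_N}(t,y),$$
which is the desired bound. The only delicate step I anticipate is the obstacle comparison above, because it has to be verified separately in the two regimes $d>0$ and $d\le 0$ and it is precisely the choice of the time dependence of $A^{(n,d)}$ in Lemma \ref{supersolconstruction} that makes the algebra close; the Itô/localization/dominated-convergence part is a direct transcription of the $n=0$ argument from Section \ref{thm2.1::vp0}, and the reduction via Lemma \ref{lm::hatVp1Vp0} is purely formal.
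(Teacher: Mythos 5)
Your argument is correct and follows the same route as the paper: Itô's formula plus localization and Lemma \ref{supersolconstruction} for the first claim, and then Lemma \ref{lm::hatVp1Vp0} chained with the first part (at $\theta=\tau_1$) and the obstacle inequality for the second. Your explicit verification of $\mathcal{M}\varphi^{A^{(0,d)},d,\Delta_N}\le\varphi^{A^{(1,d)},d,\Delta_N}$—splitting into the power part and the bounded $g$-part, and checking the two regimes $d>0$ and $d\le 0$ via $\mathrm{e}^{d(t-T)/(\gamma-1)}\le 1$—is exactly the computation the paper uses but leaves implicit when it asserts $\sup_{0\le\eta\le y}\bigl(\varphi^{A^{(0,d)},d,\Delta_N}(t,y-\eta)-(1+k)\eta^\gamma\bigr)\le\varphi^{A^{(1,d)},d,\Delta_N}(t,y)$.
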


\begin{proof}
     Let $\theta$ be an $\mathbb{F}$-stopping time. Applying Itô's lemma to  $\varphi^{A^{(1,d)},d,\Delta_N}$ yields
     \begin{align}\nonumber
         &\quad \mathrm{e}^{-\rho(\theta_n - t)}\varphi^{A^{(1,d)},d,\Delta_N}\left(\theta_n, \bar{Y}_{\theta_n}^{t, y,\nu^{1},\hat{a}}\right) - \varphi^{A^{(1,d)},d,\Delta_N}(t, y) + \int_{t}^{\theta_n}\rho  \mathrm{e}^{-\rho(s-t)}\hat{a}_s \mathrm{d} s\\ \nonumber
         & =  - \int_{t}^{\theta_n} \mathrm{e}^{-\rho(s-t)} \left( -\varphi^{A^{(1,d)},d,\Delta_N}_t(s,\bar{Y}^{t, y,\nu^{1},\hat{a}}_{s}) + \mathcal{L}^{\nu}\varphi^{A^{(1,d)},d,\Delta_N}(s,\bar{Y}^{t, y,\nu^{1},\hat{a}}_{s}) - \rho\hat{a}_s \right) \mathrm{d} s \\ \label{super.sol.ineq::v1}
         &\quad + \sigma\int_{t}^{\theta_n} \mathrm{e}^{-\rho(s-t)} \varphi_y^{A^{(1,d)},d,\Delta_N}(s,\bar{Y}^{t, y,\nu^{1},\hat{a}}_{s}) Z_s dB_s,
     \end{align}
     where $\theta_n:=\inf \left\{u \geq t:\left| \mathrm{e}^{-\rho(s-t)} \varphi_y^{A^{(1,d)},d,\Delta_N}(s,\bar{Y}^{t, y,\nu^{1},\hat{a}}_{s})Z_s \right| \geq n\right\} \wedge \theta$ such that $\theta_n \nearrow \theta$. Then, the stochastic integral $\sigma\int_{t}^{\theta_n} \mathrm{e}^{-\rho(s-t)} \varphi_y^{A^{(1,d)},d,\Delta_N}(s,\bar{Y}^{t, y,\nu^{1},\hat{a}}_{s}) Z_s dB_s$ is a martingale.
     Taking expectation in (\ref{super.sol.ineq::v1}) and applying Lemma \ref{supersolconstruction}, we obtain 
     \begin{equation*}
         \mathbb{E}\left[\int_{t}^{\theta_n}\rho  \mathrm{e}^{-\rho(s-t)}\hat{a}_s \mathrm{d} s + \mathrm{e}^{-\rho(\theta_n - t)}\varphi^{A^{(1,d)},d,\Delta_N}\left(\theta_n, \bar{Y}_{\theta_n}^{t, y,\nu^{1},\hat{a}}\right)\right] \leq \varphi^{A^{(1,d)},d,\Delta_N}(t, y) .
     \end{equation*}
     Taking $\theta = \tau_1$, we have $\lim_{n \to \infty} \theta_n = \tau_1 $. Based on inequalities \eqref{upperlowerboundvp0}, we obtain the following inequality holds $\PP$-a.s.
    \begin{equation*}
    \begin{split}
       \mathcal{M}\hat{V}^{p,0}(\theta_n,\bar{Y}^{t,y,Z}_{\theta_n})&= \sup_{ 0 \leq \eta \leq \bar{Y}^{t, y,\nu^{1},\hat{a}}_{\theta_n}}\left(\hat{V}^{p,0}(\theta_n,Y^{t, y,\nu^{1},\hat{a}}_{\theta_n} -\eta) - (1 + k)\eta^{\gamma} \right)   \\
       &\leq   \sup_{ 0 \leq \eta \leq \bar{Y}^{t, y,\nu^{1},\hat{a}}_{\theta_n}}\left(\varphi^{A^{0},d,\Delta_0}(\theta_n,\bar{Y}^{t, y,\nu^{1},\hat{a}}_{\theta_n} -\eta) - (1 + k)\eta^{\gamma} \right)  \\
       &\leq  \varphi^{A^{(1,d)},d,\Delta_N}(\theta_n,\bar{Y}^{t, y,\nu^{1},\hat{a}}_{\theta_n}).  
    \end{split}
    \end{equation*}
Next, applying Lemma \ref{lm::hatVp1Vp0}, we have
\begin{equation*}
    \begin{split}
      \hat{V}^{p,1}(t,y) &=  \sup_{(Z,\tau_1) \in \hat{\mathcal{U}}^{1}(t,y) } \mathbb{E} \left[\int_{t}^{\tau_1}\rho  \mathrm{e}^{-\rho(s-t)}\hat{a}_s \mathrm{d} s + \mathrm{e}^{-\rho(\tau_1 - t)}\mathcal{M}\hat{V}^{p,0}(\tau_1,\bar{Y}^{t,y,Z}_{\tau_1})\right]   \\ 
      &=  \sup_{(Z,\tau_1)  \in \hat{\mathcal{U}}^{1}(t,y) } \lim_{n \to \infty}\mathbb{E} \left[\int_{t}^{\theta_n}\rho  \mathrm{e}^{-\rho(s-t)}\hat{a}_s \mathrm{d} s + \mathrm{e}^{-\rho(\tau_1 - t)}\mathcal{M}\hat{V}^{p,0}(\theta_n,\bar{Y}^{t,y,Z}_{\theta_n})\right]   \\ 
      &\leq  \sup_{(Z,\tau_1)  \in \hat{\mathcal{U}}^{1}(t,y) } \lim_{n \to \infty}\mathbb{E} \left[\int_{t}^{\theta_n}\rho  \mathrm{e}^{-\rho(s-t)}\hat{a}_s \mathrm{d} s + \mathrm{e}^{-\rho(\tau_1 - t)}\varphi^{A^{(1,d)},d,\Delta_N}(\theta_n,\bar{Y}^{t,y,Z}_{\theta_n})\right]   \\ 
      &\leq \varphi^{A^{(1,d)},d,\Delta_N}(t, y).
    \end{split}
\end{equation*}
In the second equality, we used the continuity of the mapping $(t,y) \mapsto \mathcal{M}\hat{V}^{p,0}(t,y)$, and the dominated convergence theorem. The previous completes the proof. \qed
\end{proof}

Having established the necessary auxiliary results, we now turn to the weak dynamic programming principle for the mixed control–stopping problem defined in (\ref{vpkpk-1}). Following the framework of \cite{bouchard2012weak}, we obtain a weak form of the dynamic programming principle that avoids the use of measurable selection arguments. The result is an immediate consequence of the general weak DPP under generalized state constraints established in \cite[Section 4]{bouchard2012weak}. Although we do not reproduce the proof here, the structural assumptions of our model fall within their setting, and the theorem applies directly. Specifically, for any stopping time $\tau \in \mathcal{T}^{t}_{[t,T]}$, we have 
  \begin{enumerate}
        
        \item \label{weakdpp::f1} Consider a measurable function $\psi:[0, T] \times (0,\infty) \rightarrow \mathbb{R}$ such that $\hat{V}^{p,1} \leq \psi$ and $\mathbb{E}\left[\psi\left(\tau, \bar{Y}_{\tau}^{t,y,Z}\right)^{-}\right]<\infty$ for all $(t, y) \in[0, T] \times (0,\infty)$. Then
        \begin{equation}\label{eq::principalvnopen::weakdpeq1}
        \begin{split}
            \hat{V}^{p,1}(t,y)\leq &\sup_{(Z,\tau_1) \in \hat{\mathcal{U}}^{1}(t,y) , \hat{a} \in \hat{\mathcal{A}}(Z)}\mathbb{E}\Bigg[ \int_{t}^{\tau \wedge  \tau_1 } \rho \mathrm{e}^{-\rho (s-t)}\left(\hat{a}_s \right) \mathrm{d} s + \mathbbm{1}_{\{\tau \leq \tau_1 \}}\mathrm{e}^{-\rho(\tau - t)}\psi(\tau,\bar{Y}_{\tau}^{t,y,Z}) \\
            & + \mathbbm{1}_{\{\tau > \tau_1 \}}  \mathrm{e}^{-\rho(\tau_1  - t)} \mathcal{M}\hat{V}^{p,0}(\tau_1,\bar{Y}^{t,y,Z}_{\tau_1})  \Bigg].
        \end{split}
    \end{equation}
    \item \label{weakdpp::f2} Let $\psi:[0, T] \times (0,\infty) \rightarrow \mathbb{R}$ be a upper-semicontinuous (u.s.c) function such that $\hat{V}^{p,1} \geq \psi$. Then,  $\mathbb{E}\left[\psi\left(\tau, \bar{Y}_{\tau}^{t,y,Z}\right)^{+}\right]<\infty$ for every $(Z,\tau_1) \in \hat{\mathcal{U}}^{1}(t,y)$ and
    \begin{equation}\label{eq::principalvnopen::weakdpeq2}
        \begin{split}
            \hat{V}^{p,1}(t,y)\geq &\sup_{(Z,\tau_1) \in \hat{\mathcal{U}}^{1}(t,y) , \hat{a} \in \hat{\mathcal{A}}(Z)}\mathbb{E}\Bigg[ \int_{t}^{\tau \wedge  \tau_1 } \rho \mathrm{e}^{-\rho (s-t)}\left(\hat{a}_s \right) \mathrm{d} s + \mathbbm{1}_{\{\tau \leq \tau_1 \}}\mathrm{e}^{-\rho(\tau - t)}\psi(\tau,\bar{Y}_{\tau}^{t,y,Z}) \\
            & + \mathbbm{1}_{\{\tau > \tau_1 \}}  \mathrm{e}^{-\rho(\tau_1  - t)} \mathcal{M}\hat{V}^{p,0}(\tau_1,\bar{Y}^{t,y,Z}_{\tau_1})  \Bigg]. 
        \end{split}
    \end{equation}
    \end{enumerate}

\subsubsection{Proof of Theorem \ref{hjbvi:pp:in}.1 on $V^{p,n}$ }

In this subsection, we begin by showing that $\hat{V}^{p,1}$ is a viscosity solution to the HJB variational inequalities \eqref{vi::principalvnopen} and \eqref{vi::principalvnopenterminal}. From section \ref{thm2.1::vp0}, we know that Theorem \ref{hjbvi:pp:in} holds for $V^{p,0}$.

\begin{lemma}\label{viscositycharacterizationvigeq1}
  The value function ${\hat{V}^{p,1}}_{*}(t,y)$ ($ {{\hat{V}}}^{{p,1}^*}(t,y)$) is a lower semicontinuous (upper semicontinuous) viscosity supersolution (viscosity subsolution) of the Hamilton-Jacobi-Bellman variational inequality defined on the domain $[0,T)\times (0,\infty)$
  \begin{equation}\label{vi::principalvnopen}
     \min\{ -v^{p,1}_t + \mathcal{G}v^{p,1}  , v^{p,1} - \mathcal{M} \hat{V}^{p,0} , - v^{p,1}_{yy}\} = 0 \;.
  \end{equation}
  For the terminal condition, ${\hat{V}^{p,1}}_{*}(t,y)$ ($ {{\hat{V}}}^{{p,1}^*}(t,y)$) is a lower semicontinuous viscosity supersolution (viscosity subsolution) of the Hamilton-Jacobi-Bellman variational inequality defined on the domain $(0,\infty)$
  \begin{equation}\label{vi::principalvnopenterminal}
    \min\{v^{p,1}(T-,\cdot) - g^{1} , - v^{p,1}_{yy}(T-,\cdot) \} = 0, 
  \end{equation}
  where  $ g^{1} = \mathcal{M}^{1}F$.
\end{lemma}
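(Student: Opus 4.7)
The plan is to adapt the scheme of Lemma~\ref{viscositycharacterizationv0} to the mixed control--stopping setting, combining the WDPP \eqref{eq::principalvnopen::weakdpeq1}--\eqref{eq::principalvnopen::weakdpeq2} with the obstacle representation provided by Lemma~\ref{lm::hatVp1Vp0}. The PDE-type inequalities $-v^{p,1}_t + \mathcal{G}v^{p,1} \geq 0$ and $-v^{p,1}_{yy} \geq 0$ are obtained by the same local It\^o--contradiction scheme, while the obstacle inequality $v^{p,1} \geq \mathcal{M}\hat{V}^{p,0}$ reflects the elementary fact that immediate intervention is always admissible.

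For the supersolution property, fix $(t_0,y_0)$ and a smooth $\varphi$ such that $0 = ({\hat{V}^{p,1}}_{*} - \varphi)(t_0,y_0)$ is a strict minimum on $[0,T)\times(0,\infty)$. The obstacle bound $\varphi(t_0,y_0) \geq \mathcal{M}\hat{V}^{p,0}(t_0,y_0)$ is immediate: taking $\tau_1 = t$ in Lemma~\ref{lm::hatVp1Vp0} yields $\hat{V}^{p,1}(t,y) \geq \mathcal{M}\hat{V}^{p,0}(t,y)$ pointwise, and the continuity of $\hat{V}^{p,0}$ established in Section~\ref{thm2.1::vp0} renders $\mathcal{M}\hat{V}^{p,0}$ continuous, so ${\hat{V}^{p,1}}_{*} \geq \mathcal{M}\hat{V}^{p,0}$ and the desired inequality follows at the touching point. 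For the remaining two inequalities I assume the converse, introduce $\bar\varphi(t,y) := \varphi(t,y) - (|t-t_0|^2 + |y-y_0|^4)$, and use a constant control $\tilde{z} > h'(0)$ together with the exit time $\tau_m$ of $B^{r}(t_0,y_0)$. Applying the WDPP \eqref{eq::principalvnopen::weakdpeq2} with $\psi = \varphi$ (admissible because $\hat{V}^{p,1} \geq {\hat{V}^{p,1}}_{*} \geq \varphi$), $\tau = \tau_m$, and the admissible triple $(\tilde{z},\tau_m,0)$, the obstacle term disappears since $\{\tau > \tau_1\} = \emptyset$ when $\tau = \tau_1$, and the It\^o estimate of Lemma~\ref{viscositycharacterizationv0} carries over unchanged to the contradiction.

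The subsolution step is the main source of new work. Fix $(t_0,y_0)$ and a smooth $\varphi$ such that $0 = ({\hat{V}}^{p,1,\ast} - \varphi)(t_0,y_0)$ is a strict maximum, and assume
\begin{equation*}
\min\bigl\{-\varphi_t + \mathcal{G}\varphi,\ \varphi - \mathcal{M}\hat{V}^{p,0},\ -\varphi_{yy}\bigr\}(t_0,y_0) > 0.
\end{equation*}
Setting $\bar\varphi(t,y) := \varphi(t,y) + (|t-t_0|^2 + |y-y_0|^4)$, the continuity of $\mathcal{G}$ and of $\mathcal{M}\hat{V}^{p,0}$ yields $r > 0$ and $\eta > 0$ such that, on $B^{r}(t_0,y_0)$, one has simultaneously $-\bar\varphi_t + \mathcal{G}^{\tilde{z}}\bar\varphi > 0$ for every $\tilde{z} \geq 0$ and $\bar\varphi - \mathcal{M}\hat{V}^{p,0} \geq \eta$, while off $B^{r}(t_0,y_0)$ the strict-max rescaling gives $\bar\varphi \geq \varphi + 2e^{\rho(t_0+r)}\tilde\eta$. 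Pick $(t_m,y_m) \to (t_0,y_0)$ with $\hat{V}^{p,1}(t_m,y_m) \to {\hat{V}}^{p,1,\ast}(t_0,y_0)$, any admissible $\nu^{1,m} = (Z^m,\tau_1^m,\eta_1^m) \in \hat{\mathcal{U}}^{1}(t_m,y_m)$, and let $\tau_m$ be the exit time of $(s,\bar{Y}^{t_m,y_m,Z^m}_s)$ from $B^{r}(t_0,y_0)$. Applying It\^o to $e^{-\rho(\cdot - t_m)}\bar\varphi$ on $[t_m,\tau_m \wedge \tau_1^m]$ and splitting along $\{\tau_m \leq \tau_1^m\} \cup \{\tau_m > \tau_1^m\}$, the PDE strict inequality controls the first event exactly as in Lemma~\ref{viscositycharacterizationv0}, while the uniform obstacle margin $\eta$ controls the second event, yielding $\bar\varphi(t_m,y_m) \geq J_m + \tilde\eta$, where $J_m$ denotes the bracketed expression inside \eqref{eq::principalvnopen::weakdpeq1}. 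Taking the supremum over $\nu^{1,m}$ and invoking \eqref{eq::principalvnopen::weakdpeq1} with $\psi = \varphi$ contradicts $\hat{V}^{p,1}(t_m,y_m) \to \varphi(t_0,y_0)$ for $m$ large enough.

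The terminal condition \eqref{vi::principalvnopenterminal} is obtained by the same adaptation of \cite[Lemmata 4.3.2--4.3.4]{pham2009continuous} as in Lemma~\ref{viscositycharacterizationv0}, now with the obstacle $g^{1} = \mathcal{M}F$ replacing $F$. The principal difficulty I anticipate is the event-wise decomposition $\{\tau_m \leq \tau_1^m\} \cup \{\tau_m > \tau_1^m\}$ in the subsolution argument, together with ensuring that the stochastic integral in the It\^o expansion is a true martingale uniformly in the arbitrary intervention time $\tau_1^m$ so that the contradiction passes to the limit $m \to \infty$.
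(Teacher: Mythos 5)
Your interior argument tracks the paper's proof closely and correctly. For the supersolution, you show $\hat{V}^{p,1}_* \geq \mathcal{M}\hat{V}^{p,0}$ pointwise from Lemma~\ref{lm::hatVp1Vp0} with $\tau_1 = t$ and the continuity of $\mathcal{M}\hat{V}^{p,0}$, then reduce to the two-term inequality and re-run the It\^o/WDPP contradiction; this is exactly the paper's reduction (though the paper phrases it by observing that once the obstacle inequality holds non-strictly everywhere it can be dropped from the $\min$ for the supersolution, rather than by choosing $\tau_1=\tau_m$ to kill the indicator — both are valid). Your interior subsolution argument, with the uniform obstacle margin on $B^r(t_0,y_0)$ plus the event split on $\{\tau_m \leq \tau_1^m\}$ versus $\{\tau_m > \tau_1^m\}$, matches the paper's proof. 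The omission of the weight $e^{\rho t}$ in your perturbation is harmless.

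The gap is the terminal subsolution step, which you dismiss as ``the same adaptation of \cite[Lemmata 4.3.2--4.3.4]{pham2009continuous}\ldots now with the obstacle $g^1 = \mathcal{M}F$ replacing $F$.'' This misdescribes what is actually needed. The paper does \emph{not} cite Pham's lemmas here for the subsolution: it runs a bespoke penalized argument near the corner $(T,y_0)$. The crux is that the interior obstacle along the approach is not $g^1(y)$ but $\mathcal{M}\hat{V}^{p,0}(t,y)$ for $t<T$, and one must simultaneously control three things on a parabolic box $[T-r^c_t,T]\times B(y_0,r^c_y)$: (i) the PDE strict-supersolution property of $\bar\varphi(t,y)=\varphi(y)+e^{\rho T}(c\sqrt{T-t}+|y-y_0|^4)$ (the $c/\sqrt{T-t}$ term blows up and absorbs $\mathcal{G}\bar\varphi$ near $T$); (ii) the margin $\bar\varphi \geq \mathcal{M}\hat{V}^{p,0} + 2e^{\rho T}\tilde\eta$ over the whole box, which you have to extract from the \emph{terminal} strict inequality $\hat{V}^{p,1^*}(T,y_0) > g^1(y_0)$ via continuity of $\mathcal{M}\hat{V}^{p,0}$ and a choice of $c$ large; and (iii) the lateral-boundary bound $\bar\varphi \geq \hat{V}^{p,1^*} + 2e^{\rho T}\eta$, which requires the smooth upper envelope $\varphi^{A^{(1,d)},d,\Delta_N}$ from Lemma~\ref{supersolvpk} — Pham's lemmas do not supply this. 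The It\^o argument then splits on $\{\theta^m \leq \tau_1^m\}$ vs.\ $\{\tau_1^m < \theta^m\}$ just as in the interior, using (iii) on the first event and (ii) on the second. None of this appears in your sketch, and the claim that Pham's lemmas apply after a cosmetic substitution would not carry the argument through. You also did not identify this, rather than the interior event split, as the genuinely new difficulty of the $n\geq1$ case relative to $n=0$.
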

\begin{proof}
    To begin with, we consider the situation when $t < T$. Let $\left(t_m, y_m\right)_{m \geq 1} \subset[0, T) \times(0, \infty)$ be an arbitrary sequence converging to $(t, y) \in[0, T) \times (0, \infty)$. For each $m\in \mathbb{N}$, we choose the admissible control $\tau_1:=t_m$. By Lemma \ref{lm::hatVp1Vp0}, we have
        ${{\hat{V}}}^{{p,1}}(t_m,y_m) \geq \mathcal{M}\hat{V}^{p,0}(t_m,y_m).$
Using that the mapping $(t,y)\mapsto \mathcal{M}\hat{V}^{p,0}(t,y)$ is continuous, we obtain
$${\hat{V}^{p,1}}_{*}(t,y) = \liminf_{(t_m,y_m) \to (t,y)} {{\hat{V}}}^{{p,1}}(t_m,y_m) \geq \lim_{(t_m,y_m) \to (t,y)}\mathcal{M}\hat{V}^{p,0}(t_m,y_m) =\mathcal{M} \hat{V}^{p,0}(t,y).$$
Moreover,
\begin{equation*}
    {{\hat{V}}}^{{p,1}^*}(t,y) \geq {\hat{V}^{p,1}}_{*}(t,y) \geq \mathcal{M} \hat{V}^{p,0}(t,y) \;,\; \forall (t,y) \in [0,T) \times (0,\infty). 
\end{equation*}
As a result, rather than proving directly that ${\hat{V}^{p,1}}_{*}(t,y)$ is a viscosity supersolution of equation  \eqref{vi::principalvnopen}, it suffices to verify that ${\hat{V}^{p,1}}_{*}(t,y)$ is a viscosity supersolution of the following variational inequality
\begin{equation}\label{vi::principalvopen2}
     \min\{ -v^{p,1}_t + \mathcal{G}v^{p,1} , - v^{p,1}_{yy}\} = 0 \;,\;  [0,T) \times (0,\infty).  
\end{equation}
Follow the similar viscosity supersolution charaterization argument in lemma \ref{viscositycharacterizationv0}, we get that  ${\hat{V}^{p,1}}_{*}(t,y)$ is a supersolution to (\ref{vi::principalvnopen}).

\medskip
Secondly, we show that $\hat{V}^{{p,1}^{*}}(t,y)$  is a viscosity subsolution to the variational inequality (\ref{vi::principalvnopen}).
Assume the contrary. That is, there exist $\left(t_0, y_0\right) \in[0, T) \times(0, \infty)$ and a smooth test function $\varphi:[0, T) \times(0, \infty) \rightarrow \mathbb{R}$, that satisfies
$$0 = ({{\hat{V}}}^{{p,1}^*} - \varphi) (t_0,y_0) >  ( {{\hat{V}}}^{{p,1}^*} - \varphi)(t,y),\;\; \forall (t,y) \in [0,T) \times (0,\infty)\;,\; (t,y)\neq (t_0,y_0),
$$
such that $$(\varphi - \mathcal{M}{\hat{V}}^{p,0}) (t_0,y_0) > 0\;,\;\min\{ -\varphi_t(t_0,y_0) + \mathcal{G}\varphi(t_0,y_0) , - \varphi_{yy}(t_0,y_0)\}  > 0.$$
Due to the inequality $(\varphi - \mathcal{M}{\hat{V}}^{p,0}) (t_0,y_0) > 0$, there exist $\e>0, r_\e>0$, such that $$\varphi(t, y) \geq \mathcal{M}{\hat{V}}^{p,0} (t,y) + 
\mathrm{e}^{\rho(t_0+r_\e)}\e ,$$
for all $(t, y)$ satisfying $t_0 \leq t \leq$ $t_0+r_\e<T$, and $\left|y-y_0\right| \leq r_\e$.\\
Next, we introduce the function
$$\bar{\varphi}(t,y) :=\varphi(t,y) + \mathrm{e}^{\rho t}(|t - t_0|^{2} + |y - y_0|^{4}).$$
Based on $\left(\varphi, \varphi_t, \varphi_y,  \varphi_{yy}\right)(t_0, y_0)=\left(\bar{\varphi},\bar{\varphi}_t, \bar{\varphi}_y,  \bar{\varphi}_{yy}\right)(t_0, y_0)$, and the continuity of $\mathcal{G}$, we obtain
\begin{equation*}
    -\bar{\varphi}_t(t,y) + \mathcal{G}^{\tilde{z}}\bar{\varphi}(t,y) > 0 \;,\;  \forall \tilde{z} \in [0,\infty) \;,\;  \forall (t,y) \in B^{r_\e}(t_0,y_0).
\end{equation*}
With choice of $r_\e$, through the similar scaling argument as inequality (\ref{v0barvarphivarphi.1}), we have
\begin{equation}\label{barvarphivarphisub}
   \bar{\varphi}(t,y) \geq \varphi(t,y) + e^{\rho(t_0+r_\e)}  \eta_\e \;,\;\text{for some}\; \eta_\e > 0,\; (t,y) \in ( [0,T] \times (0,\infty)  ) \setminus B^{r_\e}(t_0,y_0).  
\end{equation}
Let $(t_m,y_m)_{m}$ be a sequence in $B^{r_\e}(t_0,y_0)$ such that 
$$(t_m,y_m) \to (t_0,y_0) \;,\; \hat{V}^{p,1}(t_m,y_m) \to {{\hat{V}}}^{{p,1}^*}(t_0,y_0).$$
Fix $m \in \mathbb{N}$. Let $(\tilde{z}^{m},\tau^m_1)$ be an arbitrary control of $\mathcal{\hat{U}}^{1}(t_m,y_m)$ and $\bar{Y}^{t_m,y_m,\tilde{z}^{m}}$ the associated state process.
We define the stopping time $\tau_m$ as
$$\tau_m = \inf\{s \geq t_m : (s, \bar{Y}^{t_m,y_m,\tilde{z}^{m}}_s) \notin B^{r_\e}(t_0,y_0) \}.$$ 
Applying ito's lemma to $\bar{\varphi}$, we have
\begin{equation*}
\begin{split}
       & \bar{\varphi}(t_m,y_m) \\
      =& \mathbb{E}\bigg[  \mathrm{e}^{-\rho (\tau_m \wedge \tau^m_1 - t_m)}\bar{\varphi}(\tau_m \wedge \tau^m_1, \bar{Y}^{t_m,y_m,\tilde{z}^{m}}_{\tau_m \wedge \tau^m_1})  \\
      &\quad\quad\quad\quad \quad\quad + \int_{t_m}^{\tau_m \wedge \tau^m_1}\mathrm{e}^{-\rho (s - t_m)}( -\bar{\varphi}_t(s,\bar{Y}^{t_m,y_m,\tilde{z}^{m}}_{\tau_s}) + \mathcal{G}^{\tilde{z}^{m}}\bar{\varphi}(s,\bar{Y}^{t_m,y_m,\tilde{z}^{m}}_{\tau_s}) )  \mathrm{d} s   \bigg] \\
     \geq & \mathbb{E}\bigg[\mathrm{e}^{-\rho (\tau_m \wedge \tau^m_1 - t_m)}\bar{\varphi}(\tau_m \wedge \tau^m_1, \bar{Y}^{t_m,y_m,\tilde{z}^{m}}_{\tau_m \wedge \tau^m_1}) +  \int_{t_m}^{\tau_m \wedge \tau^m_1} \rho \mathrm{e}^{-\rho (s - t_m)} \hat{a}^{\tilde{z}^{m}}_s \mathrm{d} s   \bigg]  \\ 
     = &\mathbb{E}\bigg[ \mathrm{e}^{-\rho (\tau_m  - t_m)}\bar{\varphi}(\tau_m , \bar{Y}^{t_m,y_m,\tilde{z}^{m}}_{\tau_m}) \mathbbm{1}_{\{\tau_m \leq \tau^m_1\}}  + \mathrm{e}^{-\rho (\tau^m_1  - t_m)}\bar{\varphi}(\tau^m_1 , \bar{Y}^{t_m,y_m,\tilde{z}^{m}}_{\tau^m_1}) \mathbbm{1}_{\{\tau^m_1 < \tau_m\}} \\
      &\quad\quad\quad\quad \quad\quad + \int_{t_m}^{\tau_m \wedge \tau^m_1} \rho \mathrm{e}^{-\rho (s - t_m)} \hat{a}^{\tilde{z}^{m}}_s \mathrm{d} s \bigg]  \\
     \geq &\mathbb{E}\bigg[ \mathrm{e}^{-\rho (\tau_m  - t_m)} \varphi(\tau_m , \bar{Y}^{t_m,y_m,\tilde{z}^{m}}_{\tau_m}) \mathbbm{1}_{\{\tau_m \leq \tau^m_1\}} + \mathbbm{1}_{\{\tau_m \leq \tau^m_1\}}\eta_\e  + \int_{t_m}^{\tau_m \wedge \tau^m_1} \rho \mathrm{e}^{-\rho (s - t_m)} \hat{a}^{\tilde{z}^{m}}_s \mathrm{d} s \\
      &\quad\quad\quad\quad\quad\quad  + \mathrm{e}^{-\rho (\tau^m_1  - t_m)}\mathcal{M}{\hat{V}}^{p,0}(\tau^m_1 , \bar{Y}^{t_m,y_m,\tilde{z}^{m}}_{\tau^m_1}) \mathbbm{1}_{\{\tau^m_1 < \tau_m\}} + \mathbbm{1}_{\{\tau^m_1 < \tau_m\}} \e \bigg]  \\
    \geq & \mathbb{E}\bigg[ \mathrm{e}^{-\rho (\tau_m  - t_m)} \varphi(\tau_m , \bar{Y}^{t_m,y_m,\tilde{z}^{m}}_{\tau_m}) \mathbbm{1}_{\{\tau_m \leq \tau^m_1\}} + \mathcal{M}{\hat{V}}^{p,0}(\tau^m_1 , \bar{Y}^{t_m,y_m,\tilde{z}^{m}}_{\tau^m_1}) \mathbbm{1}_{\{\tau^m_1 < \tau_m\}} \\
      &\quad\quad\quad\quad \quad\quad  +   \int_{t_m}^{\tau_m \wedge \tau^m_1} \rho \mathrm{e}^{-\rho (s - t_m)} \hat{a}^{\tilde{z}^{m}}_s \mathrm{d} s\bigg]  + \e \wedge \eta_\e. \\
\end{split}
\end{equation*}
The first inequality comes from
\begin{equation*}
    \begin{split}
       &\quad -\bar{\varphi}_t(t,y)  + \mathcal{G}^{\tilde{z}^{m}}\bar{\varphi}(t,y)  \\
       &=  -\bar{\varphi}_t(t,y)  +  \rho \bar{\varphi} - \sup_{\hat{a} \in \hat{\mathcal{A}}(\tilde{z}^{m})}\{rh(\hat{a})\bar{\varphi}_y + \rho \hat{a} + \frac{1}{2}\bar{\varphi}_{yy}r^{2}\sigma^{2}\tilde{z}^{{m}^{2}}  \} - r\bar{\varphi}_y y\\
       & > 0 \quad ,  \quad \forall (t,y) \in B^{r_\e}(t_0,y_0).
    \end{split}
\end{equation*}
As a conclusion, we have
\begin{equation*}
\begin{split}
   &\bar{\varphi}(t_m,y_m) \\
   \geq& \mathbb{E}\bigg[ \mathrm{e}^{-\rho (\tau_m  - t_m)} \varphi(\tau_m ,\bar{Y}^{t_m,y_m,\tilde{z}^{m}}_{\tau_m}) \mathbbm{1}_{\{\tau_m \leq \tau^m_1\}} + \mathcal{M}{\hat{V}}^{p,0}(\tau^m_1 ,\bar{Y}^{t_m,y_m,\tilde{z}^{m}}_{\tau^m_1}) \mathbbm{1}_{\{\tau^m_1 < \tau_m\}} \\
   &\quad\quad\quad\quad\quad + \int_{t_m}^{\tau_m \wedge \tau^m_1} \rho \mathrm{e}^{-\rho (s - t_m)} \hat{a}^{\tilde{z}^{m}}_s \mathrm{d} s\bigg]  + \e \wedge \eta_\e . 
\end{split}
\end{equation*}
For sufficiently large $m$, there exists $\frac{\eta_\e \wedge \e}{2}$ such that $\hat{V}^{p,1}(t_m,y_m) +  \frac{\eta_\e \wedge \e}{2} \geq \bar{\varphi}(t_m,y_m)$.
Then we have
\begin{equation*}
\begin{split}
    &\hat{V}^{p,1}(t_m,y_m) \\
    \geq& \mathbb{E}\Bigg[ \mathrm{e}^{-\rho (\tau_m  - t_m)} \varphi(\tau_m ,\bar{Y}^{t_m,y_m,\tilde{z}^{m}}_{\tau_m}) \mathbbm{1}_{\{\tau_m \leq \tau^m_1\}} + \mathcal{M}{\hat{V}}^{p,0}(\tau^m_1 ,\bar{Y}^{t_m,y_m,\tilde{z}^{m}}_{\tau^m_1}) \mathbbm{1}_{\{\tau^m_1 < \tau_m\}}\\
    &\quad\quad\quad\quad\quad + \int_{t_m}^{\tau_m \wedge \tau^m_1} \rho \mathrm{e}^{-\rho (s - t_m)} \hat{a}^{\tilde{z}^{m}}_s \mathrm{d} s\Bigg]  + \frac{\eta_\e \wedge \e}{2}
\end{split} 
\end{equation*}
for any arbitrary admissible control $(\tilde{z}^{m},\tau_{1}^{m}) \in \mathcal{\hat{U}}^{1}(t_m,y_m)$. Since $\varphi \geq \hat{V}^{p,1}$, the above contradicts  (\ref{eq::principalvnopen::weakdpeq1}).

\medskip
Next, we turn to the viscosity characterization of the value function $\hat{V}^{p,1}$ at the terminal time $T$.

\medskip
Firstly, we show that ${\hat{V}^{p,1}}_{*}$ is the viscosity supersolution to the following variational inequality:
$$ \min\{v^{p,1}(T-,y) - g^{1}(y) , - v^{p,1}_{yy}(T-,y) \} = 0 $$
Applying lemma \cite[Lemma 4.3.2]{pham2009continuous}, we get that ${\hat{V}^{p,1}}_{*}$ is the viscosity supersolution.

\medskip
Secondly, we show that $\hat{V}^{{p,1}^{*}}(T,y)$  is a viscosity subsolution of the variational inequality (\ref{vi::principalvnopenterminal}).\\
Let $(T,y_0) \in\{ T \}\times (0,\infty)$ and $\varphi: [0,T] \times (0,\infty) \to \mathbb{R}$ be a smooth test function such that
$$0 =  \hat{V}^{{p,1}^{*}}(T,y_0) - \varphi (y_0) >   \hat{V}^{{p,1}^{*}}(T,y) - \varphi(y) \; , \; \forall y \in  (0,\infty) \;,\; y\neq y_0.$$
Assume, on the contrary, that
$$\min\{ \hat{V}^{{p,1}^{*}}(T,y_0) - g^{1}(y_0) , - \varphi_{yy}(y_0)\}  >  0.$$\\
Then, without loss of generality, we have
\begin{equation}\label{Vstarsubeps}
     \hat{V}^{{p,1}^{*}}(T,y_0) - g^{1}(y_0)  >  2 \mathrm{e}^{\rho T}\e,\;\text{for some}\; \e > 0.
\end{equation}
We set 
$$\bar{\varphi}(t,y) =\varphi(y) + \mathrm{e}^{\rho T}(c\sqrt{T - t} + |y - y_0|^{4}) ,\; c > 0,\;$$
where $c$ is a positive fixed constant.
Then, for $t$ sufficiently close to $T$, we have
$$-\bar{\varphi}_t(t,y_0)  + \mathcal{G}\bar{\varphi}(t,y_0)  > 0.$$
By the continuity of $\mathcal{G}$, for all $c>0$, there exists a pair $(r^c_t,r^c_y)\in \RR^{2,+}$  such that 
\begin{equation}\label{ballsubterminal}
    -\bar{\varphi}_t(t,y) + \mathcal{G}^{\tilde{z}}\bar{\varphi}(t,y) > 0 \;,\;  \forall \tilde{z} \in [0,\infty) \;,\;   \forall (t,y) \in [T - r_t^c,T) \times B(y_0,r^c_y),
\end{equation}
where $B(y_0,r^c_y)$ denotes the ball of radius $r^c_y$ and center $y_0$.\\
Using the upper bound found in Lemma \ref{supersolvpk}, we obtain that the function $\hat{V}^{{p,1}^{*}}- \bar{\varphi}$ has a finite maximum on any compact set. 
Hence, taking $c>0$ large enough, there exists a constant $\eta>0$ such that
\begin{equation}\label{barvarphivarphiterminal2}
   \bar{\varphi}(t,y) \geq \hat{V}^{{p,1}^{*}}(t,y) +  2 e^{\rho T}  \eta \;,\; (t,y) \in  [T - r^c_t,T] \times \partial B(y_0,r^c_y),
\end{equation}
In addition, based on continuity of $\mathcal{M}\hat{V}^{{p,0}}$ and $\bar{\varphi}$, with $c$ large enough, there exists a constant $\tilde{\eta} > 0$ satisfying
\begin{equation}\label{barvarphiMvterminal3}
   \bar{\varphi}(t,y) \geq \mathcal{M}\hat{V}^{p,0}(t,y) +  2 e^{\rho T}  \tilde{\eta} \;,\; (t,y) \in  [T - r^c_t,T] \times B(y_0,r^c_y).
\end{equation}

Let $(t_m,y_m)_{m\geq 1}$ be a sequence in $ [T - r^c_t,T] \times B(y_0,r^c_y) $ such that 
$$(t_m,y_m) \to (T,y_0) \; , \; \hat{V}^{p,1}(t_m,y_m) \to \hat{V}^{{p,1}^{*}}(T,y_0).$$
Fix $m \in \mathbb{N}$. Let $(\tilde{z}^{m},\tau^m_1)$ be an arbitrary control of $\mathcal{\hat{U}}^{1}(t_m,y_m)$, and $\bar{Y}^{t_m,y_m,\tilde{z}^{m}}$ be its associated state process.
We introduce the following sequence of stopping times
\begin{equation*}
    \begin{split}
        \theta^m&:=\inf \left\{s \geq t_m,\left(s, \bar{Y}_s^{t_m,y_m,\tilde{z}^{m}}\right) \in ([T - r^c_t,T] \times  B(y_0,r^c_y))^{c}\right\} \wedge T.
    \end{split}
\end{equation*}
By applying Ito's lemma to $\bar{\varphi}$, we obtain
\begin{equation*}
\begin{split}
    &\bar{\varphi}(t_m,y_m) \\
      =& \mathbb{E}\Bigg[ \int_{t_m}^{\theta^m \wedge \tau^m_1}\mathrm{e}^{-\rho (s - t_m)}( -\bar{\varphi}_t(s,\bar{Y}^{t_m,y_m,\tilde{z}^{m}}_{s}) + \mathcal{L}^{\tilde{z}^{m}}\bar{\varphi}(s,\bar{Y}^{t_m,y_m,\tilde{z}^{m}}_{s}) )  \mathrm{d} s\\
      &\quad\quad\quad\quad \quad\quad + \mathrm{e}^{-\rho (\theta^m \wedge \tau^m_1 - t_m)}\bar{\varphi}(\theta^m \wedge \tau^m_1, \bar{Y}^{t_m,y_m,\tilde{z}^{m}}_{\theta^m \wedge \tau^m_1})  \Bigg] \\
     \geq&  \mathbb{E}\Bigg[ \int_{t_m}^{\theta^m \wedge \tau^m_1} \rho \hat{a}^{\tilde{z}^{m}}_s  \mathrm{e}^{-\rho (s-t_m)} \mathrm{d} s + \mathrm{e}^{-\rho (\theta^m \wedge \tau^m_1 - t_m)}\bar{\varphi}(\theta^m \wedge \tau^m_1, \bar{Y}^{t_m,y_m,\tilde{z}^{m}}_{\theta^m \wedge \tau^m_1})      \Bigg]  \\ 
     =& \mathbb{E}\Bigg[ \int_{t_m}^{\theta^m \wedge \tau^m_1} \rho \hat{a}^{\tilde{z}^{m}}_s  \mathrm{e}^{-\rho (s-t_m)} \mathrm{d} s  + \mathbbm{1}_{ \{\theta^m \leq \tau^m_1\} } \mathrm{e}^{-\rho (\theta^m  - t_m)}\bar{\varphi}(\theta^m , \bar{Y}^{t_m,y_m,\tilde{z}^{m}}_{\theta^m }) \\
     &\quad\quad\quad\quad \quad\quad  +  \mathbbm{1}_{ \{ \tau^m_1 <  \theta^m \} } \mathrm{e}^{-\rho (\tau^m_1  - t_m)}\bar{\varphi}(\tau^m_1 , \bar{Y}^{t_m,y_m,\tilde{z}^{m}}_{\tau^m_1 }) \Bigg]  \\
     \geq&  \mathbb{E}\Bigg[ \int_{t_m}^{\theta^m \wedge \tau^m_1} \rho \hat{a}_s \mathrm{e}^{-\rho (s-t_m)} \mathrm{d} s +  \mathbbm{1}_{ \{\theta^m \leq \tau^m_1\} } \mathrm{e}^{-\rho (\theta^m  - t_m)}\left(\hat{V}^{{p,1}^{*}}(\theta^m , \bar{Y}^{t_m,y_m,\tilde{z}^{m}}_{\theta^m }) + 2\mathrm{e}^{\rho T} \eta \right) \\
     & \quad\quad\quad\quad \quad\quad  +  \mathbbm{1}_{ \{ \tau^m_1 <  \theta^m \} } \mathrm{e}^{-\rho (\tau^m_1  - t_m)}\left(\mathcal{M}\hat{V}^{p,0}(\tau^m_1 , \bar{Y}^{t_m,y_m,\tilde{z}^{m}}_{\tau^m_1 }) + 2\mathrm{e}^{\rho T} \tilde{\eta} \right)  \Bigg]  \\
    > & \mathbb{E}\Bigg[ \int_{t_m}^{\theta^m \wedge \tau^m_1} \rho \hat{a}_s \mathrm{e}^{-\rho (s-t_m)} \mathrm{d} s +  \mathbbm{1}_{ \{\theta^m \leq \tau^m_1\} } \mathrm{e}^{-\rho (\theta^m  - t_m)}\varphi(\theta^m , \bar{Y}^{t_m,y_m,\tilde{z}^{m}}_{\theta^m })\\
    &\quad\quad\quad\quad \quad\quad +  \mathbbm{1}_{ \{ \tau^m_1 <  \theta^m \} } \mathrm{e}^{-\rho (\tau^m_1  - t_m)}\mathcal{M}\hat{V}^{p,0}(\tau^m_1 , \bar{Y}^{t_m,y_m,\tilde{z}^{m}}_{\tau^m_1 })   \Bigg] + 2  (\tilde{\eta}\wedge \eta).  \\
\end{split}
\end{equation*}
The first inequality follows from (\ref{ballsubterminal}). The second and third inequalities follow from  (\ref{barvarphivarphiterminal2}) and (\ref{barvarphiMvterminal3}).
In addition, we know that when $m$ is large enough, we have 
$$\hat{V}^{p,1}(t_m,y_m) + \tilde{\eta} \wedge \eta \geq \bar{\varphi}(t_m,y_m).$$
Then, we obtain
\begin{equation*}
    \begin{split}
        & \hat{V}^{p,1}(t_m,y_m)\\
        &\geq \mathbb{E}\Bigg[ \int_{t_m}^{\theta^m \wedge \tau^m_1} \rho \hat{a}_s \mathrm{e}^{-\rho (s-t_m)} \mathrm{d} s +   \mathbbm{1}_{ \{\theta^m \leq \tau^m_1\} } \mathrm{e}^{-\rho (\theta^m  - t_m)}\hat{V}^{{p,1}^{*}}(\theta^m , Y^{t_m,y_m,\tilde{z}^{m}}_{\theta^m }) \\
        & \quad\quad\quad\quad \quad\quad +  \mathbbm{1}_{ \{ \tau^m_1 <  \theta^m \} } \mathrm{e}^{-\rho (\tau^m_1  - t_m)}\mathcal{M}\hat{V}^{p,0}(\tau^m_1 , Y^{t_m,y_m,\tilde{z}^{m}}_{\tau^m_1 })   \Bigg] +   \tilde{\eta}\wedge \eta  \\
    \end{split}
\end{equation*}
for any arbitrary admissible control $(\tilde{z}^{m},\tau_{1}^{m}) \in \mathcal{\hat{U}}^{1}(t_m,y_m)$. Since $\hat{V}^{{p,1}^{*}} \geq \hat{V}^{p,1}$, the above contradicts  (\ref{eq::principalvnopen::weakdpeq1}).
\qed
\end{proof}
\medskip
Next, we turn to the discussion of the terminal condition. By Lemma \ref{lm::hatVp1Vp0}, we know that $\hat{V}^{p,1}$ has polynomial growth. Moreover, the variational inequality (\ref{vi::principalvnopenterminal}) falls within the framework of \cite[Section V.1]{ishii1990viscosity}. Consequently, the comparison principle holds, and we conclude that
$$
\hat{V}^{p, 1^*}(T, y)=\hat{V}_*^{p, 1}(T, y)= g^{1}(y)\;,\; y \in (0,\infty).
$$
Then, we justify that $V^{p,1}(T,y) = \hat{V}(T,y) = g^{1}(y) \;,\; y \in(0,\infty)$. Indeed, picking any admissible control $(Z,\tau_1,\eta_1) \in \mathcal{U}^{1}(t,y)$, we obtain
\begin{equation*}
    \begin{split}
     &J^{p,1}(t,y;(Z,\tau_1,\eta_1) )\\
     =&  \mathbb{E}\left[ \int_t^{T} \rho \mathrm{e}^{-\rho (s -t) }\left(\hat{a}_s \right) \mathrm{d} s   - \mathrm{e}^{-\rho (\tau_1 - t)} (1+k)\eta^{\gamma}_1   - \mathrm{e}^{\rho(T-t)}u^{-1}(Y^{t,y, Z,\tau_1,\eta_1}_{T})    \right]    \\
     =&\mathbb{E}\left[ \int_t^{\tau_1} \rho \mathrm{e}^{-\rho (s -t) }\left(\hat{a}_s \right) \mathrm{d} s   + \mathrm{e}^{-\rho (\tau_1 - t)} J^{p,0}(\tau_1,\bar{Y}^{t,y,Z,\tau_1}_{\tau_1} -\eta_1 ) -\mathrm{e}^{-\rho (\tau_1 - t)} (1+k)\eta^{\gamma}_1   \right] \\
     \leq& \mathbb{E}\left[ \int_t^{\tau_1} \rho \mathrm{e}^{-\rho (s -t) }\left(\hat{a}_s \right) \mathrm{d} s   + \mathrm{e}^{-\rho (\tau_1 - t)} \hat{V}^{p,0}(\tau_1,\bar{Y}^{t,y,Z,\tau_1}_{\tau_1} -\eta_1)  - \mathrm{e}^{-\rho (\tau_1 - t)} (1+k)\eta^{\gamma}_1   \right] \\
     \leq& \mathbb{E}\left[ \int_t^{\tau_1} \rho \mathrm{e}^{-\rho (s -t) }\left(\hat{a}_s \right) \mathrm{d} s   + \mathrm{e}^{-\rho (\tau_1 - t)} \mathcal{M}V^{p,0}(\tau_1,\bar{Y}^{t,y,Z}_{\tau_1})  \right] \\
     \leq& \sup_{(Z,\tau_1) \in \mathcal{U}^{1}(t,y)  , \hat{a} \in \hat{\mathcal{A}}(Z)} \mathbb{E}\left[ \int_t^{\tau_1} \rho \mathrm{e}^{-\rho (s -t) }\left(\hat{a}_s \right) \mathrm{d} s   + \mathrm{e}^{-\rho (\tau_1 - t)} \mathcal{M}V^{p,0}(\tau_1,\bar{Y}^{t,y,Z}_{\tau_1})\right]. 
      \end{split}
\end{equation*}
By taking the supremum over admissible control in  $\mathcal{U}^{1}(t,y)$ on the left side, we have
$$V^{p,1}(t,y) \leq  \sup_{(Z,\tau_1) \in \mathcal{U}^{1}(t,y)  , \hat{a} \in \hat{\mathcal{A}}(Z)} \mathbb{E}\left[ \int_t^{\tau_1} \rho \mathrm{e}^{-\rho (s -t) }\left(\hat{a}_s \right) \mathrm{d} s   + \mathrm{e}^{-\rho (\tau_1 - t)} \mathcal{M}V^{p,0}(\tau_1,\bar{Y}^{t,y,Z}_{\tau_1})   \right].$$
By letting $t \rightarrow T$, we obtain
$$V^{p,1}(T-,y) \leq \mathcal{M}V^{p,0}(T,y) = g^{1}(y) \; , \; y \in (0,\infty).$$
Next, we proceed to show that $\hat{V}^{p,1}$ admits a continuous extension to the boundary $[0,T)\times \{0\}$. Indeed, due to the inclusion $\mathcal{\hat{U}}^{1}(t,y) \subset \mathcal{U}^{1}(t,y)$, we have 
$$V^{p,1}(t,y) \geq \hat{V}^{p,1}(t,y) \geq \mathcal{M}\hat{V}^{p,0}(t,y) =\mathcal{M}V^{p,0}(t,y) , (t,y) \in \{T\} \times (0,\infty).$$

\medskip
As a result, we conclude that $V^{p,1}(T-,y) = \hat{V}^{p,1}(T-,y) = g^{1}(y), y\in (0,\infty)$.

\medskip
By applying Lemma \ref{supersolvpk}, we know that $\hat{V}^{p, 1}(t, y) \leq \varphi^{A^{(1,d)},d,\Delta_N}(t, y),\; \forall(t, y) \in[0, T] \times(0, \infty)$.
Moreover, by Lemma {\ref{lm::hatVp1Vp0}} and considering the admissible controls in the subcases  $d > 0 : (Z = 0, \tau_1 = T-) $ and $d \leq 0 : (Z = 0, \tau_1 = t)$, we have
\begin{equation*}
   - A^{(1,d)}(t) e^{d(t-T)} y^{\gamma} \leq \hat{V}^{p,1}(t,y) \leq  \varphi^{A^{(1,d)},d,\Delta_N}(t, y). 
\end{equation*}
The previous implies
$$ \lim_{(t',y') \to (t,0),y' > 0} -   A^{(1,d)}(t') e^{d(t'-T)} {y'}^{\gamma} \leq {\hat{V}^{p,1}}_{*}(t,0) \leq {{\hat{V}}}^{{p,1}^*}(t,0) \leq  \lim_{(t',y') \to (t,0),y' > 0}  \varphi^{A^{(1,d)},d,\Delta_N}(t', y').$$
After discussing the terminal and boundary  behavior of $\hat{V}^{p,1}$, we can apply Lemma \ref{cmp::vnhjbvi} to deduce that $\hat{V}^{p,1}$ is the unique viscosity solution to the following Dirichlet problem
\begin{equation}\label{hjbvi:pp:dirichletvk}
\begin{split}
 \min\{ -v^{p,1}_t + \mathcal{G}v^{p,1}  , v^{p,1} - \mathcal{M} v^{p,0} , - v^{p,1}_{yy}\} &= 0, \quad [0,T) \times (0,\infty),    \\
v^{p,1}(T,y) &= g^{1}(y) , \quad (0,\infty),\\
v^{p,1}(t,0) &= 0, \quad t \in [0,T]. \\
\end{split}   
\end{equation}
Next, it is evident that the inequality (\ref{eq::principalvnopen::weakdpeq1})  also holds for $V^{p,1}$. Following an argument similar to the one presented in Lemma \ref{viscositycharacterizationvigeq1}, we conclude that ${V^{p,1}}^{*}$ is a viscosity subsolution to the Dirichlet Problem (\ref{hjbvi:pp:dirichletvk}). Finally, applying Lemma \ref{cmp::vnhjbvi}, we have
$${\hat{V}^{p,1}}_{*}(t,y) \leq {V_{*}^{p,1}}(t,y)  \leq {V}^{{p,1}^*}(t,y) \leq {\hat{V}^{p,1}}_{*}(t,y),\quad (t,y) \in [0,T) \times (0,\infty).$$
In conclusion, $V^{p,1}$ is the unique continuous viscosity solution to the Dirichlet Problem (\ref{hjbvi:pp:dirichletvk}).

Having established the result for $V^{p, 1}$, we note that the same reasoning extends to all subsequent indices $n \geq 2$, since the recursive structure of the problem implies that the relation between $V^{p, n}$ and $V^{p, n-1}$ is identical to that between $V^{p, 1}$ and $V^{p, 0}$. Therefore, by induction, we have that Theorem \ref{hjbvi:pp:in}.1 holds for all $V^{p, n}$ with $n \in \{0,\ldots,N\}$.

\subsubsection{Proof of Theorem \ref{hjbvi:pp:in}.2 on $V^{p,n}$}
We first consider the case $n=1$. Since $V^{p, 1}(t, y)$ is a viscosity supersolution of Equation \eqref{hjbvi:pp:dirichletvk}, it must satisfy $-V_{y y}^{p, 1}(t, y) \geq 0$ in the viscosity sense; otherwise, $\mathcal{J}^{+}\left(V_y^{p, 1}(t, y), V_{y y}^{p, 1}(t, y)\right)=\infty$.
This implies that $V^{p, 1}(t, y)$ is partially concave in $y$. By induction, the same argument applies to $V^{p, n}$ for all $n \geq 2$.
Hence, we conclude that Theorem \ref{hjbvi:pp:in}.2 holds for all $V^{p, n}$ with $n \geq 1$.

\subsubsection{Proof of Theorem \ref{hjbvi:pp:in}.3 on $V^{p,n}$}
To prove the general conclusion of Theorem \ref{hjbvi:pp:in}.3 for $V^{p,n}, n \in \{1,\cdots,N\}$, it suffices to consider the case $V^{p, 1}$, since only the partial concavity of $V^{p, n}$ with respect to the agent's utility is used in the argument.\\
    To begin with, we show $$\eta^{1,*}(t,y) =  \inf \{\eta \in [0,y] : \partial^{+}_{y}V^{p,0}(t,y - \eta) > -\gamma (1 + k)\eta^{\gamma - 1} \}  \wedge y .$$
        Based on the Karush–Kuhn–Tucker (KKT) method, we have the necessary characterization for $\eta^{1,*}(t,y)$:
        \begin{itemize}
            \item Stationarity: $\quad 0= (1 + k) \gamma  \eta^{\gamma - 1} +  v  +a_{2}-a_{1}$
        for some $v \in\left[\partial_y^{+} V^{p,0}, \partial_y^{-} V^{p,0}\right]\left(t, y-\eta \right)$,
            \item Primal feasibility: $0 \leq \eta \leq y$,
            \item Dual feasibility: $\quad a_1 ,a_2\geq 0$,
            \item Complementary Slackness: $\quad a_2\left(y - \eta \right)=0, \quad a_1 \eta =0$,
        \end{itemize}
        where $\partial_y^{-} V$ and $\partial_y^{+} V$ are the left and right derivative of $V$ with respect to $y$, which exist everywhere by concavity.\\
        We define 
        \begin{equation*}
            \begin{split}
                &\overline{\eta}^{1} :=\inf \mathbb{O}   = \inf \{\eta \in [0,y] : \partial^{+}_{y}V^{p,0}(t,y - \eta) > - (1+k)\gamma \eta^{\gamma - 1} \}, 
            \end{split}
        \end{equation*}
        where  $\mathbb{O} := \{\eta \in [0,y] : \partial^{+}_{y}V^{p,0}(t,y - \eta) > - (1+k)\gamma \eta^{\gamma - 1}  \}$.
    
        First, we show  $\eta^{1,*}(t,y) \leq \overline{\eta}^{1}$. \\
        Assume that $\eta^{1,*}(t,y) > \overline{\eta}^{1}$, due to the primal feasibility, we have $\eta^{1,*} \leq y$, so  $\overline{\eta}^1 = \inf \mathbb{O}$, and there exists $\overline{\eta}^1_{0} \in \mathbb{O}$ such that $\eta^{1,*} > \overline{\eta}^1_{0} \geq 0$.  
        From the concavity of the value function $V^{p,0}(t,y)$ with respect to $y$, we have
        $$\partial^{+}_y V^{p,0}(t, y - \eta^{1,*}) \geq \partial^{+}_y V^{p,0}(t, y - \overline{\eta}^1_{0} ) > -  (1+k)\gamma {\overline{\eta}^1_{0}}^{\gamma - 1}   >  -  (1+k)\gamma { \eta^{1,*} }^{\gamma - 1}.$$
        Then, we obtain
        $$ (1+k)\gamma { \eta^{1,*} }^{\gamma - 1}  +  \partial^{+}_y V^{p,0}(t, y - \eta^{1,*})  >0 \Longleftrightarrow a_1 = a_2 +  (1+k)\gamma { \eta^{1,*} }^{\gamma - 1} + v > 0   $$
        Based on the Complementary Slackness condition we have  $\eta^{1,*} = 0$, which contradicts $\eta^{1,*} > \overline{\eta}^1_{0} \geq 0  $.

        \medskip
        Second, we show that 
        $\eta^{1,*} \geq \overline{\eta}^{1}$.\\
        If $\eta^{1,*} = y$, we directly have  $\eta^{1,*}(t,y) = y \geq  \overline{\eta}^{1} $ .\\
        Now we assume that $\eta^{1,*} < y$.
        By the Complementary Slackness condition, we have $a_2 = 0$. \\
        Firstly, we show that one inequality holds, which would be used later
        \begin{equation}\label{sub-Vy}
            -(1+k)\gamma { \eta^{1,*} }^{\gamma - 1} \leq  \partial^{-}_y V^{p,0}(t, y - \eta^{1,*}).
        \end{equation}
        Assume $ -(1+k)\gamma { \eta^{1,*} }^{\gamma - 1} >  \partial^{-}_y V^{p,0}(t, y - \eta^{1,*})$, then we have 
        $$   -(1+k)\gamma { \eta^{1,*} }^{\gamma - 1}  - v \geq   -(1+k)\gamma { \eta^{1,*} }^{\gamma - 1} -  \partial^{-}_y V^{p,0}(t, y - \eta^{1,*}) > 0 .$$
        Then based on above inequality, we have $a_2 = a_1 -(1+k)\gamma { \eta^{1,*} }^{\gamma - 1}  - v> 0$ which contradicts with $a_2 = 0$.\\
        Now we assume $\eta^1_{1} \in (\eta^{1,*},y)$.
        Given the concavity of $V^{p,0}(t,y)$ with respect to $y$, we have
        $$- (1+k)\gamma { \eta^1_{1} }^{\gamma - 1} < - (1+k)\gamma { \eta^{1,*} }^{\gamma - 1} \leq \partial^{-}_y V^{p,0}(t, y - \eta^{1,*}) \leq  \partial^{+}_y V^{p,0}(t, y - \eta^1_1).$$
        Then we obtain  $(1+k)\gamma { \eta^1_{1} }^{\gamma - 1} +  \partial^{+}_y V^{p,0}(t, y - \eta^1_1) > 0$. Hence, $\eta^1_1 \in \mathbb{O}$. Due to the arbitrariness of the choice of $\eta^1_1 \in (\eta^{1,*} , y)$, we conclude that $\eta^{1,*} \geq \overline{\eta}^1$. Combined with the conclusion $\eta^{1,*} \leq \overline{\eta}^1$, we have $$\eta^{1,*} = \overline{\eta}^1 = \inf \{\eta \in [0,y] : \partial^{+}_{y}V^{p,0}(t,y - \eta) > - (1+k)\gamma \eta^{\gamma - 1} \}.$$ 
        Finally, we show that
        for each fixed pair of $(t,y) \in [0,T) \times (0,\infty)$, we have 
        $$\eta^{1,*}(t,y) < y \;, \; (t,y) \in [0,T) \times (0,\infty).$$
         Assume that $\eta^{1,*}(t,y) = y$, for fixed $y > 0$, building on the Complementary Slackness condition, we have that $a_{2} \geq 0$, and $a_1 = 0$, implying $$\quad 0= (1+k)\gamma y^{\gamma - 1}  +  v  +a_{2}\; , \;\text{for some } v \in\left[\partial_y^{+} V^{p,0}, \partial_y^{-} V^{p,0}\right]\left(t, 0 \right). $$
         Then we have
         $$\partial_y^{+} V^{p,0}(t, 0) \leq v  = - (1+k)\gamma y^{\gamma - 1}  - a_2 \leq - (1+k)\gamma y^{\gamma - 1} < 0 ,$$ which contradicts with $\partial_y^{+} V^{p,0}(t, 0) \geq 0.$\\
          As a result, we conclude that
         $\eta^{1,*}(t,y) < y\;,\; (t,y) \in [0,T) \times (0,\infty).$
\subsection{Comparison principle}
\begin{lemma}\label{cmp::v0hjbvi}
 (Comparison) Let $u$ and $v$ be respectively an upper-semicontinuous viscosity subsolution and a lower-semicontinuous viscosity supersolution of (\ref{hjbvi:pp:dirichletv0}). Assume further that for $\varphi \in\{u, v\}$, $ - A^{(0,d)}(t) e^{d(t-T)} y^{\gamma} \leq \varphi(t,y) \leq \varphi^{A^{(0,d)},d,\Delta_N}(t, y), ( t, y) \in [0,T)  \times(0,\infty)$ . Then, if $u(t,y) \leq v(t,y) , (t,y) \in [0,T) \times \{ 0\} \cup \{ T\} \times [0,\infty)$, we have $u \leq v$ on $[0,T) \times (0,\infty)$. 
\end{lemma}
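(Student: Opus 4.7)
}

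The plan is to argue by contradiction using the standard doubling-of-variables technique, but adapted to the two main difficulties of this problem: the control $z$ is unbounded (so the Hamiltonian in $\mathcal{G}$ is not uniformly continuous in $(y,p,M)$), and both the sub- and supersolution only have polynomial/exponential growth controlled by $\varphi^{A^{(0,d)},d,\Delta_N}$ rather than being bounded. The overall strategy is: (i) subtract a small smooth supersolution from $v$ (or add it to $u$) to produce strict inequalities and force the maximum of $u-v$ onto a compact set; (ii) double variables and apply the Crandall--Ishii lemma; (iii) exploit the viscosity constraint $-v_{yy}\leq 0$ present in the variational inequality to tame the $\sup_{z\geq 0}$ in $\mathcal{G}$; (iv) extract a contradiction.

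First I would assume, for contradiction, $M:=\sup_{[0,T)\times(0,\infty)}(u-v)>0$. For $\lambda,\mu>0$ to be tuned, define
\begin{equation*}
u_\mu(t,y):=u(t,y)-\mu\bigl(\varphi^{A^{(0,d)},d,\Delta_N}(t,y)-(-A^{(0,d)}(t)e^{d(t-T)}y^\gamma)\bigr)-\mu/(T-t),
\end{equation*}
so that $u_\mu$ remains a viscosity subsolution of a slightly perturbed inequality (thanks to Lemma \ref{supersolconstruction}, the subtracted term is a classical supersolution of the linear equation underlying $\mathcal{G}$), while $u_\mu\to-\infty$ as $y\to\infty$, as $t\uparrow T$, and is controlled as $y\downarrow 0$. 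For $\mu$ small enough, $\sup(u_\mu-v)>0$, and this supremum is attained at some interior point. I would then consider the penalized functional
\begin{equation*}
\Phi_\varepsilon(t,s,y,x):=u_\mu(t,y)-v(s,x)-\tfrac{1}{2\varepsilon}\bigl(|t-s|^2+|y-x|^2\bigr),
\end{equation*}
whose maximum is attained at some $(t_\varepsilon,s_\varepsilon,y_\varepsilon,x_\varepsilon)$ lying in a compact subset of $[0,T)\times(0,\infty)^2$ uniformly in $\varepsilon$, with the standard properties $(t_\varepsilon-s_\varepsilon,y_\varepsilon-x_\varepsilon)\to 0$ and $|t_\varepsilon-s_\varepsilon|^2/\varepsilon+|y_\varepsilon-x_\varepsilon|^2/\varepsilon\to 0$.

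Next, the Crandall--Ishii lemma supplies jets $(a,p,X)\in \overline{\mathcal{P}}^{2,+}u_\mu(t_\varepsilon,y_\varepsilon)$ and $(b,p,Y)\in \overline{\mathcal{P}}^{2,-}v(s_\varepsilon,x_\varepsilon)$ with $a-b=0$, $p=(y_\varepsilon-x_\varepsilon)/\varepsilon$, and $X\leq Y$. Because $v$ satisfies $-v_{yy}\geq 0$ in the viscosity sense, one has $Y\leq 0$, hence $X\leq 0$ as well; this is precisely the ingredient that controls the $\sup_{z\geq 0}$ in $\mathcal{G}$, since $\tfrac12 r^2\sigma^2 z^2 X\leq 0$ and the supremum in
\begin{equation*}
\sup_{z\geq 0,\hat{a}\in \hat{\mathcal{A}}(z)}\bigl\{rh(\hat{a})p+\rho\hat{a}+\tfrac12 r^2\sigma^2 z^2 X\bigr\}
\end{equation*}
is attained at some $z^*\in[h'(0),z_{\max}(X)]$ with $z_{\max}(X)\to\infty$ only when $X\to 0$; in particular, at each $\varepsilon$ the optimizers lie in a compact set depending on $X_\varepsilon$ and $Y_\varepsilon$. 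Writing the subsolution inequality for $u_\mu$ (with the strict slack coming from $\mu/(T-t)^2$ and from Lemma \ref{supersolconstruction}) and the supersolution inequality for $v$, taking differences along the \emph{same} control $z^*$, and letting $\varepsilon\to 0$ yields a strict inequality that contradicts the nonnegativity of the limiting maximum. Sending $\mu\downarrow 0$ then gives $u\leq v$ everywhere.

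The main obstacle I expect is the handling of the supremum over the unbounded set $z\geq 0$: one needs to ensure that the maximizer $z^*_\varepsilon$ in the subsolution's equation and the one in the supersolution's equation can be coupled (they must be the same control in order to cancel the $rh(\hat a)p+\rho\hat a$ terms), and that the resulting $\tfrac12 r^2\sigma^2 (z^*)^2(X-Y)$ is nonpositive. The trick of comparing along a common control $z^*$, chosen as the $\varepsilon$-optimizer for the subsolution, combined with the inequality $X\leq Y\leq 0$ furnished by the viscosity constraint $-v_{yy}\geq 0$, is the cleanest way to eliminate this issue; the remaining arguments (terminal limit $t_\varepsilon\uparrow T$ ruled out by $\mu/(T-t)$, spatial limits $y_\varepsilon\to 0,\infty$ ruled out by the perturbation $\varphi^{A^{(0,d)},d,\Delta_N}$ and the boundary ordering) are routine.
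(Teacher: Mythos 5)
Your high-level outline---perturb to produce a strict sub/supersolution, double variables, invoke Ishii's lemma, exploit the constraint $-v_{yy}\geq 0$ to get nonpositive second-order jets, and cancel the Hamiltonian along a common control---is the correct strategy, and your common-control coupling in step (iv) is essentially the same mechanism as the paper's observation that $\mathcal{J}^{+}$ is nondecreasing in its second argument. However, two concrete steps fail as stated.

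\emph{Localization.} The quantity you subtract is $\varphi^{A^{(0,d)},d,\Delta_N}(t,y)-\bigl(-A^{(0,d)}(t)e^{d(t-T)}y^\gamma\bigr) = Me^{(T-t)b}(1-e^{-cy})$, which is \emph{bounded} in $y$. Since both $u$ and $v$ are sandwiched between $\varphi^{A^{(0,d)},d,\Delta_N}$ and $-A^{(0,d)}(t)e^{d(t-T)}y^\gamma$, the difference $u-v$ is already bounded, and $u_\mu-v=(u-v)-O(1)$ does \emph{not} tend to $-\infty$ as $y\to\infty$; the supremum need not be attained in the interior. The paper localizes by setting $v^\varepsilon:=(1-\varepsilon)v+\varepsilon\bar\psi$, with $\bar\psi=e^{\lambda_1(T-t)}\lambda_2\log(1+y)$ a smooth strict supersolution of logarithmic growth. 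The factor $(1-\varepsilon)$ is crucial: combining $v\geq -A^{(0,d)}(t)e^{d(t-T)}y^\gamma$ with $u\leq\varphi^{A^{(0,d)},d,\Delta_N}$ gives $u-v^\varepsilon\leq -\varepsilon A^{(0,d)}(t)e^{d(t-T)}y^\gamma+O(1)-\varepsilon\bar\psi\to-\infty$, which is genuine spatial localization. This convex combination also delivers strict supersolutionhood directly via the concavity of $-\partial_t+\mathcal{G}$ (equivalently, convexity of $\mathcal{J}^{+}$); subtracting a supersolution from a subsolution of a \emph{nonlinear} equation does not in general produce a strict subsolution. In your case the subtracted spatial piece $Me^{(T-t)b}(1-e^{-cy})$ has \emph{zero} margin at $y=0$ (the drift term $(b+\rho)(1-e^{-cy})-rcye^{-cy}$ vanishes there and the diffusion part is only $\leq 0$), and the $\mu/(T-t)$ piece is a strict \emph{sub}solution of the time-part near $t=T$ (since $-\partial_t(T-t)^{-1}+\rho(T-t)^{-1}=-(T-t)^{-2}+\rho(T-t)^{-1}<0$ for $t$ close to $T$), not a strict supersolution, so the needed slack is not created.

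\emph{Penalty exponent.} With $\gamma>2$ (Assumption \ref{assm::enoughgammah(a)}) the value functions grow like $y^\gamma$, so a quadratic doubling penalty $|y-x|^2/(2\varepsilon)$ cannot ensure the penalized maximum exists: as $y\to\infty$ with $x$ fixed, $-v^\varepsilon(s,y)$ may grow like $\varepsilon A y^\gamma$ and dominate $-|y-x|^2/\varepsilon$. The paper takes $\phi_\alpha(t,s,x,y)=\tfrac{1}{2\alpha}\bigl[|t-s|^2+|x-y|^\nu\bigr]$ with $\nu=(\gamma+1)\vee 3>\gamma$ precisely so the penalty dominates the growth of the sandwiching bounds.

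Both issues are repairable by adopting the paper's multiplicative perturbation of the supersolution and the higher-order penalty exponent; your steps (ii)--(iv) then go through essentially as you wrote them.
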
 

\begin{proof}
\begin{remark}
    In the finite-horizon case, Theorem \ref{thm::FB}, which characterizes the first-best problem, shows that the corresponding value does not provide the desired upper bound. The main difficulty arises from the need to handle the state constraint carefully. In stochastic control problems with exit times, the unique continuous extension of the value function may fail to coincide with the Dirichlet boundary condition (see \cite{barles1998strong} for details).
\end{remark}
 To begin with,  we know that 
    $$ - A^{(0,d)}(t) e^{d(t-T)} y^{\gamma} \leq \varphi(t,y) \leq \varphi^{A^{(0,d)},d,\Delta_N}(t, y)  ,\quad (t,y) \in [0,T)\times (0,\infty).$$
    We note that there is no continuity issue in extending the value function from the interior of the domain $[0, T] \times(0, \infty)$ to the closure $[0, T] \times[0, \infty)$. Consequently, the classical notion of the Dirichlet boundary condition applies.
    Next, we adapt the argument of \cite[Theorem 5.3.3]{pham2009continuous} to our setting.
    
    \medskip
    Firstly, we know that for a relative large $\lambda_1 > 0,\lambda_2 >0 $, the function $\bar{\psi}(t,y) := e^{\lambda_1(T-t)} \lambda_2 \log(1 + y)$ is a strict viscosity supersolution to (\ref{vnpdevi}).
    Then, we define the function
$$v^{\varepsilon} := (1 - \varepsilon)v+\varepsilon \bar{\psi}\;,\; \varepsilon \in(0, 1).$$
    Due to the convexity of the generator $\mathcal{J}^{+}(\cdot,\cdot)$, we verify that $v^{\e}$ is a strict supersolution of \eqref{hjbvi:pp:dirichletv0}.
    Next, following a doubling technique \cite[Theorem $4.4.4$]{pham2009continuous}, for any $\alpha>0$, and $\varepsilon \in(0, 1)$, we introduce the maps
    \begin{equation*}
    \begin{split}
        \Phi_{\alpha, \varepsilon}(t,s,x,y)&:=u(t,x)-v^{\varepsilon}(s,y)- \phi_{\alpha}(t, s, x, y) , (t,s ,x, y) \in [0,T]^{2} \times  \mathbb{R}_{+}^2, \\
        \phi_{\alpha}(t, s, x, y)&:=\frac{1}{2\alpha}\left[|t-s|^2+|x-y|^\nu\right] \;,\; \nu = (\gamma+1) \vee 3, \\
    \end{split}
    \end{equation*}
    where
    \begin{equation*}
      M_{\alpha, \varepsilon}:=\sup _{(t,s ,x, y) \in [0,T]^2 \times \mathbb{R}_{+}^2   }  \Phi_{\alpha, \varepsilon}(t,s,x,y)= \Phi_{\alpha, \varepsilon}(t_{\alpha, \varepsilon},s_{\alpha, \varepsilon},x_{\alpha, \varepsilon},y_{\alpha, \varepsilon}).    
    \end{equation*}
        By standard viscosity–solution arguments \cite[Proposition 3.7]{Crandall:1992:UserGuideViscosity}, there exists a subsequence $(t_{\alpha_n}^{\varepsilon},s_{\alpha_n}^{\varepsilon},x_{\alpha_n}^{\varepsilon},y_{\alpha_n}^{\varepsilon})$, such that
    \begin{align}\label{viscositysubsequence}
    &\lim_{n \to \infty} (t_{\alpha_n}^{\varepsilon},s_{\alpha_n}^{\varepsilon},x_{\alpha_n}^{\varepsilon},y_{\alpha_n}^{\varepsilon}) = (\bar{t}^{\varepsilon} ,\bar{t}^{\varepsilon},\bar{x}^{\varepsilon},\bar{x}^{\varepsilon}) \quad , 
 \lim_{n \to \infty} \phi_{\alpha_n}\left(  t_{\alpha_n}^{\varepsilon},s_{\alpha_n}^{\varepsilon},x_{\alpha_n}^{\varepsilon},y_{\alpha_n}^{\varepsilon}\right) =  0, \\ \nonumber
     &M_{\varepsilon} := \lim_{n \to \infty} M_{\alpha_n,\varepsilon} = u(\bar{t}^{\varepsilon},\bar{x}^{\varepsilon})-v^{\varepsilon}(\bar{t}^{\varepsilon},\bar{x}^{\varepsilon}). 
    \end{align}
    To establish the comparison principle, it is sufficient to verify that, for every $\varepsilon \in (0,1)$, we have $\sup(u-v^{\varepsilon}) \leq 0$. We proceed by contradiction and assume that there exists $\varepsilon \in (0,1)$ such that at $(t_o,y_o) \in [0,T) \times \mathbb{R}^+$, $\eta:= u(t_o,y_o)-v^{\varepsilon}(t_o,y_o)>0$. Then $\forall n \in \mathbb{N}$, we have $\eta \leq M_{\alpha_n, \varepsilon}.$
    
    By the standard Ishii's lemma, there exists an $\mathbb{R}^2$-valued sequence $\left(M_n^{\varepsilon}, N_n^{\varepsilon}\right)_{n \in \mathbb{N}}$, such that
    \begin{equation}\label{parabolic_semijets}
        \begin{split}
            & \left(\frac{1}{\alpha_n}\left(t_{\alpha_n}^{\varepsilon}-s_{\alpha_n}^{\varepsilon}\right), \frac{\nu}{2\alpha_n}\left(x_{\alpha_n}^{\varepsilon}-y_{\alpha_n}^{\varepsilon}\right)^{\nu - 1}, M_n^{\varepsilon}\right) \in \overline{\mathcal{P}}^{2,+} u\left(t_{\alpha_n}^{\varepsilon}, x_{\alpha_n}^{\varepsilon}\right), \\
            & \left(\frac{1}{\alpha_n}\left(t_{\alpha_n}^{\varepsilon}-s_{\alpha_n}^{\varepsilon}\right) , \frac{\nu}{2\alpha_n}\left(x_{\alpha_n}^{\varepsilon}-y_{\alpha_n}^{\varepsilon}\right)^{\nu - 1}, N_n^{\varepsilon}\right)   \in \overline{\mathcal{P}}^{2,-} v^{\varepsilon}\left(s_{\alpha_n}^{\varepsilon}, y_{\alpha_n}^{\varepsilon}\right),
        \end{split}
    \end{equation}
and
\begin{align}\label{boundsecondorder}
        \left(\begin{array}{cc}
M_n^{\varepsilon} & 0 \\
0 & -N_n^{\varepsilon}
\end{array}\right) \leq C^{\varepsilon}_{n} +  \lambda_3 (C^{\varepsilon}_{n})^{2}, 
\end{align}
    where, $\lambda_3 > 0$, 
    $C^{\varepsilon}_n =  a^{\varepsilon}_{n} A ,a^{\varepsilon}_n := \frac{\nu(\nu - 1)}{2\alpha_n}|x_{\alpha_n}^{\varepsilon}-y_{\alpha_n}^{\varepsilon}|^{\nu - 2}  , A:=\left(\begin{array}{cc}
1 & -1 \\
-1 & 1
\end{array}\right).$\\
We fix $\lambda_3 =\frac{1}{||C^{\varepsilon}_n||}$, where $||\cdot||$ is the spectral norm. Next, we multiply the inequality (\ref{boundsecondorder}) by $(1,1)$ to the left and $(1,1)^{\top}$ to the right, and obtain $M_n^{\varepsilon}-N_n^{\varepsilon} \leq 0, \text { for all } n \in \mathbb{N}$.\\
Defining $G(y,a,p,q) := - a -  \mathcal{J}^{+}(p,q) - rpy$, we get
\begin{equation*}
\begin{split}
    &\rho u\left(t_{\alpha_n}^{\varepsilon}, x_{\alpha_n}^{\varepsilon}\right) + G\left(x_{\alpha_n}^{\varepsilon},\frac{1}{\alpha_n}\left(t_{\alpha_n}^{\varepsilon}-s_{\alpha_n}^{\varepsilon}\right), \frac{\nu}{2\alpha_n}\left(x_{\alpha_n}^{\varepsilon}-y_{\alpha_n}^{\varepsilon}\right)^{\nu - 1}, M_n^{\varepsilon}\right) \\
    \leq& 0 
    \leq \rho v^{\varepsilon}\left(s_{\alpha_n}^{\varepsilon}, y_{\alpha_n}^{\varepsilon}\right) + G\left(y_{\alpha_n}^{\varepsilon}, \frac{1}{\alpha_n}\left(t_{\alpha_n}^{\varepsilon}-s_{\alpha_n}^{\varepsilon}\right) , \frac{\nu}{2\alpha_n}\left(x_{\alpha_n}^{\varepsilon}-y_{\alpha_n}^{\varepsilon}\right)^{\nu - 1} , N_n^{\varepsilon}\right).  \\
\end{split}  
\end{equation*}
Therefore, we obtain
\begin{align*} 
     &\rho \eta \\
     \leq& \rho\left
    (u\left(t_{\alpha_n}^{\varepsilon}, x_{\alpha_n}^{\varepsilon}\right) -v^{\varepsilon}\left(s_{\alpha_n}^{\varepsilon}, y_{\alpha_n}^{\varepsilon}\right) \right) \\
    \leq&  G\left(y_{\alpha_n}^{\varepsilon}, \frac{1}{\alpha_n}\left(t_{\alpha_n}^{\varepsilon}-s_{\alpha_n}^{\varepsilon}\right)  , \frac{\nu}{2\alpha_n}\left(x_{\alpha_n}^{\varepsilon}-y_{\alpha_n}^{\varepsilon}\right)^{\nu - 1} , N_n^{\varepsilon}\right) \\
    &- G\left(x_{\alpha_n}^{\varepsilon},\frac{1}{\alpha_n}\left(t_{\alpha_n}^{\varepsilon}-s_{\alpha_n}^{\varepsilon}\right), \frac{\nu}{2\alpha_n}\left(x_{\alpha_n}^{\varepsilon}-y_{\alpha_n}^{\varepsilon}\right)^{\nu - 1}, M_n^{\varepsilon}\right) \\
    = & \mathcal{J}^{+}\left( \frac{\nu}{2\alpha_n}\left(x_{\alpha_n}^{\varepsilon}-y_{\alpha_n}^{\varepsilon}\right)^{\nu - 1},M_n^{\varepsilon} \right) - \mathcal{J}^{+}\left( \frac{\nu}{2\alpha_n}\left(x_{\alpha_n}^{\varepsilon}-y_{\alpha_n}^{\varepsilon}\right)^{\nu - 1}  , N_n^{\varepsilon}\right) +   \frac{r\nu}{2\alpha_n}\left(x_{\alpha_n}^{\varepsilon}-y_{\alpha_n}^{\varepsilon}\right)^{\nu }\\
    \leq &  \mathcal{J}^{+}\left( \frac{\nu}{2\alpha_n}\left(x_{\alpha_n}^{\varepsilon}-y_{\alpha_n}^{\varepsilon}\right)^{\nu - 1},N_n^{\varepsilon} \right) - \mathcal{J}^{+}\left( \frac{\nu}{2\alpha_n}\left(x_{\alpha_n}^{\varepsilon}-y_{\alpha_n}^{\varepsilon}\right)^{\nu - 1}  , N_n^{\varepsilon}\right) +   \frac{r\nu}{2\alpha_n}\left(x_{\alpha_n}^{\varepsilon}-y_{\alpha_n}^{\varepsilon}\right)^{\nu }\\
     \leq&  \frac{r\nu}{2\alpha_n}\left(x_{\alpha_n}^{\varepsilon}-y_{\alpha_n}^{\varepsilon}\right)^{\nu }.
\end{align*}
The third inequality follows from the fact that $\mathcal{J}^{+}$ is Lipschitz continuous and nondecreasing with respect to its second variable when $\{M_n^{\varepsilon}\}_n \leq 0 $, same as in \cite[Lemma B.1]{possamai2024there}, \cite[Theorem 5.7 (ii)]{possamai2025golden}. Letting $n \to  \infty$,
we obtain the desired contradiction for all $  \varepsilon \in (0,1)$, which gives the required result by sending $\varepsilon$ to zero. The latter completes the proof.
\qed
\end{proof}

\begin{lemma}\label{cmp::vnhjbvi}
(Comparison) Let $u$ and $v$ be respectively an upper-semicontinuous viscosity subsolution and a lower-semicontinuous viscosity supersolution of $(\ref{hjbvi:pp:dirichlephik})$, respectively.
\begin{equation}\label{hjbvi:pp:dirichlephik}
\begin{split}
 \min\{v^{p,n} - \mathcal{M} V^{p,n-1} , -v^{p,n}_t + \mathcal{G}v^{p,n} ,  - v^{p,n}_{yy}\} &= 0, \quad  [0,T) \times (0,\infty),   \\
v^{p,n}(T,y) &= g^{n}(y), \quad y\in [0,\infty), \\
v^{p,n}(t,0) &= 0, \quad\quad t \in [0,T]. 
\end{split}   
\end{equation}
Assume that for each $\varphi \in \{u,v\}$

$$
- A^{(n,d)}(t) e^{d(t-T)} y^{\gamma} \leq \varphi(t,y) \leq \varphi^{A^{(n,d)},d,\Delta_N}(t, y) ,( t, y) \in [0,T)  \times(0,\infty) \;, \; n \in \{1,\cdots,N\}.
$$
 Then, if $u(t,y) \leq v(t,y) , (t,y) \in [0,T) \times \{ 0\} \cup \{ T\} \times [0,\infty)$, we have $u \leq v$ on $[0,T) \times (0,\infty)$. 
\end{lemma}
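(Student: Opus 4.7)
The plan is to adapt the doubling-variables argument of Lemma \ref{cmp::v0hjbvi}, whose structure carries over, and to incorporate the new obstacle term $v^{p,n} - \mathcal{M} V^{p,n-1}$ by means of a dichotomy at the doubling maximum point. By the induction hypothesis of Theorem \ref{hjbvi:pp:in}.1 applied to $V^{p,n-1}$, the mapping $(t,y) \mapsto \mathcal{M}V^{p,n-1}(t,y)$ is continuous on $[0,T] \times [0,\infty)$ and satisfies polynomial growth sandwiched between the bounds used in Lemma \ref{supersolconstruction}; this continuity is the key new ingredient.

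First, I would construct a strict supersolution of the obstacle problem. Exactly as in the base case, the function $\bar\psi(t,y) := e^{\lambda_1(T-t)}\lambda_2 \log(1+y)$ satisfies $-\bar\psi_t + \mathcal{G}\bar\psi > 0$ and $-\bar\psi_{yy} > 0$ for $\lambda_1, \lambda_2$ large enough. Using the polynomial lower bound $\mathcal{M}V^{p,n-1}(t,y) \geq - A^{(n-1,d)}(t)e^{d(t-T)}y^\gamma - (1+k)y^\gamma$, I would further enlarge $\lambda_2$ so that $\bar\psi > \mathcal{M}V^{p,n-1}$ on any bounded subdomain. Setting $v^\e := (1-\e) v + \e \bar\psi$ and using the convexity/sublinearity of the generator $\mathcal{J}^{+}$ in its arguments together with the fact that perturbing $v$ by $\e\bar\psi$ strictly increases all three branches of the min, I would conclude that for every $\e\in(0,1)$, $v^\e$ is a strict viscosity supersolution of the full variational inequality (\ref{hjbvi:pp:dirichlephik}), in the sense that each of the three branches is strictly positive by an $\e$-dependent margin.

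Second, arguing by contradiction, suppose $\sup(u-v^\e) = \eta > 0$ attained (after doubling) at an interior point. I would double variables with the penalty
\[
\Phi_{\alpha,\e}(t,s,x,y) = u(t,x) - v^\e(s,y) - \tfrac{1}{2\alpha}\bigl(|t-s|^2 + |x-y|^\nu\bigr),\quad \nu = (\gamma+1)\vee 3,
\]
extract a subsequence $(t^\e_{\alpha_n}, s^\e_{\alpha_n}, x^\e_{\alpha_n}, y^\e_{\alpha_n}) \to (\bar t^\e, \bar t^\e, \bar x^\e, \bar x^\e)$ with $\bar t^\e < T$, $\bar x^\e > 0$ (the boundary/terminal data rule out other limit points since $u \leq v$ there), and invoke Ishii's lemma to obtain matrices $M_n^\e, N_n^\e$ in the parabolic semijets of $u$ and $v^\e$ with $M_n^\e - N_n^\e \leq 0$. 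Now the dichotomy: as $u$ is a subsolution, the min at $(t^\e_{\alpha_n}, x^\e_{\alpha_n})$ is $\leq 0$, so one of the three branches is active along a subsequence.

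\textbf{Case (a).} If the obstacle is active, $u(t^\e_{\alpha_n}, x^\e_{\alpha_n}) \leq \mathcal{M}V^{p,n-1}(t^\e_{\alpha_n}, x^\e_{\alpha_n})$. Since $v^\e$ is a strict supersolution, $v^\e(s^\e_{\alpha_n}, y^\e_{\alpha_n}) \geq \mathcal{M}V^{p,n-1}(s^\e_{\alpha_n}, y^\e_{\alpha_n}) + \e\kappa$ for some $\kappa > 0$ independent of $n$ (coming from $\bar\psi > \mathcal{M}V^{p,n-1}$ on the relevant compact set). Subtracting and using uniform continuity of $\mathcal{M}V^{p,n-1}$ on compact sets together with \eqref{viscositysubsequence} yields $\eta \leq -\e\kappa + o(1)$, contradicting $\eta>0$.

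\textbf{Case (b).} Otherwise, the parabolic branch $-u_t + \mathcal{G}u \leq 0$ is active; this is precisely the inequality treated in Lemma \ref{cmp::v0hjbvi}. I would reproduce that argument verbatim, using the Lipschitz continuity and monotonicity of $\mathcal{J}^{+}(p,\cdot)$ on the relevant range (which requires $N_n^\e \leq 0$, propagated from the viscosity concavity $-v_{yy}\geq 0$ in the supersolution and controlled via the matrix inequality in \eqref{boundsecondorder}), to deduce $\rho\eta \leq \frac{r\nu}{2\alpha_n}(x^\e_{\alpha_n} - y^\e_{\alpha_n})^\nu \to 0$, again a contradiction. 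Sending $\e\downarrow 0$ completes the proof.

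The main obstacle I anticipate is the bookkeeping in Case (b): ensuring that the second-order semijet element $N_n^\e$ of the supersolution remains nonpositive so that monotonicity of $\mathcal{J}^{+}$ in its matrix argument can be used, and that the test-function perturbation by $\bar\psi$ does not destroy this sign. This is handled, as in the base case, by choosing $\bar\psi$ strictly concave in $y$ (which $\log(1+y)$ is) and by exploiting the upper envelope $\varphi^{A^{(n,d)},d,\Delta_N}$ to confine the maximizing sequence to a region where the concavity estimate is uniform.
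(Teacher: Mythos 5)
Your proof is correct and follows essentially the same strategy as the paper's: construct a strict supersolution $v^{\varepsilon}=(1-\varepsilon)v+\varepsilon\bar\psi$, double variables, invoke Ishii's lemma, and split into two cases according to whether the obstacle branch or the parabolic branch is active at the subsolution maximizing sequence, with the obstacle case resolved through the continuity of $\mathcal{M}V^{p,n-1}$ (supplied by the induction hypothesis) and the parabolic case reduced to Lemma \ref{cmp::v0hjbvi}. The only cosmetic difference is that you extract an explicit $-\varepsilon\kappa$ margin in the obstacle case, whereas the paper simply lets the difference $\mathcal{M}V^{p,n-1}(\cdot,x^{\varepsilon}_{\alpha_n})-\mathcal{M}V^{p,n-1}(\cdot,y^{\varepsilon}_{\alpha_n})\to 0$ to yield $\eta\le 0$; both give the same contradiction.
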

\begin{proof}
We adapt the arguments presented in \cite[Theorem $4.4.4$]{pham2009continuous}, and  \cite[ Theorem $5.2.1$]{pham2009continuous}. Similar to the proof of Lemma \ref{cmp::v0hjbvi}, we introduce the following strict supersolution 
$$
v^{\varepsilon}:=(1 - \varepsilon)v+\varepsilon \bar{\psi}\;,\; \varepsilon \in(0, 1).
$$
Next, we follow the doubling technique introduced in \cite[Theorem $4.4.4$]{pham2009continuous}. For any $\alpha>0$ and $\varepsilon \in(0, 1)$ we define the following maps
$$
\begin{aligned}
\Phi_{\alpha, \varepsilon}(t, s, x, y) & :=u(t, x)-v^{\varepsilon}(s, y)-\phi_\alpha(t, s, x, y),(t, s, x, y) \in[0, T]^2 \times \mathbb{R}_{+}^2 \\
\phi_\alpha(t, s, x, y) & :=\frac{1}{2 \alpha}\left[|t-s|^2+|x-y|^\nu\right] \quad, \quad \nu=(\gamma+1) \vee 3,
\end{aligned}
$$
where
$$
M_{\alpha, \varepsilon}:=\sup _{(t, s, x, y) \in[0, T]^2 \times \mathbb{R}_{+}^2} \Phi_{\alpha, \varepsilon}(t, s, x, y)=\Phi_{\alpha, \varepsilon}\left(t_{\alpha, \varepsilon}, s_{\alpha, \varepsilon}, x_{\alpha, \varepsilon}, y_{\alpha, \varepsilon}\right).
$$
Using standard viscosity arguments \cite[Proposition 3.7]{Crandall:1992:UserGuideViscosity}, there exists a subsequence $\left(t_{\alpha_n}^{\varepsilon}, s_{\alpha_n}^{\varepsilon}, x_{\alpha_n}^{\varepsilon}, y_{\alpha_n}^{\varepsilon}\right)_{n\geq 1}$ such that
$$
\begin{aligned}
& \lim _{n \rightarrow \infty}\left(t_{\alpha_n}^{\varepsilon}, s_{\alpha_n}^{\varepsilon}, x_{\alpha_n}^{\varepsilon}, y_{\alpha_n}^{\varepsilon}\right)=\left(\bar{t}^{\varepsilon}, \bar{t}^{\varepsilon}, \bar{x}^{\varepsilon}, \bar{x}^{\varepsilon}\right) \;,\; \lim _{n \rightarrow \infty} \phi_{\alpha_n}\left(t_{\alpha_n}^{\varepsilon}, s_{\alpha_n}^{\varepsilon}, x_{\alpha_n}^{\varepsilon}, y_{\alpha_n}^{\varepsilon}\right)=0,\\
& M_{\varepsilon}:=\lim _{n \rightarrow \infty} M_{\alpha_n, \varepsilon}=u\left(\bar{t}^{\varepsilon}, \bar{x}^{\varepsilon}\right)-v^{\varepsilon}\left(\bar{t}^{\varepsilon}, \bar{x}^{\varepsilon}\right).
\end{aligned}
$$
Suppose that there exists $\varepsilon \in (0,1)$ such that at $(t_o,y_o) \in [0,T) \times \mathbb{R}^+$, we have $\eta:= u(t_o,y_o)-v^{\varepsilon}(t_o,y_o)>0$. Then, for all $ n \geq 1$, we have $\eta \leq M_{\alpha_n, \varepsilon}.$

Next, we distinguish cases according to the asymptotic behavior of the subsequence $(u(t_{\alpha_n}^{\varepsilon}, x_{\alpha_n}^{\varepsilon}) -  \mathcal{M}V^{p,n-1}(t_{\alpha_n}^{\varepsilon}, x_{\alpha_n}^{\varepsilon}))_{n\geq 1}$.
\begin{enumerate}[label=\roman*)]
    \item If $u(t_{\alpha_n}^{\varepsilon}, x_{\alpha_n}^{\varepsilon}) - \mathcal{M}V^{p,k-1}(t_{\alpha_n}^{\varepsilon}, x_{\alpha_n}^{\varepsilon}) > 0$ for sufficiently large $n$, then we fall into the case already treated in Lemma \ref{cmp::v0hjbvi}. Hence, the same argument applies, and the desired inequality follows.
    \item Otherwise, up to a subsequence, $u(t_{\alpha_n}^{\varepsilon}, x_{\alpha_n}^{\varepsilon}) - \mathcal{M}V^{p,n-1}(t_{\alpha_n}^{\varepsilon}, x_{\alpha_n}^{\varepsilon}) \leq 0$ for all $n\geq 1$. Since $ v^{\varepsilon}(t_{\alpha_n}^{\varepsilon}, y_{\alpha_n}^{\varepsilon}) - \mathcal{M}V^{p,n-1}(t_{\alpha_n}^{\varepsilon}, y_{\alpha_n}^{\varepsilon}) > 0$, by the strict supersolution property, it follows that

$$
 \eta \leq u\left(t_{\alpha_n}^{\varepsilon}, x_{\alpha_n}^{\varepsilon}\right) -v^{\varepsilon}\left(s_{\alpha_n}^{\varepsilon}, y_{\alpha_n}^{\varepsilon}\right) \leq \mathcal{M}V^{p,n-1}(t_{\alpha_n}^{\varepsilon}, x_{\alpha_n}^{\varepsilon})  - \mathcal{M}V^{p,n-1}(t_{\alpha_n}^{\varepsilon}, y_{\alpha_n}^{\varepsilon})  .
$$

By letting $n \rightarrow \infty$ and using the continuity of $\mathcal{M}V^{p,n-1}$, we obtain the required contradiction $\eta \leq 0$.
\end{enumerate}
In conclusion, we have the desired contradiction for all $\epsilon \in (0,1)$. Finally, we get the desired result by sending $\varepsilon \to 0$.
The proof is complete.\qed

\end{proof}

\bibliographystyle{apalike}
\bibliography{main}

\end{document}